\crefname{lemma}{Lemma}{Lemmas}
\crefname{fact}{Fact}{Facts}
\newcommand{\colorconstraints}{\text{Color Constraints}}
\crefname{colorconstraints}{(color constraints)}{Color Constraints}
\crefname{indsetconstraints}{(indset constraints)}{IndSet Constraints}
\crefname{theorem}{Theorem}{Theorems}
\crefname{mtheorem}{Theorem}{Theorems}
\crefname{corollary}{Corollary}{Corollaries}
\crefname{claim}{Claim}{Claims}
\crefname{example}{Example}{Examples}
\crefname{algorithm}{Algorithm}{Algorithms}
\crefname{problem}{Problem}{Problems}
\crefname{definition}{Definition}{Definitions}
\newtheorem{theorem}{Theorem}[section]
\newtheorem{mtheorem}{Theorem}
\newtheorem*{theorem*}{Theorem}
\newtheorem*{proposition*}{Proposition}
\newtheorem{lemma}[theorem]{Lemma}
\newtheorem*{lemma*}{Lemma}
\newtheorem*{conjecture*}{Conjecture}
\newtheorem{fact}[theorem]{Fact}
\newtheorem*{fact*}{Fact}
\newtheorem*{hypothesis*}{Hypothesis}
\theoremstyle{definition}
\newtheorem{definition}[theorem]{Definition}
\newtheorem*{definition*}{Definition}
\newtheorem{algorithm}[theorem]{Algorithm}
\theoremstyle{remark}
\newtheorem{claim}[theorem]{Claim}
\newtheorem*{claim*}{Claim}
\newtheorem{remark}[theorem]{Remark}
\newtheorem*{remark*}{Remark}
\newtheorem{observation}[theorem]{Observation}
\newtheorem*{observation*}{Observation}
\let\mathbb\varmathbb
\definecolor{petergreen}{rgb}{0, 0.75, 0}
\newcommand{\FormatAuthor}[3]{
\begin{tabular}{c}
#1 \\ {\small\texttt{#2}} \\ {\small #3}
\end{tabular}
}
\newcommand{\keywords}[1]{\bigskip\par\noindent{\footnotesize\textbf{Keywords\/}: #1}}
\newcommand{\R}{{\mathbb R}}
\newcommand{\N}{{\mathbb N}}
\newcommand{\norm}[1]{\lVert #1 \rVert}
\newcommand{\boolnorm}[1]{{\lVert #1 \rVert}_{\infty \to 1}}
\newcommand{\abs}[1]{\lvert #1 \rvert}
\newcommand{\Abs}[1]{\left \lvert #1 \right \rvert}
\newcommand{\eps}{\varepsilon}
\newcommand{\F}{{\mathbb F}}
\newcommand{\E}{{\mathbb E}}
\newcommand{\C}{\mathbb C}
\newcommand{\Bits}{\{0,1\}}
\newcommand{\Fits}{\{-1,1\}}
\newcommand{\cH}{\mathcal H}
\newcommand{\cB}{\mathcal B}
\newcommand{\poly}{\mathrm{poly}}
\newcommand{\val}{\mathrm{val}}
\newcommand{\mper}{\,.}
\newcommand{\mcom}{\,,}
\newcommand{\Paren}[1]{\left(#1\right)}
\newcommand{\Norm}[1]{\left\lVert#1\right\rVert}
\newcommand{\Set}[1]{\left\{#1\right\}}
\newcommand{\polylog}{\mathrm{polylog}}
\newcommand{\cD}{\mathcal D}
\newcommand{\cR}{\mathcal R}
\newcommand{\cV}{\mathcal{V}}
\newcommand{\Code}{\mathcal{L}}
\newcommand{\cJ}{\mathcal{J}}
\newcommand{\cT}{\mathcal{T}}
\newcommand{\Dec}{\text{Dec}}
\newcommand{\Deg}{\mathsf{Deg}}
\begin{document}

\title{An Exponential Lower Bound for Linear $3$-Query Locally Correctable Codes}
\author{
\begin{tabular}[h!]{ccc}
      \FormatAuthor{Pravesh K.\ Kothari}{praveshk@cs.cmu.edu}{Carnegie Mellon University}
      \FormatAuthor{Peter Manohar}{pmanohar@cs.cmu.edu}{Carnegie Mellon University}
\end{tabular}
} %
\date{\today}

\maketitle

\begin{abstract}
We prove that the blocklength $n$ of a linear $3$-query locally correctable code (LCC) $\Code \colon \F^k \to \F^n$ with distance $\delta$ must be at least $n \geq 2^{\Omega\left(\left(\frac{\delta^2 k}{(\abs{\F}-1)^2}\right)^{1/8}\right)}$. In particular, the blocklength of a linear $3$-query LCC with constant distance over any small field grows \emph{exponentially} with $k$. This improves on the best prior lower bound of $n \geq \tilde{\Omega}(k^3)$~\cite{AlrabiahGKM23}, which holds even for the weaker setting of $3$-query locally \emph{decodable} codes (LDCs), and comes close to matching the best-known construction of $3$-query LCCs based on binary Reed--Muller codes, which achieve $n \leq 2^{O(k^{1/2})}$.  Because there is a $3$-query LDC with a strictly subexponential blocklength~\cite{Yekhanin08, Efremenko09}, as a corollary we obtain the first strong separation between $q$-query LCCs and LDCs for any constant~$q \geq 3$. 

Our proof is based on a new upgrade of the method of spectral refutations via \emph{Kikuchi matrices} developed in recent works \cite{GuruswamiKM22, HsiehKM23, AlrabiahGKM23} that reduces establishing (non-)existence of combinatorial objects to proving unsatisfiability of associated XOR instances. Our key conceptual idea is to apply this method with XOR instances obtained via \emph{long-chain derivations} --- a structured variant of low-width resolution for XOR formulas from proof complexity~\cite{Grigoriev01,Schoenebeck08}.
\keywords{Locally Correctable Codes,  Locally Decodable Codes, Kikuchi Matrices}
\end{abstract}

\clearpage
 \microtypesetup{protrusion=false}
  \tableofcontents{}
  \microtypesetup{protrusion=true}

\clearpage

\pagestyle{plain}
\setcounter{page}{1}

\section{Introduction}
A \emph{locally correctable code} (LCC) is an error correcting code that admits, in addition, a \emph{local} correction (a.k.a. \emph{self correction}) algorithm that can recover any symbol of the original codeword by querying only a small number of randomly chosen symbols from the received corrupted codeword. More formally, we say that a code $\Code \colon \Bits^k \to \Bits^n$ is $q$-locally correctable if for any codeword $x$, a corruption $y$ of $x$, and input $u \in [n]$, the local correction algorithm reads at most $q$ symbols (typically a small constant such as $2$ or $3$) of $y$ and recovers the bit $x_u$ with probability $1/2 + \eps$ whenever $\Delta(x, y) \coloneqq \abs{\{v \in [n] : x_v \ne y_v\}} \leq \delta n$, where $\delta$, the ``distance'' of the code, and $\eps$, the decoding accuracy, are constants. The central question about LCCs is to determine the smallest possible blocklength $n$ as a function of the message length $k$ for a fixed number of queries $q$. 

Local correction was first introduced for \emph{program checking}~\cite{BlumK95}, and early applications utilized that Reed--Muller codes are locally correctable via polynomial interpolation. Since then, LCCs have been a mainstay in complexity and algorithmic coding theory with a long array of applications. An abridged list (the surveys~\cite{Trevisan04,Yekhanin12,Dvir12} provide details) of applications includes sublinear algorithms and property testing~\cite{RubinfeldS96,BlumLR93}, probabilistically checkable proofs~\cite{AroraLMSS98,AroraS98}, \textsc{IP}=\textsc{PSPACE}~\cite{LundFKN90,Shamir90}, worst-case to average-case reductions~\cite{BabaiFNW93}, constructions of explicit rigid matrices~\cite{Dvir10}, and $t$-private information retrieval protocols~\cite{IshaiK99,BarkolIW10}. The existence of LCCs turns out to have natural connections to incidence geometry~\cite{Dvir12}, additive combinatorics~\cite{BhowmickDL13}, and the theory of block designs~\cite{BarkolIW10}. 

For any constant $q \in \N$, Reed--Muller codes (i.e., evaluations of $(q-1)$-degree polynomials) yield binary, linear\footnote{A code is \emph{linear} over a field $\F$ if the encoding map $\Code$ is an $\F$-linear map.} $q$-LCCs with a blocklength $n \leq 2^{O(k^{\frac{1}{q-1}})}$. Given their extensive applications and connections, finding LCCs of smaller blocklength has been a major project in theoretical computer science over the past three decades with some remarkable  successes over the years. For example, \emph{multiplicity codes}~\cite{KoppartySY14} significantly beat the blocklength of Reed--Muller codes in the \emph{super-constant} query regime. In the constant-query regime, \emph{matching vector codes}~\cite{Efremenko09,Yekhanin08} use a strictly sub-exponential (i.e., $n \leq \exp (\exp (O(\sqrt{\log k} \log \log k)))$) blocklength to obtain $3$-query locally \emph{decodable} codes --- a relaxation of LCCs where the local correction property holds only for the $k$ message bits. To sidestep the difficulty of finding more efficient LCCs, the work of~\cite{Ben-SassonGHSV04} introduced \emph{relaxed LCCs} that soften the local correction property and has seen exciting recent developments~\cite{GurRR20,AsadiS21,ChiesaGS20, KumarM23, CohenY23}. These successes notwithstanding, constructing better constant-query LCCs has remained a major open question (see, e.g., Chapter 8 in~\cite{Yekhanin12}). 

\parhead{LCC lower bounds.} The lack of progress on finding better constant-query LCCs has motivated a long investigated conjecture that Reed--Muller codes might be \emph{optimal} constant query LCCs. The work of~\cite{KerendisW04, GoldreichKST06} essentially confirmed this conjecture for the ``base case'' of $q=2$ by proving that $n \geq 2^{\Omega(k)}$ for any two-query LCC, matching the construction of Hadamard codes, which are $2$-LCCs with $n=2^k$. For $q \geq 3$, however, only a polynomial lower bound is known. The works of~\cite{KerendisW04,Woodruff07} prove that $q$-LCCs must have $n \geq \tilde{\Omega}(k^{1/(1 - 1/\lceil\frac{q}{2} \rceil})$,\footnote{These lower bounds all hold for non-linear codes over small (i.e., $\polylog(k)$) size alphabets. A weaker polynomial lower bound~\cite{KatzT00,IcelandS18} is known to hold for linear codes over all fields and for the specific case of $q = 3$,~\cite{Woodruff10} shows a lower bound of $\Omega(k^2)$ for linear $3$-LDCs over all fields.} and for the specific case of $q = 3$, a recent work~\cite{AlrabiahGKM23} (which, like this work, is based on the \emph{Kikuchi} matrix method) obtained a polynomial improvement on this bound, showing that $n \geq \tilde{\Omega}(k^3)$.

\parhead{Limitations of prior lower bound techniques} Beyond the weakness in the quantitative results, all the above lower bounds suffer from an important inherent limitation --- they all hold even for the weaker setting of locally decodable codes (LDCs). As we mentioned above, there are sub-exponential length (and thus substantially beating Reed--Muller) $3$-query binary, linear codes that are locally decodable~\cite{Yekhanin08, Efremenko09}. Indeed, characterizing the limitations of prior proof techniques and finding methods that could separate LCCs and LDCs itself has been a major research goal. For example, Dvir, Gopi, Gu and Wigderson~\cite{DvirGGW19} formalize the limitations of prior lower bound techniques for LCCs by showing that the ``random restriction'' approach in~\cite{KatzT00} applies to a more general setting of ``spanoids'' where they are, in fact, tight. On the other hand, to show a strong separation between LCCs and LDCs, Barkol, Ishai and Weinreb~\cite{BarkolIW10} build an approach for stronger LCC lower bounds via connections to the well-studied Hamada conjecture (\cite{Hamada73}, see lecture notes~\cite{Tonchev11}) and its generalizations in the theory of block designs, while Dvir, Saraf and Wigderson~\cite{DvirSW14} develop new geometric techniques to prove a slightly superquadratic lower bound for an appropriate formulation of $3$-LCCs over the reals. 

To summarize: there is an exponential gap between best-known constructions and lower bounds for $q$-LCCs for $q\geq 3$. Further, the best known lower bound techniques for $q$-LCCs apply also to $q$-LDCs and thus provably cannot yield an exponential lower bound.

\parhead{Our result.} In this work, we prove an \emph{exponential} lower bound for linear $3$-query LCCs. We note that the best-known constructions of LCCs (and also LDCs) namely Reed--Muller codes and matching vector codes, are $\F_2$-linear.
\begin{mtheorem}
\label{mthm:main}
Let $\Code \colon \F^k \to \F^n$ be a linear $(3, \delta, \eps)$-LCC. Then, $n \geq 2^{\Omega((\delta^2 k/(\abs{\F} - 1)^2)^{1/8})}$.
In particular, if $\Code \colon \F_2^k \to \F_2^n$ is a $(3, \delta, \eps)$-LCC where $\delta$ is constant, then $n \geq 2^{\Omega(k^{1/8})}$.
\end{mtheorem}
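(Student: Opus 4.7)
The plan is to follow the Kikuchi matrix spectral refutation framework of \cite{GuruswamiKM22, HsiehKM23, AlrabiahGKM23}, but applied to a much larger derived system of linear identities obtained by \emph{chaining} many applications of the local correction procedure rather than to the raw correction constraints. First I would invoke the standard matching-based reduction: for each $u \in [n]$, extract matchings $M_{u,1}, \ldots, M_{u,s}$ with $s = \Omega(\delta n / \poly(\abs{\F}))$ whose edges $\{v_1, v_2, v_3\}$ satisfy $x_u = \alpha_1 x_{v_1} + \alpha_2 x_{v_2} + \alpha_3 x_{v_3}$ for nonzero $\alpha_i \in \F$. An LCC (as opposed to an LDC) yields such matchings for \emph{all} $u \in [n]$, producing $\Theta(\delta n^2 / \poly(\abs{\F}))$ base four-variable linear constraints. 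An unmodified Kikuchi analysis of this system only reproduces the $\tilde\Omega(k^3)$ bound of \cite{AlrabiahGKM23}; breaking that barrier requires exploiting the fact that an LCC's constraint system is closed under composition.

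The new ingredient is to construct, for a chain-length parameter $t$, a derived system of $\F$-linear identities via iterated substitution: starting from $x_{u_0} = \alpha_1 x_{v_1} + \alpha_2 x_{v_2} + \alpha_3 x_{v_3}$, apply a matching edge correcting $v_1$ and substitute, continuing $t$ times to obtain identities of the form $x_{u_0} = \sum_{i=1}^{2t+3} \beta_i x_{w_i}$ supported on $2t+4$ variables. The number of such chains grows like $n \cdot (\delta n / \poly(\abs{\F}))^t$, polynomially larger in $n$ than the base system while the support of each identity stays only $2t+4$. These are the \emph{long-chain derivations} promised in the abstract: a structured, width-$O(t)$ XOR resolution that preserves enough matching structure to remain amenable to a Kikuchi-type spectral analysis.

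I would then build a Kikuchi matrix $A$ indexed by $\binom{[n]}{\ell}$ whose entries aggregate the derived $(2t+4)$-linear identities via a symmetric-difference encoding: each derived constraint with support $S_C$ and right-hand-side $b_C$ (under a hypothesized assignment) contributes $b_C \cdot \1[T \triangle T' = S_C]$ to entry $A[T,T']$. The codeword, which satisfies every base and thus every derived constraint, supplies a large quadratic-form signal $\E[x^{\top} A x]$, while a matrix Bernstein or trace-method bound tailored to the chain-of-matchings structure controls $\|A\|$. Balancing these two quantities with careful choices of $\ell$ and $t$ (I would expect $\ell, t = \Theta(k^{1/8})$ or thereabouts) should force $n \geq 2^{\Omega(k^{1/8})}$, with the $\delta^2$ and $(\abs{\F} - 1)^2$ factors in the theorem arising from tracking the $\delta$ and $\abs{\F}$ dependence in the matching extraction and in the concentration bound.

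The principal obstacle, where I expect the proof to demand the most care, is that chained constraints overlap heavily: a single base matching edge can appear in exponentially many chains, and the derived chains do not form a matching even when the base matchings do. Standard Kikuchi spectral bounds rely crucially on near-disjointness among constraint supports, so one must either (i) prune the derived system down to a combinatorially well-behaved sub-collection, for instance by extracting a near-matching of chains via a tree-regularity decomposition on the chain tree, or (ii) prove a sharper spectral bound that directly tolerates the bounded overlap created by shared subpaths. A secondary technicality is that some chains collapse when intermediate variables repeat, but a routine probabilistic argument over the choice of matching edges at each step should show that most chains of length $t = O(k^{1/8})$ have the generic support size $2t+4$, which suffices for the counting in the Kikuchi signal bound.
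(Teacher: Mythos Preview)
Your high-level picture is right: the paper also builds long chains of correction constraints and feeds them into a Kikuchi-style spectral argument. But several of the steps you sketch would not go through as written, and the places where you are vague are exactly where the paper's real work lies.

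First, the Kikuchi matrix you propose is indexed by single sets in $\binom{[n]}{\ell}$. The paper's matrices are indexed by \emph{tuples} of $\ell$-sets, one coordinate for each ``link'' in the chain, and this is not cosmetic: it makes the row-degree a \emph{multilinear $r$-partite} polynomial in independent blocks of indicator variables, which is what lets them prove the needed concentration without losing a $2^{O(r)}$ factor (fatal when $r = \Theta(\log n)$). With a single-set index you would be stuck with Kim--Vu style bounds that lose exactly this factor.

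Second, you are missing the Cauchy--Schwarz step. The chains have odd arity $2r+3$, and the paper squares the instance by pairing two chains that share a common tail (or, after decomposition, a common heavy suffix), producing even-arity constraints with right-hand side $b_i b_j$. Only after this pairing do they pass to Kikuchi matrices. Your matrix $A$ as described has no mechanism to deal with the odd arity, and more importantly the pairing is what creates the structure that the row-pruning argument exploits.

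Third, and most seriously, your proposed fixes for the overlap problem---``extract a near-matching of chains'' or ``a sharper spectral bound tolerating overlap''---are not what works. The actual obstruction is that certain \emph{pairs} (and more generally, contiguous suffix tuples) of variables appear in too many hyperedges across the $H_u$'s, which blows up the expected partial derivatives of the degree polynomial. The paper's solution is a \emph{contiguously regular partition} of the $r$-chains: greedily strip off chains whose suffix matches a heavy contiguous tuple $Q$, and for each such bucket use a \emph{different} Kikuchi matrix in which the variables in $Q$ are cancelled by the Cauchy--Schwarz pairing rather than placed in the row index. The entire technical heart of the paper (\S6--\S7) is verifying that after this decomposition, every expected partial derivative of every degree polynomial is bounded by the mean. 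Neither pruning down to a matching of chains nor a generic spectral bound would recover this; the decomposition is tightly coupled to the specific failure modes of the partial-derivative calculation.

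Finally, your parameter guess is off: the paper takes $r = \Theta(\log n)$ and $\ell = \Theta(\log^4 n / \delta)$, not both $\Theta(k^{1/8})$; the $1/8$ in the exponent emerges only at the very end from $k \le O(\ell r^3 \log n / \delta)$.
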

\cref{mthm:main} improves on the prior best lower bound of $n \geq \tilde{\Omega}(k^3)$~\cite{AlrabiahGKM23} and comes close to matching the blocklength $n = \exp(O(\sqrt{k}))$ of $3$-query LCCs based on Reed--Muller codes; in \cref{sec:discussion}, we comment on potential strengthenings of our argument to come closer and even match (up to constants in the exponent) the $\exp(O(\sqrt{k}))$ bound. 

\cref{mthm:main} also yields the first strong \emph{separation} between $3$-LCCs and $3$-LDCs. No such separation was known for $q$-LDCs and $q$-LCCs for any constant $q \geq 3$.\footnote{The work of~\cite{BhattacharyyaGT17} shows a separation between $2$-LCCs and $2$-LDCs over $\poly(n)$-sized alphabets. For $2$-LCCs on small alphabets, a strong separation cannot exist, e.g., on $\F_2$, the Hadamard code gives both an essentially optimal $2$-LCC and $2$-LDC.} In particular, \cref{mthm:main} implies that matching vector codes that yield linear $3$-LDCs over $\F_2$ of sub-exponential blocklength, such as the codes in~\cite{Yekhanin08,Efremenko09}, \emph{cannot} admit a local correction algorithm, answering a question of Yekhanin (see Chapter 8 in~\cite{Yekhanin12}).

Our proof is based on the method of spectral refutation via Kikuchi matrices developed in prior works~\cite{GuruswamiKM22,HsiehKM23,AlrabiahGKM23}. The key idea in this method is to associate the existence of a combinatorial object (e.g., a $3$-LCC) to the satisfiability of a family of XOR formulas and find a \emph{spectral} refutation (i.e., certificate of unsatisfiability) for a randomly chosen member of the family.  

Our key new conceptual idea is to apply an appropriate version of the Kikuchi matrix method to XOR formulas obtained by \emph{long chain derivations} --- a structured variant of low-width XOR resolution refutations in proof complexity~\cite{Grigoriev01,Schoenebeck08} --- to the naive XOR instances obtained from the query sets of a purported linear $3$-LCC. These new XOR formulas allow us to utilize the additional structural in $3$-LCCs and, in particular, significantly surpass the cubic lower bound~\cite{AlrabiahGKM23} for $3$-LDCs that also used the Kikuchi matrix method. We discuss the new challenges that arise in analyzing spectral refutations of XOR instances produced by such long chain derivations and our technical ideas for handling them in \cref{sec:techniques,sec:warmup}.

\subsection{Roadmap}
The rest of the paper is organized as follows. First, in \cref{sec:prelims}, we introduce some notation and recall basic facts about LCCs that we shall use in the proof. Then, in \cref{sec:techniques}, we give a detailed overview of the proof. In \cref{sec:warmup}, we give an essentially complete proof of a new lower bound of $n \geq \tilde{\Omega}(k^{4})$ for binary linear $3$-LCCs as a warmup. Following the warmup, in~\cref{sec:lcctoxor,sec:regular-partition,sec:kikuchimethod,sec:rowpruning} we prove \cref{mthm:main} for binary $3$-LCCs, i.e., when $\F = \F_2$; we handle the case of arbitrary finite fields in~\cref{sec:largeralphabet}. Finally, in \cref{sec:discussion} we conclude with some remarks on the proof of \cref{mthm:main}, possible strengthenings, and extensions. 
\section{Preliminaries}
\label{sec:prelims}

\subsection{Basic notation}
We let $[n]$ denote the set $\{1, \dots, n\}$. For two subsets $S, T \subseteq [n]$, we let $S \oplus T$ denote the symmetric difference of $S$ and $T$, i.e., $S \oplus T \coloneqq \{i : (i \in S \wedge i \notin T) \vee (i \notin S \wedge i \in T)\}$. For a natural number $t \in \N$, we let ${[n] \choose t}$ be the collection of subsets of $[n]$ of size exactly $t$. Given variables $x_1, \dots, x_n$ and a subset $C \subseteq [n]$, we let $x_C \coloneqq \prod_{v \in C} x_v$.

For a rectangular matrix $A \in \R^{m \times n}$, we let $\norm{A}_2 = \coloneqq \max_{x \in \R^m, y \in \R^n: \norm{x}_2 = \norm{y}_2 = 1} x^{\top} A y$ denote the spectral norm of $A$, and $\boolnorm{A} \coloneqq \max_{x \in \Fits^m, y \in \Fits^n} x^{\top} A y$. We note that $\boolnorm{A} \leq \sqrt{nm} \norm{A}_2$.

\subsection{XOR formulas}
An XOR instance $\psi$ on $n$ variables $x_1, x_2,\ldots, x_n$ taking values in $\Fits$ is a collection of constraints of the form $\{x_C = b_C\}$ where $C \in \cH$ where $\cH \subseteq 2^{[n]}$ is the \emph{constraint hypergraph}. The \emph{arity} of a constraint $\{x_C = b_C\}$ equals $|C|$. The arity of $\psi$ is the maximum arity of any constraint in it. The XOR formula associated with $\psi$ is the expression $\psi(x) = \sum_{C \in \cH} b_C x_C$ seen as a polynomial over $\Fits^n$. Notice that $\psi(x) = |\cH|$ if $x$ satisfies all the constraints of $\psi$ and in general evaluates to (number of constraints satisfied by $x$) - (number of constraints violated by $x$). The \emph{value} $\val(\psi)$ of a XOR instance $\psi$ (or, of the associated formula $\psi(x)$) is the maximum of $\psi(x)$ as $x$ ranges over $\Fits^n$. More generally, for a function $f(x)$, we shall let $\val(f) \coloneqq \max_{x \in \Fits^n} f(x)$.
\subsection{Locally correctable codes}
We refer the reader to the survey~\cite{Yekhanin12} for background. 

\begin{definition}[Locally correctable code]
\label{def:LCC}
A map $\Code \colon \F^k \to \F^n$ is a $(q, \delta, \eps)$-locally correctable code if there exists a randomized decoding algorithm $\Dec(\cdot)$ that takes input an oracle access to some $y \in \F^n$ and a $u \in [n]$, \begin{inparaenum}[(1)] \item makes at most $q$ queries to the string $y$, and \item for all $b \in \F^k$, $u \in [n]$, and all $y \in \F^n$ such that $\Delta(y, \Code(b)) \leq \delta n$, $\Pr[\Dec^{y}(u) = \Code(b)_u] \geq \frac{1}{2} + \eps$. Here, $\Delta(x,y)$ denotes the Hamming distance between $x$ and $y$, i.e., the number of indices $v \in [n]$ where $x_v \ne y_v$.
\end{inparaenum}
We will use $q$-LCCs to denote $(q,\delta,\epsilon)$-LCCs where $\delta,\epsilon$ are some fixed small constants.

$\Code$ is \emph{linear} if the map $\Code$ is a linear map.
We note that for linear codes, $k = \dim(\cV)$, where $\cV$ is the image of $\F^k$ under the map $\Code$. Without the loss of generality, a linear $\Code$ is systematic, i.e., $\Code(b)_{i} = b_i$ for $i \in [k]$. By a slight abuse of notation, we will also use $\Code$ to denote the set of all codewords, i.e., elements in the range of the map $\Code$. 
\end{definition}
For the Boolean case, i.e., when $\F = \F_2$, it shall be more convenient to think of the map $\Code$ as a function from $\Fits^k$ to $\Fits^n$, defined via the mapping $0 \leftrightarrow 1$ and $1 \leftrightarrow -1$.

We next discuss a combinatorial characterization of locally correctable codes. To begin with, we recall basic notions about hypergraphs.

\begin{definition}
A $q$-uniform hypergraph $\cH$ on vertex set $[n]$ is a collection of subsets $C \subseteq [n]$ of size $q$ called hyperedges. We say that $\cH$ is a \emph{matching} if all the hyperedges in $\cH$ are disjoint. For a subset $Q \subseteq [n]$, we define the degree of $Q$ in $\cH$, denoted $\deg_{\cH}(Q)$, to be $\abs{\{C \in \cH : Q \subseteq C\}}$.
\end{definition}

LCCs admit a standard combinatorial characterization (formalized in the definition below).

\begin{definition}[Linear LCC in normal form]
\label{def:normalLCC}
A linear code $\Code \colon \F^k \to \F^n$ is $(q, \delta)$-normally correctable if for each $u \in [n]$, there is a $q$-uniform hypergraph matching $\cH_u$ with at least $\delta n$ hyperedges such that for every $C = \{v_1, \dots, v_q\} \in \cH_u$, there are coefficients $\alpha_1, \dots, \alpha_q \in \F \setminus \{0\}$ such that, for any $b \in \F^k$, $x = \Code(b)$ satisfies $x_u = \alpha_1 x_{v_1} + \dots + \alpha_q x_{v_q}$.
\end{definition}

\begin{fact}[Reduction to LCC normal form, Theorem 8.1 in~\cite{Dvir16}]\label{fact:normalform}
Let $\Code \colon \F^k \to \F^n$ be a linear code that is $(q, \delta, \eps)$-locally correctable. Then, there is a linear code $\Code' \colon \F^k \to \F^{2n}$ that is $(q, \delta')$-normally correctable, with $\delta' \geq \delta/2q$.
\end{fact}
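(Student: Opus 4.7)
The plan is to follow the standard Katz--Trevisan style reduction (adapted to LCCs), which proceeds in two phases: first, convert the randomized decoder $\Dec$ into a collection, for each $u \in [n]$, of \emph{identically correct} linear tests that express $x_u$ as a fixed linear combination of $q$ other codeword coordinates; second, greedily extract a hypergraph matching of disjoint such tests.

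For the first phase, I would begin by observing that since $\Code$ is linear and $\F$-valued, we may assume without loss of generality that $\Dec^y(u)$ acts linearly in $y$: on randomness $r$, it samples a pair $(C(u,r),\alpha(u,r)) \in \binom{[n]}{q} \times (\F\setminus\{0\})^q$ and outputs $\sum_{v \in C(u,r)} \alpha_v(u,r)\, y_v$. For a test $(\alpha,C)$ that is \emph{not} the identity $x_u = \sum_{v \in C} \alpha_v x_v$ on $\Code$, the set of codewords where it happens to agree with $x_u$ is a proper affine subspace of $\Code$, so on a uniformly random message $b$ it agrees with probability at most $1/|\F|$. Averaging the decoder's $\geq 1/2 + \eps$ success probability over $b$ then forces a nontrivial fraction of tests sampled by $\Dec(u,\cdot)$ to be identity-correct. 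To upgrade this to $\Omega(\delta n)$ identity-correct tests per coordinate, I would exploit the error-robustness of $\Dec$: for any adversarial corruption $y$ with $\Delta(y,\Code(b)) \leq \delta n$, the decoder still succeeds, so planting corruptions on a specific test's support forces many of the decoder's sampled tests to be correct identities (otherwise one could find a corruption pattern that simultaneously flips enough outputs to violate the $1/2 + \eps$ guarantee).

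For the second phase, I would apply the \emph{duplication trick}: replace each coordinate $v$ by sufficiently many copies so that no single coordinate is queried with probability more than $O(1/n)$ by $\Dec(u,\cdot)$ for any $u$. This at most doubles the blocklength (hence the $2n$ in the conclusion) and makes the collection $\cT_u$ of correct tests ``smooth'' in the sense that every coordinate appears in only $O(1)$ tests. A greedy matching argument on the $q$-uniform hypergraph $\cT_u$ then yields a matching $\cH_u$ of size at least $|\cT_u|/(O(q)) \geq \delta n/(2q)$, which is what we need to define the hypergraph witnesses of \cref{def:normalLCC}.

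The main technical subtlety is the quantitative step: turning the $\eps$-advantage of $\Dec$ on adversarially corrupted inputs into a guarantee of $\Omega(\delta n)$ identity-correct tests per coordinate $u$, rather than merely $\Omega(\eps n)$. This is the one place where the distance $\delta$ (and not just the decoding accuracy $\eps$) enters the lower bound on $\delta'$; the remaining steps are clean linear-algebraic and combinatorial reductions. Since the statement is a well-known fact and a full proof appears in~\cite{Dvir16}, I would simply cite it rather than reprove it in the paper.
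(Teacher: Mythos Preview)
Your bottom-line recommendation --- to cite \cite{Dvir16} rather than reprove the fact --- is exactly what the paper does. However, your explanation of where the extra factor of $2$ in the blocklength comes from is off. You attribute the $2n$ to a duplication/smoothing step, but the standard LCC normal form reduction (Theorem~8.1 in \cite{Dvir16}) does not require smoothing and already yields $\delta' \geq \delta/q$ \emph{without} increasing the blocklength; the catch is that the resulting hyperedges have size $\leq q$, since some of the decoder's linear tests may use fewer than $q$ coordinates. The paper's version insists on hyperedges of size \emph{exactly} $q$ with all coefficients nonzero, and the $2n$ comes from a simple padding trick: append $n$ identically-zero coordinates to the code, and for any test using fewer than $q$ positions, fill in the remaining slots by querying padded zeros with arbitrary nonzero coefficients. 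This doubles the blocklength and halves $\delta$, giving $\delta' \geq \delta/2q$.

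So the two accounts agree on citing the result, but your sketch misidentifies the mechanism behind the $2n$. The smoothing/duplication step you describe is characteristic of the LDC normal form reduction (where one must control the maximum query probability of any coordinate); for LCCs the greedy maximal-matching argument already works directly, since corrupting the support of a small maximal matching of correct tests defeats the decoder.
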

We note that there is slight difference in \cref{fact:normalform} compared to Theorem 8.1 in~\cite{Dvir16}. In \cref{fact:normalform}, we require that the matchings are $q$-uniform and all coefficients $\alpha$ are nonzero, and we obtain $\delta' \geq \delta/2q$. On the other hand, \cite{Dvir16} allows for hyperedges of size $\leq q$, i.e., some coefficients $\alpha$ may be zero, and obtains $\delta' \geq \delta/q$. We remark that \cref{fact:normalform} immediately follows from~\cite{Dvir16} by ``padding'' the code with $n$ $0$'s. This loses an additional factor of $2$ in $\delta$, but allows one to make all hyperedges have size exactly $q$ by querying the padded $0$ entries.

Finally, we recall the lower bound for linear $2$-LDCs from~\cite{GoldreichKST06}.
\begin{fact}[Lemma 3.3, Claim 4.4 in~\cite{GoldreichKST06}]
\label{fact:2ldclb}
Let $\Code \colon \F^k \to \F^n$ be a linear map, and let $G_1, \dots, G_k$ be matchings on $n$ vertices such that for every $b \in \F^k$ and every $i \in [k]$ and every $(u,v) \in G_i$, it holds that $x_u - x_v = b_i$, where $x = \Code(b)$. Suppose that $\frac{1}{k} \sum_{i = 1}^k \abs{G_i} \geq \delta n$. Then, $\delta k \leq 2\log_2 n$.
\end{fact}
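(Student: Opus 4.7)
The plan is to translate the matching hypothesis into an edge-density condition on the image of the code's dual vectors and then invoke an edge-isoperimetric inequality on $\F^k$.

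First, since $\Code$ is linear, each coordinate $v \in [n]$ has a dual vector $\phi(v) \in \F^k$ satisfying $x_v = \langle \phi(v), b\rangle$ for every $x = \Code(b)$. The hypothesis $x_u - x_v = b_i$ for $(u,v) \in G_i$, holding over all $b \in \F^k$, is then equivalent to the identity $\phi(u) - \phi(v) = e_i$. I may assume $\phi$ is injective (by merging coordinates with identical dual vectors, which only affects the bound by constants); let $P = \phi([n]) \subseteq \F^k$, so $|P| = n$. The matching condition now says that for each $i$, the set $P$ contains $|G_i|$ disjoint pairs differing by $e_i$, so the total number of ``axis-aligned'' pairs in $P$ (pairs $\{p, p+e_i\}$ with both endpoints in $P$, summed over $i \in [k]$) is at least $\sum_i |G_i| \geq \delta k n$.

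Next, I would apply an edge-isoperimetric upper bound to the same quantity. For $\F = \F_2$, Harper's theorem on the Boolean hypercube states that any $S \subseteq \F_2^k$ contains at most $(|S| \log_2 |S|)/2$ internal axis-aligned edges. Combining with the lower bound above yields $\delta k n \leq (n \log_2 n)/2$, hence $\delta k \leq (\log_2 n)/2$, which comfortably implies the claimed $\delta k \leq 2\log_2 n$. For general $\F$, the analogous bound follows from a Loomis--Whitney projection argument: on each axis-$i$ line of $P$, the number of adjacent pairs is at most $t_L - 1$ when the line is non-full, so summing gives that the axis-$i$ edge count is at most $n - L_i + F_i$, where $L_i$ and $F_i$ count the nonempty and full lines respectively. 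Using Loomis--Whitney $\prod_i L_i \geq n^{k-1}$ together with AM--GM yields $\sum_i L_i \geq k \, n^{(k-1)/k}$, which combined with the edge count gives $n \geq (1-\delta)^{-k}$ and hence $\delta k \leq O(\log_2 n)$.

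The principal technical obstacle is the isoperimetric inequality in the general-field case; for $\F = \F_2$ Harper's theorem suffices, but for $|\F| > 2$ the argument requires a Loomis--Whitney input (or a reduction to $\F_2$ via a random two-point restriction of each axis), together with some bookkeeping to handle ``full lines'' that can contribute more edges than the $t_L - 1$ bound suggests. A secondary subtlety is the non-injective case, where multiple $G_i$ edges may project to the same axis-aligned pair in $P$; this is resolved either by the merging preprocessing above, or by using a weighted version of the edge-isoperimetric bound that accounts for multiplicities of points of $P$ in the pre-image.
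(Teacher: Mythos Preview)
The paper does not prove this statement; it is cited as a known fact from Goldreich--Karloff--Schulman--Trevisan, so there is no in-paper proof to compare against. Your overall strategy---pass to dual vectors $\phi(v)\in\F^k$, reinterpret the matching edges as axis-aligned pairs, and bound the edge count by an isoperimetric inequality---is indeed the standard argument behind the cited result, and for $\F=\F_2$ it goes through cleanly once the multiplicity issue is handled.

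There are two places where your sketch is looser than you acknowledge. First, the claim that ``merging coordinates with identical dual vectors only affects the bound by constants'' is not correct as stated: merging collapses the $n$ coordinates to $n'=|P|$ points, but it can collapse many $G_i$-edges onto the same axis-aligned pair, so the lower bound $\sum_i|G_i|\ge \delta k n$ does not transfer to a comparable lower bound on axis-aligned edges of $P$. The fix you gesture at---a weighted version---does work: set multiplicities $m(p)=|\phi^{-1}(p)|$, use that each $G_i$ contributes at most $\min(m(p),m(p+e_i))$ edges per pair (this is where the matching hypothesis enters), and then lift to $\F_2^{k+t}$ by placing the $m(p)$ copies at $(p,0),\dots,(p,m(p)-1)$ in binary; this realizes the weighted count as genuine axis-aligned edges in a set of size $n$, and Harper gives $\sum_i|G_i|\le \tfrac{1}{2}n\log_2 n$.

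Second, for general $\F$ your line-counting step has a gap: the shift $p\mapsto p+e_i$ has order $p=\mathrm{char}(\F)$, not $|\F|$, so an axis-$i$ line decomposes into $|\F|/p$ cycles of length $p$. A line can be non-full yet consist of several \emph{full} cycles, in which case it carries $t_L$ adjacent pairs rather than $t_L-1$ (e.g.\ over $\F_4$ take one full $2$-cycle out of two). Your Loomis--Whitney inequality then needs to be applied at the level of $p$-cycles rather than $|\F|$-lines, or you should first reduce to the prime subfield; either route works, but the sketch as written does not.
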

When $\F \ne \F_2$, the above lower bound, as stated, only applies to the setting where the decoder is a linear function with the added restriction that each non-zero coefficient of the linear combination is in $\{-1,1\}$. It is not hard to remove this restriction on coefficients, but, in our setting, we obtain a better dependence on $\abs{\F}$ in \cref{mthm:main} by applying this more specialized lemma.

\subsection{Concentration inequalities}
We will need the following standard concentration inequalities.

\begin{fact}[Chernoff Bound]
Let $x_1, \dots, x_n$ be i.i.d.\ Bernoulli random variables with mean $p$, and let $\mu = p n$. Then, for any $\delta \geq 0$, 
\begin{equation*}
\Pr[\sum_{i = 1}^n x_i \geq (1 + \delta)\mu] \leq \exp(-\delta^2 \mu/(2 + \delta)) \enspace.
\end{equation*}
\end{fact}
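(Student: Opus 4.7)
The plan is to prove this by the standard exponential-moments (Chernoff) method, followed by a short optimization and a convexity-style inequality to massage the bound into the form $\exp(-\delta^2 \mu/(2+\delta))$. Let $S = \sum_{i=1}^n x_i$. For any $t > 0$, Markov's inequality applied to $e^{tS}$ gives
\begin{equation*}
\Pr[S \geq (1+\delta)\mu] \;\leq\; \frac{\E[e^{tS}]}{e^{t(1+\delta)\mu}} \;=\; \frac{\prod_{i=1}^n \E[e^{tx_i}]}{e^{t(1+\delta)\mu}},
\end{equation*}
where the last equality uses independence of the $x_i$'s.

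Next, I would bound each factor $\E[e^{tx_i}] = 1 - p + p e^t \leq \exp\bigl(p(e^t - 1)\bigr)$, using the elementary inequality $1 + z \leq e^z$ applied to $z = p(e^t - 1)$. Plugging this in yields
\begin{equation*}
\Pr[S \geq (1+\delta)\mu] \;\leq\; \exp\bigl(\mu(e^t - 1) - t(1+\delta)\mu\bigr).
\end{equation*}
The right-hand side is minimized at $t = \ln(1+\delta)$ (which is positive for $\delta > 0$), giving the classical Chernoff bound
\begin{equation*}
\Pr[S \geq (1+\delta)\mu] \;\leq\; \exp\bigl(-\mu\bigl((1+\delta)\ln(1+\delta) - \delta\bigr)\bigr).
\end{equation*}
The $\delta = 0$ case is trivial since the right side is $1$.

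It remains to show the scalar inequality $(1+\delta)\ln(1+\delta) - \delta \geq \delta^2/(2+\delta)$ for all $\delta \geq 0$, which will finish the proof after substituting into the exponent. I would verify this by defining $f(\delta) = (1+\delta)\ln(1+\delta) - \delta - \delta^2/(2+\delta)$, noting $f(0) = 0$, and checking $f'(\delta) \geq 0$ on $[0,\infty)$ via a direct computation: $f'(\delta) = \ln(1+\delta) - \delta(4+\delta)/(2+\delta)^2$, which vanishes at $\delta = 0$ and whose derivative $f''(\delta) = 1/(1+\delta) - 8/(2+\delta)^3$ is easily seen to be nonnegative for $\delta \geq 0$ (at $\delta = 0$ both terms equal $1$, and the second term decays faster). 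This is the only step with any actual content; everything else is textbook Chernoff. The mild algebraic verification of the inequality $(1+\delta)\ln(1+\delta) - \delta \geq \delta^2/(2+\delta)$ is the main obstacle, though it is entirely routine calculus and could equivalently be cited from any standard reference (e.g., Mitzenmacher--Upfal or Dubhashi--Panconesi).
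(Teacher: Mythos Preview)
Your proof is correct and entirely standard. The paper does not actually prove this statement; it is listed among the preliminaries as a standard fact (a ``Chernoff Bound'') without proof, so there is nothing to compare against. Your exponential-moments argument followed by the scalar inequality $(1+\delta)\ln(1+\delta)-\delta \geq \delta^2/(2+\delta)$ is precisely the textbook derivation; the calculus check of $f''(\delta) = 1/(1+\delta) - 8/(2+\delta)^3 \geq 0$ is fine (it reduces to $(2+\delta)^3 \geq 8(1+\delta)$, which holds with equality at $\delta=0$ and whose difference has positive derivative there).
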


\begin{fact}[Scalar Bernstein inequality] \label{fact:bernstein}
Let $x_1, \dots, x_n$ be independent mean $0$ random variables satisfying $\abs{x_i} \leq M$ almost surely for every $i$. Let $\sigma^2 \geq \sum_{i = 1}^n \E[x_i^2]$, for every $i \in [n]$. Then, for all $t \geq 0$, it holds that
\begin{equation*}
\Pr[\sum_{i = 1}^n x_i \geq t] \leq \exp\left(-\frac{\frac{1}{2} t^2}{\sigma^2 + \frac{1}{3} M t}\right) \enspace.
\end{equation*}
\end{fact}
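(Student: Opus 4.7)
The plan is to establish the inequality by the Cram\'er--Chernoff exponential moment method. For any $\lambda > 0$, applying Markov's inequality to $e^{\lambda \sum_i x_i}$ and using independence yields
\begin{equation*}
\Pr\left[\sum_{i=1}^n x_i \geq t\right] \leq e^{-\lambda t} \prod_{i=1}^n \E[e^{\lambda x_i}] \mper
\end{equation*}
The proof then reduces to two steps: first, bound the moment generating function of each summand using only the mean-zero assumption, the variance, and the almost-sure bound $|x_i| \leq M$; second, optimize the resulting exponent over $\lambda > 0$.

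For the moment generating function bound I would Taylor expand $e^{\lambda x_i}$. The mean-zero condition kills the linear term, leaving $\E[e^{\lambda x_i}] = 1 + \sum_{k \geq 2} \lambda^k \E[x_i^k]/k!$. The bound $|x_i| \leq M$ gives $|\E[x_i^k]| \leq M^{k-2} \E[x_i^2]$ for every $k \geq 2$, and the elementary inequality $k! \geq 2 \cdot 3^{k-2}$ for $k \geq 2$ allows me to dominate the tail of the series by a geometric series in $\lambda M/3$. This yields
\begin{equation*}
\E[e^{\lambda x_i}] \leq 1 + \frac{\lambda^2 \E[x_i^2]/2}{1 - \lambda M/3} \leq \exp\!\left(\frac{\lambda^2 \E[x_i^2]/2}{1 - \lambda M/3}\right),
\end{equation*}
valid for $\lambda \in (0, 3/M)$. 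Taking the product over $i$ and using the hypothesis $\sum_i \E[x_i^2] \leq \sigma^2$ consolidates these bounds into a single exponential in $\lambda^2 \sigma^2/(2(1-\lambda M/3))$.

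It then remains to optimize in $\lambda$. The exponent in the Markov estimate becomes $-\lambda t + \lambda^2 \sigma^2/(2(1 - \lambda M/3))$. Rather than solving the exact first-order condition, which leads to a cumbersome quadratic, the standard trick is to pick the clean value $\lambda = t/(\sigma^2 + Mt/3)$; this makes $1 - \lambda M/3 = \sigma^2/(\sigma^2 + Mt/3)$ and collapses the exponent to exactly $-\tfrac{1}{2}t^2/(\sigma^2 + Mt/3)$, matching the claimed bound. One should also verify that this $\lambda$ satisfies $\lambda < 3/M$, which is immediate from $t \geq 0$ and $\sigma^2 \geq 0$.

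The main delicate point in the argument is the moment generating function estimate: the specific constant $1/3$ appearing in the denominator of the advertised bound is a direct consequence of the combinatorial inequality $k! \geq 2 \cdot 3^{k-2}$, so this is the single step where any slack would degrade the final constant. Everything else is routine bookkeeping.
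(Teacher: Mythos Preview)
Your argument is correct and is the standard textbook derivation of Bernstein's inequality via the Cram\'er--Chernoff method. The paper, however, does not prove this statement at all: it is stated as a \emph{Fact} (i.e., a known result quoted without proof), so there is no ``paper's proof'' to compare against. Your write-up would serve perfectly well as a self-contained proof of the fact, and the one point you flag as delicate --- the inequality $k! \geq 2\cdot 3^{k-2}$ driving the constant $1/3$ --- is indeed the standard source of that constant.
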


We will use the following non-commutative Khintchine inequality~\cite{LustPiquardG91}.
\begin{fact}[Rectangular Matrix Khintchine inequality, Theorem 4.1.1 of \cite{Tropp15}]
\label{fact:matrixkhintchine}
Let $X_1, \dots, X_k$ be fixed $d_1 \times d_2$ matrices and $b_1, \dots , b_k$ be i.i.d.\ from $\Fits$. Let $\sigma^2 \geq \max(\norm{\sum_{i = 1}^k X_i X_i^{\top}]}_2, \norm{\sum_{i = 1}^k X_i^{\top} X_i]}_2)$. Then
\begin{equation*}
\E\Bigl[\ \Norm{\sum_{i = 1}^k b_i X_i}_2\ \Bigr] \leq \sqrt{2\sigma^2 \log(d_1 + d_2)} \enspace.
\end{equation*}
\end{fact}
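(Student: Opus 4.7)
The plan is to prove this standard bound for Rademacher matrix sums by a two-step route: first reduce the rectangular case to a Hermitian-valued sum via a dilation, then control the resulting maximum eigenvalue using the matrix Laplace transform method together with Lieb's concavity theorem.

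First, I would apply the Hermitian dilation $Y_i := \begin{pmatrix} 0 & X_i \\ X_i^\top & 0\end{pmatrix}$, so each $Y_i$ is a self-adjoint matrix of size $d := d_1 + d_2$. A direct computation gives $Y_i^2 = \mathrm{diag}(X_i X_i^\top,\, X_i^\top X_i)$, and hence $\Norm{\sum_i Y_i^2}_2 = \max\Paren{\Norm{\sum_i X_i X_i^\top}_2,\,\Norm{\sum_i X_i^\top X_i}_2} \leq \sigma^2$. Moreover, the nonzero eigenvalues of $Y_i$ are $\pm \sigma_j(X_i)$, the signed singular values of $X_i$, so $\lambda_{\max}\Paren{\sum_i b_i Y_i} = \Norm{\sum_i b_i X_i}_2$. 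Thus it suffices to prove $\E\, \lambda_{\max}(S) \leq \sqrt{2 \sigma^2 \log d}$ with $S := \sum_i b_i Y_i$.

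Next, I would invoke Jensen's inequality for the log-trace-exponential: for any $\theta > 0$, $\theta \cdot \E\, \lambda_{\max}(S) \leq \log \E\, \tr \exp(\theta S)$. To bound the matrix MGF, Rademacher symmetry gives $\E_{b_i} \exp(\theta b_i Y_i) = \cosh(\theta Y_i) \preceq \exp(\theta^2 Y_i^2 / 2)$ in the Loewner order, using the scalar bound $\cosh x \leq e^{x^2/2}$ applied spectrally. Combining across independent $b_i$'s then requires Lieb's concavity theorem, which in its standard packaging (the master matrix Laplace transform bound of Tropp) yields $\E \tr \exp(\theta S) \leq \tr \exp\Paren{\tfrac{\theta^2}{2}\sum_i Y_i^2} \leq d \cdot \exp\Paren{\tfrac{\theta^2 \sigma^2}{2}}$, where the final inequality uses $\tr e^A \leq d \cdot e^{\lambda_{\max}(A)}$ together with $\Norm{\sum_i Y_i^2}_2 \leq \sigma^2$.

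Plugging this into Jensen's inequality gives $\E\,\lambda_{\max}(S) \leq \tfrac{\log d}{\theta} + \tfrac{\theta \sigma^2}{2}$, and optimizing over $\theta$ (with $\theta = \sqrt{2\log d}/\sigma$) produces the desired bound $\E\,\lambda_{\max}(S) \leq \sqrt{2\sigma^2 \log(d_1+d_2)}$. The main technical obstacle is the MGF step: the scalar Rademacher inequality $\cosh x \leq e^{x^2/2}$ extends term-by-term in the semidefinite order, but combining several independent Rademacher terms inside a single trace-exponential crucially uses Lieb's deep concavity theorem for $A \mapsto \tr \exp(H + \log A)$. Once that machinery is in place, the remainder is dilation, Jensen, and a one-variable optimization.
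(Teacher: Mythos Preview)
The paper does not prove this statement; it is recorded as a fact with a citation to Tropp's monograph and used as a black box. Your sketch is correct and is essentially the standard proof from that reference: reduce to the Hermitian case by the self-adjoint dilation, bound the matrix moment-generating function using $\cosh(\theta Y_i)\preceq \exp(\theta^2 Y_i^2/2)$ together with Lieb's concavity (Tropp's master tail bound), and optimize the resulting $\tfrac{\log d}{\theta}+\tfrac{\theta\sigma^2}{2}$ over $\theta$. There is nothing to compare against in the paper itself.
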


\parhead{Tail Bounds for $r$-partite non-negative polynomials.} We give an elementary proof of a concentration inequality for $r$-partite polynomials with non-negative coefficients. Such inequalities are the subject of the celebrated work of Kim and Vu~\cite{KimV00} (with tightenings due to Schudy and Sviridenko~\cite{SchudyS12}).  For $r$-partite polynomials, our inequality below saves a crucial $2^{O(r)}$ factor in the estimate of the typical value when compared to a blackbox application of the above results (without which, we can only obtain a quasi-polynomial lower bound for $3$-LCCs).
\begin{lemma}[Tail Bounds from bounded expected derivatives]
\label{lem:partitepolyconc}
Let $x=\{x^{(i)}_j\}_{\begin{subarray}{c}1 \leq i \leq r \\ 1 \leq j \leq n \end{subarray}}$ be $nr$ independent $p$-biased Bernoulli random variables. Let $P(x) = P(x^{(1)},x^{(2)},\ldots, x^{(r)})$ be a $r$-partite multilinear polynomial of degree $\leq r$ with nonnegative coefficients. That is, each monomial with a non-zero coefficient in $P$ has degree at most $1$ in each $x^{(i)}$ for $1 \leq i \leq r$. For $Z \in ([n] \cup \{\star\})^r$, let $\mu_Z(P)$ be the expected partial derivative of $P$ with respect to the variables $\{x^{(h)}_{Z_h} \mid 1 \leq h \leq r, \text{ } Z_h \neq \star\}$. Suppose that there exists a $\mu,\gamma >0$ such that for every $Z$, $\mu_Z(P) \leq \mu \cdot \gamma^{|Z|}$, where $|Z|$ denotes the number of non $\star$ coordinates in $Z$. 

Then, for every $\beta>0$, 
\begin{equation*}
\Pr_{y}[P(y) \geq (1 + \beta)^r \mu] \leq r (n+1)^{r} \alpha\mcom
\end{equation*} 
where $\alpha = \exp\left(-\frac{\frac{1}{2} \beta^2}{2 \gamma + \frac{1}{3} \gamma \beta} \right)$.
\end{lemma}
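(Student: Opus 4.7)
The plan is to prove the lemma by induction on the number of parts $r$. For the base case $r = 1$, the polynomial is affine, $P(x^{(1)}) = c_0 + \sum_j c_j x^{(1)}_j$ with $c_0, c_j \geq 0$. The hypothesis applied to $Z = (\star)$ gives $\E[P] \leq \mu$, and to $Z = (j)$ gives $c_j \leq \mu\gamma$. I would then apply \cref{fact:bernstein} to $P - \E[P]$ with $M = \mu\gamma$ and $\sigma^2 \leq p(\max_j c_j)(\sum_j c_j) \leq \mu^2\gamma$, taking $t = \beta\mu$; this yields a tail of at most $\alpha$, which is within the claimed $(n+1)\alpha$.

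For the inductive step, I decompose $P$ according to whether a monomial uses a variable from the last part:
\[
P(x) = R(x^{(1)}, \ldots, x^{(r-1)}) + \sum_{j = 1}^n Q_j(x^{(1)}, \ldots, x^{(r-1)}) \, x^{(r)}_j,
\]
where $R$ and each $Q_j$ are $(r-1)$-partite multilinear polynomials of degree at most $r-1$ with nonnegative coefficients, and $P_{r-1} := \E_{x^{(r)}}[P] = R + p \sum_j Q_j$. A direct comparison of partial derivatives shows that $P_{r-1}$ satisfies the hypothesis with parameters $(\mu, \gamma)$, while each $Q_j$ satisfies it with parameters $(\mu\gamma, \gamma)$. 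Applying the inductive hypothesis and a union bound over $j \in [n]$, I would condition on the good event $\mathcal G$ that $P_{r-1} \leq (1+\beta)^{r-1}\mu$ and $Q_j \leq (1+\beta)^{r-1}\mu\gamma$ for every $j$; this event fails with probability at most $(1+n)(r-1)(n+1)^{r-1}\alpha \leq (r-1)(n+1)^r\alpha$.

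Conditional on $\mathcal G$ (hence on $x^{(1)}, \ldots, x^{(r-1)}$), the random variable $P(x) - P_{r-1} = \sum_{j = 1}^n Q_j (x^{(r)}_j - p)$ is a sum of independent mean-zero terms, each of magnitude at most $M := (1+\beta)^{r-1}\mu\gamma$. The key variance bound uses that $R$ has nonnegative coefficients, so $R(\cdot) \geq 0$ pointwise and hence $p \sum_j Q_j \leq P_{r-1}$; therefore
\[
\sum_j Q_j^2 \leq \Bigl(\max_j Q_j\Bigr)\Bigl(\sum_j Q_j\Bigr) \leq M \cdot P_{r-1}/p,
\]
giving $\sigma^2 \leq p(1-p)\sum_j Q_j^2 \leq ((1+\beta)^{r-1}\mu)^2 \gamma$ on $\mathcal G$. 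I then apply \cref{fact:bernstein} conditionally with $t = \beta(1+\beta)^{r-1}\mu$: since $P \geq (1+\beta)^r\mu$ together with $P_{r-1} \leq (1+\beta)^{r-1}\mu$ forces $P - P_{r-1} \geq t$, the conditional tail is at most $\alpha$. Combining with the failure probability of $\mathcal G$ yields the claimed $r(n+1)^r \alpha$.

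The main obstacle is the variance estimate in the last step. A blackbox bound like $\sum_j Q_j^2 \leq n M^2$ would insert an extra factor of $n$ and blow up the typical value by an $n^{O(r)}$ factor --- essentially what a direct application of Kim--Vu--Schudy--Sviridenko would give, and too weak for the LCC lower bound application. The crux is that nonnegativity of coefficients together with the $r$-partite structure lets me tie $\sum_j Q_j$ back to the lower-degree polynomial $P_{r-1}$, so the variance depends only on $\mu$ and $\gamma$ rather than on $n$; this is precisely the mechanism behind the $2^{O(r)}$ savings promised in the preamble.
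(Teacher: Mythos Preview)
Your proof is correct and follows essentially the same approach as the paper. The paper phrases the induction as a forward martingale-style process (reveal the groups $x^{(1)},\dots,x^{(t)}$ one at a time and track all expected partial derivatives of the remaining polynomial), whereas you induct on the number of parts $r$ and apply the inductive hypothesis modularly to $P_{r-1}=\E_{x^{(r)}}[P]$ and to each $Q_j=\partial P/\partial x^{(r)}_j$; unfolding your induction recovers exactly the paper's argument, and both pivot on the same variance estimate $\sigma^2 \le (\max_j c_j)\,(p\sum_j c_j)$ obtained from nonnegativity of the coefficients.
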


\begin{proof}
We will consider the random process that samples $x \in \Bits^{nr}$ by fixing $x^{(i)}$ to a random draw from their distribution one at a time. At each step, we obtain a new polynomial of smaller degree obtained by fixing one additional set of variables to a fixed value in $P$. We understand how the parameters $\mu_Z$ of the polynomials so generated evolve via the Bernstein inequality \cref{fact:bernstein}.  

Formally, fix a $0 \leq t$. Let $(Z_{t+1}, \dots, Z_r) \in [n]^{r-t}$ be a tuple of length $r - t$. We define the quantity $\mu_{y^{(1)}, \dots, y^{(t)}, Z_{t+1}, \dots, Z_r}$ to be the quantity $\mu_{Z_{t+1}, \dots, Z_r}(P_t)$ where $P_t = P(y^{(1)}, \dots, y^{(t)}, x^{(t+1)}, \dots, x^{(r)})$. Here, we use the notation $y^{(i)}$ to denote sampled values for $x^{(i)}$. Note that $P_t$ has $r - t$ ``free'' groups of variables $x^{(t+1)}, \dots, x^{(r)}$.

Let $t \in \{0, \dots, r\}$. We will show that with probability at least $1 - t (n+1)^{r} \alpha$ over the draw of $y^{(1)}, \dots, y^{(t)}$, it holds that for every $Z_{t+1}, \dots, Z_r$ with $\abs{Z_h} \in \Bits$ for all $h = t+1, \dots r$, we have $\mu_{y^{(1)}, \dots, y^{(t)}, Z_{t+1}, \dots, Z_r} \leq (1 + \beta)^t \mu \cdot \gamma^{\sum_{h = t+1}^r \abs{Z_h}}$.

We prove this by induction. The base case of $t = 0$ forms the hypothesis of the lemma. We now prove the inductive step. Let $t \geq 1$, and suppose that with probability at least $1 - (t - 1) (n+1)^{r} \alpha$ over the draw of $y^{(1)}, \dots, y^{(t-1)}$, it holds that for tuple of length $r - t + 1$ $Z_{t}, \dots, Z_r$, we have $\mu_{y^{(1)}, \dots, y^{(t-1)}, Z_{t}, \dots, Z_r} \leq (1 + \beta)^{t-1} \mu \cdot \gamma^{\sum_{h = t}^r \abs{Z_h}}$.

Fix $Z_{t+1}, \dots, Z_r$ with $\abs{Z_h} \in \Bits$ for all $h = t+1, \dots r$. We now show that with probability at least $1 - \alpha$ over the draw of $y^{(t)}$, it holds that $\mu_{y^{(1)}, \dots, y^{(t-1)}, y^{(t)}, Z_{t+1}, \dots, Z_r} \leq (1 + \beta) \mu'$, where $\mu' = (1 + \beta)^{t-1} \mu \cdot \gamma^{\sum_{h = t+1}^r \abs{Z_h}}$. The lemma then follows by union bound over the (crudely) at most $(n+1)^r$ choices for $Z_{t+1}, \dots, Z_r$.

For an assignment $y^{(t)}$, we have that $\mu_{y^{(1)}, \dots, y^{(t-1)}, y^{(t)}, Z_{t+1}, \dots, Z_r} = f(y^{(t)})$, where $f(x^{(t)}) \coloneqq \sum_{u = 1}^n c_u x^{(t)}$ is a linear polynomial with nonnegative coefficients $c_u \coloneqq \mu_{S_1, \dots, S_{t-1}, \{u\}, Z_{t+1}, \dots, Z_r}$. We note that the mean is $\E_{y^{(t)}}[f(x^{(t)})] = \mu_{S_1, \dots, S_{t-1}, \emptyset, Z_{t+1}, \dots, Z_r} \leq \mu'$, by the induction hypothesis. We also have that $c_u = \mu_{S_1, \dots, S_{t-1}, \{u\}, Z_{t+1}, \dots, Z_r} \leq (1 + \beta)^{t-1} \mu \cdot \gamma^{1 + \sum_{h = t+1}^r \abs{Z_h}} = \gamma \mu'$, again by the induction hypothesis.

We now bound the polynomial by using the Bernstein Inequality. Let $y'^{(t)}$ be the centered version of $y^{(t)}$, i.e., $y'^{(t)}_u = y^{(t)}_u - p$, so that $y'^{(t)}_u = 1 - p$ with probability $p$, and $-p$ with probability $1 - p$. Then, $\E[(y'^{(t)}_u c_u)^2] = c_u^2 ((1 - p)^2 p + p^2 (1 - p)) \leq 2 p c_u^2$. Further, we observe that $\abs{y'^{(t)}_u c_u} \leq c_u \leq \gamma \mu' =: M$ always holds. We also note that
\begin{flalign*}
&\sigma^2 \coloneqq \sum_{u = 1}^n \E[(y'^{(t)}_u c_u)^2] \leq \sum_{u = 1}^n 2p c_u^2 \leq 2 (\max_{u} c_u) (p \sum_{u = 1}^n c_u) \leq 2 M \cdot \mu_{S_1, \dots, S_{t-1}, \emptyset, Z_{t+1}, \dots, Z_r} \\
&\leq2 (\gamma \mu') \mu' = 2 \gamma {\mu'}^2 \enspace.
\end{flalign*}
Thus, by the Bernstein Inequality, we have
\begin{equation*}
\Pr[f(y'^{(t)}) \geq \lambda] \leq \exp\left(-\frac{\frac{1}{2} \lambda^2}{\sigma^2 + \frac{1}{3} M \lambda} \right) \enspace,
\end{equation*}
and therefore 
\begin{equation*}
\Pr[f(y'^{(t)}) \geq \beta \mu' ] \leq \exp\left(-\frac{\frac{1}{2} \beta^2}{2 \gamma + \frac{1}{3} \gamma \beta} \right) = \alpha \enspace.
\end{equation*}
Note that since $f$ is linear, $f(y'^{(t)}) = f(y^{(t)}) - \mu_{S_1, \dots, S_{t-1}, \emptyset, Z_{t+1}, \dots, Z_r}$, and so it follows that $\Pr[f(y^{(t)}) \geq (1 + \beta) \mu'] \leq \alpha$, which finishes the proof.
\end{proof}

\section{Proof overview}
\label{sec:techniques}
In this section, we will focus on the case of $\F=\F_2$ to give a high-level overview of the main ideas in the proof of \cref{mthm:main}. Without loss of generality, we can assume that $\Code$ is a systematic linear map $\Code \colon \Fits^k \to \Fits^n$, so that the first $k$ bits in any codeword are the message bits themselves, i.e., for any $b \in \Fits^k$, $x = \Code(b)$ satisfies $x_i = b_i$ for all $i \in [k]$. In this section and the next, we will use the notation $\gtrapprox$ and $\lessapprox$ to suppress a multiplicative $\polylog(n)$ factor.

\parhead{The Kikuchi matrix method.} Our proof uses the Kikuchi matrix method developed in prior works~\cite{GuruswamiKM22,HsiehKM23,AlrabiahGKM23} for finding extremal trade-offs for combinatorial structures in hypergraphs. This method works in two steps: (1) formulate a hypergraph possessing some relevant structure as a family of satisfiable XOR formulas, and, (2) construct a spectral refutation (i.e.,  a certificate of unsatisfiability) of a randomly chosen member of this family. The spectral refutations in the second step rely on appropriate \emph{Kikuchi} matrices --- a term that we loosely use to describe induced subgraphs of an appropriately chosen Cayley graph associated with the hypergraph. The success of the spectral refutation naturally relies on the structure of the XOR instances. The power of the method comes from the ease (at least in hindsight, given~\cite{GuruswamiKM22,HsiehKM23,AlrabiahGKM23}) in identifying the relevant combinatorial structure that is sufficient for the success of the spectral refutations. This method has been used to prove Feige's conjecture~\cite{Feige08} on the hypergraph Moore bound (extremal girth vs.\ density trade-off)~\cite{GuruswamiKM22,HsiehKM23}, improved lower bounds for $3$-LDCs~\cite{AlrabiahGKM23}, and generalizations of Szemeredi's theorems for arithmetic progressions with restricted common differences~\cite{BrietC23} (which closely follows the argument in~\cite{AlrabiahGKM23}). 

Our proof can be seen as an upgrade on a recent work~\cite{AlrabiahGKM23} that showed a lower bound of $n \geq \tilde{\Omega}(k^3)$ on the block length $n$ of a code of dimension $k$ and constant distance.\footnote{Their result extends to non-linear codes but we omit this distinction here.} The key conceptual idea that helps us move beyond the cubic to an exponential lower bound (a bound that provably cannot hold for $3$-LDCs~\cite{Efremenko09,Yekhanin08}) is a new family of XOR instances that crucially exploits the additional structure in LCCs. Our new family of XOR instances is produced by performing a certain structured variant of \emph{low-width} resolution (well-studied in proof complexity~\cite{Grigoriev01,Schoenebeck08}) on the ``basic'' family. We call this process \emph{long chain derivations}. 

In the following, we will first recall the conceptual crux of the lower bound for $q$-LDCs in~\cite{AlrabiahGKM23} and then use it to motivate our approach for $3$-LCCs. 

\subsection{The naive XOR instance and LDC lower bounds}
\label{sec:evenqLDC}
 Let's first consider the case of $3$-LDCs and start by recalling the combinatorial characterization (formalized as the \emph{normal form} in \cref{def:normalLCC}). A code $\Code \colon \Fits^k \to \Fits^n$ is a $(q,\delta)$-LDC if for every $1 \leq i \leq k$, there exists a $q$-uniform hypergraph matching $H_i$ over $[n]$ of size $\delta n$ such that for every $b \in \Fits^k$ and codeword $x = \Code(b)$, for every $i \in [k]$ and every $C \in H_i$, it holds that $x_C = b_i$. The combinatorial characterization above can be easily seen to be equivalent to the satisfiability of a family of $q$-XOR instances. 

\begin{observation}[LDCs and a Family of XOR Instances] \label{obs:overview-LDC-to-XOR}
Let $H_1, H_2,\ldots, H_k$ be $q$-uniform hypergraph matchings on $[n]$ of size $\delta n$. For every $b \in \Fits^k$, define the following $q$-XOR instance $\Phi_b$ in $n$ variables $x_1, x_2,\ldots, x_n$. 
\begin{equation} \label{eq:overview-3-ldc-xor}
\forall i \in [k], \text{ } \forall C \in H_i \text{ }, x_C = b_i \mper
\end{equation}
Then, there exists a (normal form) linear LDC $\Code:\Fits^k \rightarrow \Fits^n$ described by the collection of $q$-uniform matchings $H_1, H_2,\ldots, H_k$ on $[n]$ if and only if $\Phi_b$ is satisfiable for every $b \in \Fits^k$.
\end{observation}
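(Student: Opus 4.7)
The observation is a straightforward unwinding of the normal-form characterization of linear LDCs, and my plan is to prove the two directions separately.

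For the forward direction, if $\Code$ is a normal-form linear LDC with decoder matchings $H_1, \ldots, H_k$, then by definition of the normal form, for every message $b \in \Fits^k$ the codeword $x = \Code(b)$ satisfies $x_C = b_i$ for every $i \in [k]$ and every $C \in H_i$. Thus $\Code(b)$ is itself a satisfying assignment of $\Phi_b$, giving satisfiability for every $b$.

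For the reverse direction, the key structural observation is that each constraint $x_C = b_i$ is $\F_2$-linear simultaneously in $x$ and $b$. Consequently $V_0 \coloneqq \{x : x \text{ satisfies } \Phi_0\}$ is a linear subspace of $\F_2^n$, and for every $b \in \F_2^k$ the satisfying assignments of $\Phi_b$ form an affine coset of $V_0$. To extract an actual linear map $\Code$, I would pick, for each standard basis vector $e_i \in \F_2^k$, some satisfying assignment $s_i$ of $\Phi_{e_i}$ (which exists by hypothesis), and define $\Code(b) \coloneqq \sum_{i=1}^{k} b_i s_i$. Then $\Code$ is $\F_2$-linear by construction, and for each $C \in H_i$ a one-line calculation gives $\Code(b)_C = \sum_{j} b_j (s_j)_C = \sum_{j} b_j (e_j)_i = b_i$, using that $s_j$ satisfies $\Phi_{e_j}$. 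Hence $\Code(b)$ satisfies $\Phi_b$, and the matchings $H_1, \ldots, H_k$ serve as the normal-form LDC decoder for $\Code$. Injectivity of $\Code$ is automatic, since each $H_i$ is nonempty (it has $\delta n \geq 1$ hyperedges) and so $b_i$ is recoverable from $\Code(b)$ via any $C \in H_i$.

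I do not expect any real obstacle here: the only minor care needed is the notational translation between additive $\F_2$ arithmetic on $\F_2^n$ and multiplicative $\Fits$ arithmetic on $\Fits^n$, which is immediate in characteristic $2$.
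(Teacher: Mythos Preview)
Your proposal is correct and follows the same approach as the paper: the forward direction is identical, and for the reverse direction the paper merely asserts that ``one can easily construct a linear map $\Code$ \ldots\ where $\Code(b)$ is some satisfying assignment to $\Phi_b$,'' which is precisely the construction you spell out by fixing satisfying assignments $s_i$ for $\Phi_{e_i}$ and extending linearly. Your verification that $\Code(b)_C = b_i$ for $C \in H_i$ and the injectivity remark fill in details the paper omits, but the underlying idea is the same.
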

If $\Code$ is a $(q,\delta)$-LDC described by matchings $H_1, H_2,\ldots, H_k$, then $x = \Code(b)$ satisfies all the constraints in $\Phi_b$. Conversely, if $\Phi_b$ is satisfiable for every $b$, then one can easily construct a linear map $\Code$ (easily seen to be a linear $(q,\delta)$-LDC) where $\Code(b)$ is some satisfying assignment to $\Phi_b$.

The main idea of~\cite{AlrabiahGKM23} is to show that for any collection of $\delta n$-size $q$-matchings $H_1, H_2,\ldots, H_k$, if $k$ is large enough as a function of $n$, then for a randomly chosen $b$, $\Phi_b$ is unsatisfiable with high probability. This implies an upper bound on $k$. Now, when $b$ is random, $\Phi_b$ is XOR formula generated via $k \ll n$ bits, i.e., much smaller than the number of variables. Thus, a naive union bound argument cannot establish unsatisfiability of $\Phi_b$. The work of~\cite{AlrabiahGKM23} establishes unsatisfiability of $\Phi_b$ for a random $b$ via a \emph{spectral refutation} using \emph{Kikuchi} matrices. 

\parhead{Spectral refutations for $\Phi_b$.} Let us now recall how the spectral refutation in~\cite{AlrabiahGKM23} works. Their main result is for the case of $q=3$ (where they obtained improvements on prior works). However, for our purpose of illustrating the conceptual idea,  we will focus on the simpler setting of even $q$ and sketch their proof that $k \leq \tilde{O}(n^{1-2/q})$ for $q$-LDCs. 

First, we observe that for the XOR instance $\Phi_b$, there is an associated ``instance polynomial'' $\Phi_b(x) \coloneqq \sum_{i = 1}^k \sum_{C \in H_i} b_i x_C$. We note that $\Phi_b(x)$ is the number of constraints satisfied by $x$ minus the number of constraints violated, and thus $\Phi_b$ is unsatisfiable if and only if $\val(\Phi_b) \coloneqq \max_{x \in \Fits^n} \Phi_b(x)$ is less than $\sum_{i = 1}^k \abs{H_i} = k \cdot \delta n$. Thus, to show that $\Phi_b$ is unsatisfiable, we will bound $\val(\Phi_b)$.

To do this, we define a matrix whose quadratic form is equal to $\Phi_b(x)$.

\begin{definition}[Kikuchi matrix and graphs]
\label{def:basickikuchi}
Let $C \in {[n] \choose q}$, let $\ell$ be a parameter, and let $N \coloneqq {n \choose \ell}$. Let $A_C \in \Bits^{N \times N}$ be the matrix indexed by sets $S \in {[n] \choose \ell}$ where $A_{C}(S, T) = 1$ if $S \oplus T = C$, and $0$ otherwise. Let $A_i \coloneqq \sum_{C \in H_i} A_C$, and let $A \coloneqq \sum_{i = 1}^k b_i A_i$. We naturally interpret (and by abuse of notation, also call) $A_C$, $A_i$ and $A$ as adjacency matrices of ``Kikuchi graphs'' on the vertex set ${{[n]} \choose \ell}$.  
\end{definition}

Observe that $A_C$ is a matching on vertex set $[n] \choose \ell$ of size $D = {n-q \choose q/2} {q \choose q/2}$. For any $x \in \Fits^n$, let $x^{\circ \ell}$ denote the $\ell$-wise monomial vector indexed by $S \in {{[n]} \choose \ell}$ with corresponding entry equal to $x_S$. Then, ${x^{\circ \ell}}^{\top} A_C x^{\circ \ell} = D x_C$. Consequently, ${x^{\circ \ell}}^{\top} A x^{\circ \ell} = D \Phi_b(x)$. Thus, if $x\in \Fits^n$ satisfies $\Phi_b$, then we have the following inequality that upper bounds $k$ in terms of $\Norm{A}_2$: 
\begin{equation} 
\label{eq:simple-LDC-spectral}
k \delta n = \Phi_b(x) \leq \frac{1}{D} \Norm{x^{\circ \ell}}_2^2 \Norm{A}_2 = \frac{{n \choose \ell}}{D} \Norm{A}_2 \leq O((n/\ell)^{q/2}) \Norm{A}_2\mper
\end{equation}  
We now choose $b\in \Fits^k$ uniformly at random and consider $A= \sum_i b_i A_i$, which is a matrix Rademacher series of the $A_i$'s. By the matrix Khintchine inequality, $\Norm{A}_2 \leq O(\sqrt{\log N}) \Norm{\sum_i A_i^2}^{1/2}_2$ with high probability. 

\parhead{A combinatorial proxy for $\Norm{A}_2$.} Let $\Delta_i$ be the maximum degree of any node in the Kikuchi graph $A_i$, and let $\Delta = \max_{1 \leq i \leq k} \Delta_i$. Then, we can naively bound $\Norm{\sum_i A_i^2}_2 \leq \sum_i \Norm{A_i}_2^2 \leq k \Delta^2$. Thus, the maximum degree of the $A_i$'s naturally controls the spectral norm of $A$ as $\Norm{A}_2 \leq \Delta \cdot O(\sqrt{k\ell \log n})$.  

Let us now investigate bounds on $\Delta$. Since for each $C \in H_i$, $A_C$ contributes $D$ edges to $A_i$, the average degree of $A_i$ is clearly $\delta n D/ N \sim n (\ell/n)^{q/2}$. Thus, $\Delta \geq O(1) \max\{1, n (\ell/n)^{q/2}\}$. If $\Delta$ happens to be equal to this minimum possible value, then plugging it in \cref{eq:simple-LDC-spectral} yields:
\begin{equation*}
k \delta n \leq O(1) \left(\frac{n}{\ell}\right)^{q/2} \sqrt{k \ell \log n} \cdot \max\{1, n (\ell/n)^{q/2}\} \mcom
\end{equation*}
which implies that $k \leq O(\ell \log n) \cdot \max\{ n^{q-2}/\ell^{q}, 1\}$. This is minimized at $\ell = n^{1 - 2/q}$ to give the lower bound of $k \leq \tilde{O}(n^{1-2/q})$, i.e., $n \geq \tilde{\Omega}(k^{q/(q-2)})$.

\parhead{Handling irregularities: row pruning via polynomial concentration.} We will now (for the first time in the argument) use that the $H_i$'s are matchings to argue that while the $A_i$'s are certainly not approximately regular (i.e., max degree $\Delta_i$ at most a $\polylog(n)$ factor larger than the average-degree), there is only a small fraction of nodes in any $A_i$ that have a large degree. Of course, a small fraction of rows can still cause $\Norm{A}_2$ to be too large. In order to circumvent this issue, we observe that the argument in \cref{eq:simple-LDC-spectral} works even if we were to replace $N \Norm{A}_2$ (maximum over arbitrary quadratic forms) by $\Norm{A}_{\infty \to 1}$ (maximum over quadratic forms on $\pm 1$-coordinate vectors). The latter quantity is insensitive to dropping a small fraction of rows since $\pm 1$-coordinate vectors when restricted to a small number of rows must have correspondingly small $\ell_2$-norm.  

To prove that only a small fraction of nodes can have a large degree in any $A_i$, we view the degree of any node $S$ as a polynomial in the corresponding indicator variables $z \in \Bits^n$ with $\sum_i z_i = \ell$ and use tail inequalities for low-degree polynomials (that generalize concentration of Lipschitz functions) of Kim and Vu and extensions~\cite{KimV00,SchudyS12} to bound the chance that it takes a value $\polylog(n)$ times the average. This relies on establishing strong bounds on the expected partial derivatives of the degree polynomial by using that the $H_i$'s are matchings. 

\parhead{The key heuristic: high density for Kikuchi graphs at low levels.} Let's summarize the crucial steps of the above argument as follows: (1) $q$-LDCs naturally yields XOR instances of arity $q$, (2) to obtain our lower bound, we need that the Kikuchi matrices $A_i$ corresponding to a matching $H_i$ are approximately regular (after dropping a negligible fraction of rows), and (3) the argument can only yield a bound of the form $k \lessapprox \ell$ where $\ell$ is the smallest level of the Kikuchi graphs $A_i$ with an average degree $\gg 1$. More precisely, if there are $m_i$ constraints of arity $q$ in $H_i$, then the threshold $\ell$ is the smallest integer satisfying $m_i (\ell/n)^{q/2} \gg 1$ for all $i \in [k]$. Note that this threshold $\ell$ increases as $q$ increases. 

We assert that even though the argument in~\cite{AlrabiahGKM23} for the case when $q=3$ requires more work (in both the design of the Kikuchi matrix itself and its analysis), the heuristic above continues to hold. Let us also note that ensuring approximate regularity is usually the trickiest aspect of the proof. In particular, while the heuristic above  makes sense for all odd $q$ (and not just $q = 3$), and the work of~\cite{AlrabiahGKM23} fails to obtain an improved lower bound for odd $q>3$ because they were unable to find an appropriate ``decomposition'' that ensures approximate regularity of the resulting Kikuchi matrices.

Thus, in order to obtain an exponential lower bound, as in \cref{mthm:main}, via the schema above, we must construct Kikuchi graphs that have constant density (i.e., average degree) at much a lower level $\ell$. Specifically, we will need to be able to take $\ell = \polylog(n)$.\footnote{We note that while our lower bounds appear to get weaker as $\ell$ grows, generic convergence results about the Kikuchi matrices imply that taking $\ell \sim n$ and bounding $\Phi_b$ in terms of $\Norm{A}_2$ yields the \emph{optimal} bound on $k$, whatever it may be! The reason the current argument (which is likely suboptimal) does not extend beyond $\ell = n^{1 - 2/q}$ is the potentially superfluous $\sqrt{\log N}$ multiplicative loss in the matrix Khintchine inequality. Investigating when this $\sqrt{\log N}$ factor (which is tight in the worst-case) can be removed is the topic of an ongoing research effort in random matrix theory~\cite{BandeiraBH23} and is naturally related to other problems such as resolving the matrix Spencer conjecture~\cite{Zouzias12,Meka14}.}

\subsection{Long chain derivations: improved spectral refutations by increasing density}
\label{sec:chainheuristic}
Given the key heuristic above, we now show how to build XOR instances from $3$-LCCs that yield constant density Kikuchi matrices at level $\ell = \polylog(n)$. Our instances will balance two opposing concerns. On the one hand, they will be of large arity (in fact, $O(\log n)$ arity) which, given the discussion above, hurts the density at lower levels. Nonetheless, we will show that the number of higher arity constraints that we produce grows fast enough to compensate for this and gives us an overall increase in density at lower $\ell$. We note (with the hope of pointing the reader to the trickiest part of the proof that motivates all our setup) that the analysis of ``row pruning'' i.e., arguing approximate regularity after removing a negligible fraction of rows, will get significantly more involved and motivates all our design choices. This includes the specific type of Kikuchi matrices that we will choose and a new decomposition for the constraints that, while a bit unnatural at the outset, helps guarantee approximate regularity. Let us see these ideas in more detail next.  

Like $3$-LDCs, $3$-LCCs can, without loss of generality, be assumed to be $(3,\delta)$-normal. Thus, for any $3$-LCC $\Code \colon \Fits^k \to \Fits^n$, there are $3$-uniform hypergraph matchings $H_1, \dots, H_n$ on $[n]$, each of size $\delta n$, such that for every $b \in \Fits^k$, $u \in [n]$, and $C \in H_u$, the encoding $x = \Code(b)$ satisfies $x_C = x_u$. Note that the key difference between LCCs and LDCs is that here we have a ``local correcting'' hypergraph $H_u$ for each $u \in [n]$, instead of only a hypergraph for each $u \in [k]$ in the case of LDCs.

\parhead{The naive XOR instances.} Similar to \cref{obs:overview-LDC-to-XOR}, the combinatorial characterization yields that the XOR instance with constraints $x_C = x_u$ for every $C \in H_u$ and $u \in [n]$ (where on the right hand side, we set $x_u=b_u$ whenever $u \in [k]$) is satisfiable for every $b \in \Fits^k$. If we focus only on the constraints corresponding to $H_u$ for $u \in [k]$ (i.e., the ``systematic'' bits in the codeword), then we recover the same XOR instance as in the case of $3$-LDCs and our method from above yields $k \leq \tilde{O}(n^{1/3})$~\cite{AlrabiahGKM23}. To improve on this significantly lossy formulation, we must make use of the additional constraints $H_u$ for $u \not \in [k]$. More specifically, if we were to only use the hypergraphs $H_u$ for $u \in [k]$, then any lower bound we could prove would hold for LDCs as well, and in particular one could not hope to prove \cref{mthm:main}, which is false for LDCs.

\parhead{Long chain derivations.} We now show how to use the additional constraints in order to build a higher arity XOR instance that is (1) approximately regular (after an appropriate decomposition), and (2) results in high-density Kikuchi graphs at $\polylog(n)$ levels. We will construct higher arity XOR instances that use the additional constraints above using a structured variant of low-width XOR resolution~\cite{Grigoriev01,Schoenebeck08} that we call \emph{long chain derivations}.

Let us start by forming extra constraints via $2$-chains. Observe that for any $u \in [n]$ and $C \in H_u$, we have that for any $b \in \Fits^k$, $x = \Code(b) \in \Fits^n$ satisfies the equation $x_u x_C = 1$. Now, let us choose $w \in C$ and $C' \in H_w$. We also have that $x_{w} x_{C'} = 1$. As $x_C = x_{C\setminus \{w\}} x_w$, it follows that the ``derivation'' $x_u x_{C\setminus \{w\}} x_{C'} = 1$ also holds, since $x_w^2 = 1$. We shall call such a constraint a ``$2$-chain'' --- it connects two constraints intersecting in one variable. We can think of such a $2$-chain as a tuple $(u, C, w, C', w')$, where $C \cup \{w\} \in H_u$ and $C' \cup \{w'\} \in H_w$, and this yields the constraint $x_{C} x_{C'} x_{w'} = x_u$ (see \cref{fig:2chain}). 

\begin{figure}[t]
    \centering
    \includegraphics[width=0.6\textwidth]{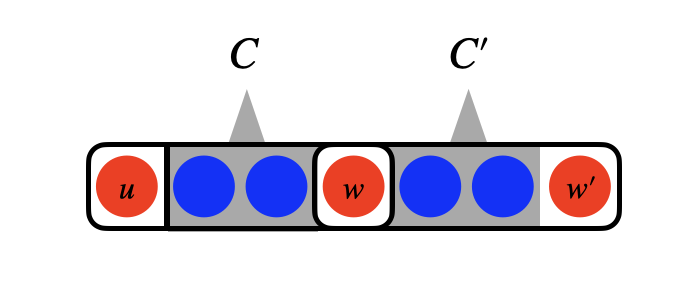}
    \caption{A $2$-chain with head $u$. Note that $C \cup \{w\} \in H_u$ and $C' \cup \{w'\} \in H_{w}$, and that $x = \Code(b)$ satisfies $x_{C} x_w = x_u$ and $x_{C'} x_{w'} = x_w$, and therefore $x_{C} x_{C'} x_{w'} = x_u$.}
\hrulefill
    \label{fig:2chain}
\end{figure}
Consider now the $2$-chains $\cup_{i \in [k]} \cH_i^{(2)}$, i.e., $2$-chains of the form $(i, C, w, C', w')$ where $i \in [k]$. Then, the constraints have the form $x_{C} x_{C'} x_{w'} = b_i$, so they decode the $i$-th independent bit $b_i$. We have thus formed a new set of constraints with ``right hand side'' $b_i$. 

\parhead{A heuristic calculation.} Let us now do a heuristic calculation (that ignores the key issue of approximate regularity) to see if we improve the density at lower Kikuchi levels by taking the XOR instances corresponding to $2$-chains. For any fixed ``head'' $i \in [k]$, there are $(3 \delta n)^2$ $2$-chains. This is because we have $\delta n$ choices for $C \cup \{w\} \in H_i$, followed by $3$ ways to choose $w$ from $C \cup \{w\}$, and then similarly $3 \delta n$ choices in total for $(C', w')$. Let $\cH_i^{(2)}$ denote the set of $2$-chains with head $i$. We have thus produced $\sim n^2$ constraints and each constraint has arity $5$,\footnote{Some constraints may have additional variable cancellations and thus have arity $< 5$. However, as the density gets worse as the arity increases, this is only ``better'' for us.} as $\abs{C} = \abs{C'} = 2$. 

The Kikuchi matrix in \cref{def:basickikuchi} only makes sense for even $q$, but let us still do a ``pretend'' calculation of the relative density for the arity $5$ constraints we have produced. This can be made precise with a slightly more sophisticated Kikuchi matrix, so this is still a meaningful heuristic. 

The density (i.e., average degree) expression for a Kikuchi matrix $A_i$ is now $n^2 (\ell/n)^{q/2} \sim n^2 (\ell/n)^{5/2} \sim \ell^{2.5}/n^{0.5}$. This density is $\gg 1$ whenever $\ell \gg n^{1/5}$, so one might expect to obtain a bound of $k \lessapprox n^{1/5}$ (beating the $n^{1/3}$ bound for the naive XOR instance~\cite{AlrabiahGKM23}) when working with $2$-chains --- a construction that crucially relies on additional structure in $3$-LCC! While there are lot of details that we have simply ignored in doing this calculation, it does suggest that we are able to achieve a constant-density Kikuchi matrix $A_i$ at a lower level $\ell$. A similar calculation (that we will omit here) for chains of larger length, say $r$, shows that the smallest level $\ell$ at which we can obtain constant density Kikuchi matrices is $\ell \sim n^{1/2r}$, and this suggests that we might be able to obtain constant density at level $\ell = \polylog(n)$ if we work with $r \sim \log n$ length chains. 

In \cref{sec:warmup}, as a warmup to our somewhat technical proof of the main theorem, we present a complete analysis of the $2$-chains (with extended commentary) to obtain a $k \leq \tilde{O}(n^{1/4})$ bound (giving a polynomial improvement on the $\sim n^{1/3}$ lower bound on $3$-LDCs already!) in order to illustrate (a simplified version of) the set of new  tools that go into the analysis. 

\subsection{From the heuristic to a proof} 
In the remaining part of this overview, we briefly discuss the technical tools we develop to turn the above heuristic calculation into a full proof. We note that the actual parameters become rather delicate. For readers familiar with the literature on random CSP refutation (our setting resembles semirandom XOR refutation with complicated correlations in the right hand sides), this is similar to the analysis getting rather delicate when dealing with XOR instances with super-constant arity.

\parhead{Setting up the Kikuchi matrix.} The instances produced by forming $r$-chains yield XOR instances of (odd) arity $2r+1$. We build a different Kikuchi matrix by first applying the ``Cauchy--Schwarz'' trick --- a standard idea in CSP refutation also utilized in~\cite{AlrabiahGKM23}. In our case, the XOR instance produced after this trick corresponds to constraints formed by joining two $r$-chains at their ``tails'' whenever the tails match. We choose a variant of the Kikuchi matrix for the ``Cauchy--Schwarzed instance'' except for the key difference that it is indexed by $2r$-tuples of sets of size $\ell$ (instead of a single set of size $\ell$) in the sketch above. This choice is crucial in the analysis of row pruning, in particular, as we discuss below, in obtaining bounds that significantly beat those obtained by a blackbox application of low-degree polynomial concentration~\cite{KimV00}, see below. 

\parhead{Regularity decomposition.} If $H_1, H_2, \ldots H_n$ are such that no pair of variables appears in more than one hyperedge (``no heavy pairs'') across all the $H_i$'s, then it turns out that the resulting Kikuchi matrices satisfy approximate regularity after pruning a negligible fraction of rows. This no-heavy-pair property holds, e.g., if $H_i$'s are uniformly random and independent hypergraph matchings of size $\delta n$. 

However, when the $H_i$'s are arbitrary, and in particular when there are ``heavy pairs'' (i.e. pairs of variables that appear in $\gg \log n$ hyperedges across the $H_i$'s), the resulting Kikuchi matrices are \emph{far} from being approximately regular. Our key technical idea is a new \emph{decomposition} procedure that operates directly on the chains. Such a decomposition procedure  partitions the chains into $\sim r$ different groups such that each group admits a (different, appropriately defined) Kikuchi matrix that satisfies approximate regularity. Regularity decompositions were already used in early applications of the Kikuchi matrix method for proving hypergraph Moore bound and smoothed CSP refutation~\cite{GuruswamiKM22,HsiehKM23}. However, our notion of regularity is (necessarily) significantly weaker (we call it ``contiguously regular'' partitioning) that, unlike~\cite{GuruswamiKM22}, does not ``by design'' ensure approximate regularity of the Kikuchi matrices after removing only a negligible fraction of rows. Instead, our argument for approximate regularity relies on combining the guarantees of the decomposition with (1) an appropriate choice of Kikuchi matrix for each piece in the partition, and (2) the structure in the chains arising by virtue of $H_i$'s being matchings.

\parhead{Polynomial concentration: bounding expected derivatives} Our main technical step (the subject of~\cref{sec:rowpruning}) is proving that our weak notion of regularity combined with the fact that $H_i$'s are matchings is enough to control expected partial derivatives of the ``degree-polynomial'' that computes the degrees of nodes in the Kikuchi graph. 

We note that off-the-shelf low-degree polynomial concentration inequalities (e.g., the Kim--Vu inequality~\cite{KimV00} or the related inequality of Schudy and Sviridenko~\cite{SchudyS12}) lose an exponential factor in the degree of the polynomial in the tail bound. This exponential factor is too costly for us as the arity of our constraints, and thus the degree of the polynomial, is $O(\log n)$ that eventually restricts us to only a quasi-polynomial instead of an exponential lower bound on $3$-LCCs. Instead, we induce a special ``partite'' structure (i.e., there exists a partition of the variables so that the degree of the polynomial is $1$ when restricted to any single piece in the partition) in the polynomial by setting up our Kikuchi matrix to be indexed by tuples of sets (instead of a single set). For such partite polynomials, we prove an analog\footnote{We did not find a reference to a known result so we include a proof in \cref{lem:partitepolyconc}.} of the Kim--Vu inequality for partite that gives sharper bounds when its expected partial derivatives decay appropriately. 

We note that the analysis of the expected partial derivatives of the ``degree polynomial'' (which we use to prove approximate regularity) and the interplay of these bounds with our decomposition of chains is the key technical part (and the focus of \cref{sec:rowpruning}) of our proof. In order to illustrate this technical part in a ``base'' case that still captures some of the complications, we present the case of $2$-chains as a warmup in the next section.

\section{Warmup: An $n \geq \tilde{\Omega}(k^4)$ Lower Bound via $2$-Chains} \label{sec:warmup}

In this section, we give a detailed sketch of the proof of the following theorem, which is a weaker version of our main result. Notice that this theorem already improves the best known $3$-LCC lower bound~\cite{AlrabiahGKM23} by a polynomial factor in $k$.

\begin{theorem}[Weak version of \cref{mthm:main}] \label{thm:overview-2-chains}
Let $\Code:\Fits^k \rightarrow \Fits^n$ be a $(3,\delta)$-LCC in normal form with $\delta = O(1)$. Then, $n \geq \tilde{\Omega}(k^4)$. 
\end{theorem}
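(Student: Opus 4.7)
My plan is to follow the Kikuchi matrix schema of \cref{sec:techniques}, but applied to the XOR instance obtained from $2$-chain derivations rather than to the naive $3$-XOR instance used in \cite{AlrabiahGKM23}. Using the normal form (\cref{def:normalLCC}), for every $i \in [k]$ I form the collection $\cH_i^{(2)}$ of tuples $(C, w, C', w')$ with $C \cup \{w\} \in H_i$ and $C' \cup \{w'\} \in H_w$; this yields $\Theta(\delta^2 n^2)$ arity-$5$ XOR constraints $x_C x_{C'} x_{w'} = b_i$, each satisfied by every codeword $x=\Code(b)$. Writing $\Phi_b(x) = \sum_{i \in [k]} \sum_{\tau \in \cH_i^{(2)}} b_i x_{S(\tau)}$, every codeword gives $\val(\Phi_b) \geq \Theta(k \delta^2 n^2)$, so any matching upper bound on $\val(\Phi_b)$ that holds with positive probability over uniform $b \in \Fits^k$ yields the desired $k \lessapprox n^{1/4}$.

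\textbf{Cauchy--Schwarz and tuple-indexed Kikuchi matrices.} Since arity $5$ is odd, I would first apply a Cauchy--Schwarz step peeling off the distinguished tail variable $w'$ of each chain to obtain an even-arity instance whose value upper-bounds $\Phi_b(x)^2 / O(n)$. I would then define, in the spirit of \cref{def:basickikuchi}, a Kikuchi matrix $A_i$ for each $i$, but indexed by $t$-tuples $(S_1, \dots, S_t) \in \binom{[n]}{\ell}^t$ of $\ell$-sets rather than by a single $\ell$-set. The purpose of this \emph{partite} indexing is to make the per-vertex degree polynomials $\deg_{A_i}(S_1, \dots, S_t)$ satisfy the hypothesis of \cref{lem:partitepolyconc}, avoiding the $2^{O(\ell)}$ loss of a direct Kim--Vu-type application. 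With $A = \sum_i b_i A_i$, matrix Khintchine (\cref{fact:matrixkhintchine}) gives $\E\norm{A}_2 \lessapprox \Delta \sqrt{k}$ where $\Delta$ is the max degree of any $A_i$; combining with the Kikuchi identity $(x^{\circ \ell})^{\top} A \, x^{\circ \ell} \propto \Phi_b(x)$ and balancing at $\ell \sim n^{1/4}$ yields $k \lessapprox n^{1/4}$, \emph{provided} $\Delta$ is within polylogs of the average degree.

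\textbf{Main obstacle: approximate regularity via row pruning.} The technical crux of the argument, and the step where I expect the difficulty to concentrate, is justifying this $\Delta$-vs-average-degree claim --- equivalently, showing that each $A_i$ is approximately regular after deleting an $n^{-\omega(1)}$ fraction of rows. I would pass to $\boolnorm{\cdot}$ (which is stable under dropping a small fraction of rows) in place of $\norm{\cdot}_2$, and prove approximate regularity by viewing $\deg_{A_i}(S_1, \dots, S_t)$ as a $t$-partite multilinear polynomial in the indicator variables of the $S_j$'s and invoking \cref{lem:partitepolyconc}. Verifying its hypothesis reduces to counting, for every fixed vertex subset $Z$, the number of $2$-chains with head $i$ containing $Z$; the matching property of each $H_u$ (at most one hyperedge per pair) delivers the required decay of expected partial derivatives in the typical case.

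\textbf{Heavy pairs and a regularity decomposition.} The one case where the matching property does not by itself suffice is when a pair of variables lies in hyperedges across many distinct $H_u$'s --- a ``heavy pair''. I would handle these by partitioning $\cH_i^{(2)}$ into $O(\log n)$ buckets according to the heaviness of the bridging pair $(w, w')$, building a separate Kikuchi matrix for each bucket, and arguing that the heavy-pair buckets contribute only a vanishing fraction of the total mass. The global input is that $\sum_u \abs{H_u} = \delta n^2$, which by a Markov-type count caps the number of pairs of heaviness at least $\tau$ by $O(\delta n^2 / \tau)$; the heavy buckets thus contain few chains, while the light buckets have bounded heaviness and are amenable to \cref{lem:partitepolyconc}. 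Tuning the bucket thresholds so that each contributes $\lessapprox k \delta^2 n^2 / \log n$ to $\val(\Phi_b)$ with high probability over $b$, and union-bounding over the $O(\log n)$ buckets, completes the proof of \cref{thm:overview-2-chains}.
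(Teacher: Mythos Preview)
Your overall plan is close to the paper's, but there is a genuine gap in your handling of heavy pairs. You propose to bucket chains by the heaviness of a bridging pair and argue that the heavy buckets contain few chains via a Markov bound on the \emph{number} of heavy pairs. This inference fails: Markov gives at most $O(\delta n^2/\tau)$ pairs of heaviness $\geq \tau$, but since each such pair lies in at least $\tau$ hyperedges, the total heavy-pair incidence $\sum_{\text{heavy }p} h_p$ can still be $\Theta(\delta n^2)$ --- the entire pair-incidence mass. In that regime a constant fraction (conceivably all) of the second-link hyperedges contain a heavy pair, so a constant fraction of the $2$-chains land in heavy buckets and you have nothing left to refute. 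The paper's warmup (\cref{sec:2chaindecomp}) does \emph{not} try to discard heavy-pair chains; instead it builds a \emph{different} Kikuchi matrix for them. Concretely, one decomposes the $1$-chains by greedily peeling off groups of exactly $d=\ell^2$ chains sharing a heavy pair $Q=(Q_1,Q_2)$ with $Q_1\in C$ and $Q_2=w$; for $2$-chains whose second link lands in such a piece, the Cauchy--Schwarz step cancels the \emph{pair} $Q$ rather than just the tail, yielding arity-$6$ (not arity-$8$) constraints. The lower arity is precisely what makes the expected-partial-derivative bounds go through for the heavy piece (the mean $\mu_0$ now dominates all $\mu_Z$), so the spectral refutation still delivers $k \lessapprox \ell$ for that piece --- regardless of how the mass is split between the light and heavy pieces.

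Two smaller points. After Cauchy--Schwarz the right-hand sides are $b_i b_j$, not $b_i$, so the correct Kikuchi setup is $A = \sum_{(i,j)} b_i b_j A_{i,j}$; to apply \cref{fact:matrixkhintchine} with independent signs the paper further restricts to a maximum matching $M$ on $[k]$ (this costs a factor of $k$ and is why one lands at $n \geq \tilde{\Omega}(k^4)$ rather than the $k^5$ suggested by the naive density heuristic). Also, the relevant heavy pair is not $(w,w')$ but a pair \emph{inside} the second hyperedge, namely $(Q_1,Q_2)$ with $Q_1 \in C'$ and $Q_2 = w'$.
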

The theorem above obtains a lower bound of $n \gtrapprox k^4$ --- worse than the bound of $n \gtrapprox k^5$ predicted by the heuristic but still beating $n \gtrapprox k^3$ from \cite{AlrabiahGKM23}; we discuss the reason that we do not match the heuristic in~\cref{rem:losing-on-the-heuristic}.

\begin{proof}
As before, we have $3$-uniform hypergraph matchings $H_1, \dots, H_n$, where for any $u \in [n]$ and $C \in H_u$, we have that for any $b \in \Fits^k$, $x = \Code(b)$ satisfies $x_C = x_u$. Following \cref{sec:chainheuristic}, we shall let $\cH_i^{(2)}$ denote the set of $2$-chains with head $i$. We define the $5$-XOR instance $\Phi_b(x)$ as
\begin{flalign*}
\Phi_b(x) \coloneqq \sum_{i = 1}^k b_i \sum_{\vec{C} = (i, C_0, w_0, C_1, w_1) \in \cH_i^{(2)}} x_{C_0} x_{C_1} x_{w_1} \enspace.
\end{flalign*}
We note that $\val(\Phi_b) = k (3 \delta n)^2$ for any $b \in \Fits^k$, as the instance is satisfiable and has $k (3 \delta n)^2$ constraints in total.
Following the strategy in \cref{sec:evenqLDC}, we shall use spectral refutation via Kikuchi matrices to bound $\val(\Phi_b)$ with high probability for a random $b \in \Fits^k$.

\subsection{Step 1: the Cauchy--Schwarz trick}
\label{sec:warmupstep1}
As we have observed, the basic Kikuchi matrices in \cref{def:basickikuchi} are only defined for constraints of even arity, but the constraints in $\cH_i^{(2)}$ have arity $5$, i.e., odd arity.
The standard way to handle odd arity XOR instances is to use the ``Cauchy--Schwarz trick'', which produces even arity instances as follows. 
Let $\vec{C} \in \cH^{(2)}_i$ and $\vec{C'} \in \cH^{(2)}_j$ for $i \ne j \in [k]$ be two constraints in our initial $5$-XOR instance, where $\vec{C} = (i, C_0, w_0, C_1, w_1)$ and $\vec{C'} = (j, C'_0, w'_0, C'_1, w'_1)$ where $w_1 = w'_1$, i.e., the last element of both chains is the same. From this pair, we can ``cancel'' $w_1 = w'_1$, producing the derived constraint $x_{C_0} x_{C_1} x_{C'_0} x_{C'_1} = b_i b_j$, which has arity $8$. We do this for all pairs of chains with the same ``tail'' vertex $w$. We note that this process produces at least $(k (3 \delta n)^2)^2/n \sim k^2 n^3$ constraints.

We now define the following ``Cauchy--Schwarzed instance'' polynomial:
\begin{flalign*}
f_b(x) = \sum_{i \ne j \in [k]} b_i b_j \sum_{w \in [n]} \sum_{\vec{C} \in \cH^{(2)}_i, \vec{C'} \in \cH^{(2)}_j : w_1 = w'_1 = w} x_{C_0} x_{C_1} x_{C'_0} x_{C'_1} \enspace.
\end{flalign*}
The phrase ``Cauchy--Schwarz trick'' refers to the fact that one can show $k^2 n^4 \sim \Phi_b(x)^2 \leq n \cdot f_b(x) + o(k^2 n^4)$ via a simple application of the Cauchy--Schwarz inequality and a bound on the ``diagonal terms'' where $i = j$. This reduces the task to bounding the cross-term polynomial $f_b$. 

We now observe that the ``right-hand sides'' of the constraints in $f_b$ are no longer independent, as they are of the form $b_i b_j$ for $i \ne j \in [k]$, and this will cause an issue ``downstream'' when we apply matrix concentration bounds, as the matrices will not be independent. To recover independence, we consider the polynomial $f_{M, b}(x)$ defined for a (directed) matching $M$ on $[k]$:
\begin{flalign*}
f_{M,b}(x) = \sum_{(i,j) \in M} b_i b_j \sum_{w \in [n]} \sum_{\vec{C} \in \cH^{(2)}_i, \vec{C'} \in \cH^{(2)}_j : w_1 = w'_1 = w} x_{C_0} x_{C_1} x_{C'_0} x_{C'_1} \enspace.
\end{flalign*}

Because we now sum over a matching, we have that $b_i b_j$ and $b_{i'} b_{j'}$ are independent for different directed edges $(i,j)$ and $(i', j')$ in $M$. And, we can easily relate $f_b$ and $f_{M,b}$, as $f_b(x) = 2(k-1) \E_{M} f_{M,b}(x)$ when $k$ is even, and $f_b(x) = 2k \E_{M} f_{M,b}(x)$ when $k$ is odd, where the expectation is over a maximum matching $M$. This is because the chance that $M$ contains a directed edge $(i,j)$ is $\frac{1}{2(k-1)}$ if $k$ is even and $\frac{1}{2k}$ if $k$ is odd. In particular, there exists a maximum matching $M$ such that $\val(f_{M,b}) \geq \frac{2}{k} \val(f_b) \sim k n^3$.
\begin{remark} \label{rem:losing-on-the-heuristic}
Restricting to a matching $M$ loses a factor of $k$ in the number of constraints. This leads to a factor $k$ ``loss'' in the density of the corresponding Kikuchi matrix and is the main reason why we obtain weaker bound of $n \geq \tilde{O}(k^4)$ instead of $k^5$ suggested by our heuristic calculation in \cref{sec:chainheuristic}. A better bound could be obtained by instead following the setup in~\cite{AlrabiahGKM23}, where they split $[k]$ randomly into a left and right set $L$ and $R$ and only consider constraints where $i \in L$ and $j \in R$ (thereby losing only $\sim 1/2$ of the constraints instead of a factor $k$). This careful setup is necessary in~\cite{AlrabiahGKM23} for their goal of obtaining a cubic (as opposed to the known quadratic) bound, but this makes the ``row pruning'' step (i.e., arguing approximate regularity of Kikuchi graphs after removing a negligible fraction of constraints) significantly more challenging. In our case, the effect of this loss on the final lower bound diminishes as the length of the chain $r$ grows and when $r \sim \log n$, disappears asymptotically, and so we pick a matching $M$ to make the row pruning easier.
\end{remark}

\subsection{Step 2: spectral refutation via Kikuchi matrices}
\label{sec:warmupstep2}
Let us now bound $\val(f_{M,b})$ (with high probability over $b \in \Fits^k$) for any maximum matching $M$. We introduce our Kikuchi matrices:
\begin{definition}
\label{def:2chainkikuchi}
For $i \ne j \in [k]$ and $\vec{C} = (i, C_0, w_0, C_1, w_1)$ and $\vec{C'} = (j, C'_0, w'_0, C'_1, w'_1)$ with $w_1 = w'_1$, we define the matrix $A_{i,j}^{(\vec{C}, \vec{C'})}$ as follows. The rows/columns of the matrix $A_{i,j}^{(\vec{C}, \vec{C'})}$ are indexed by a $4$-tuple of sets $(S_0, S_1, S'_0, S'_1)$, each in ${[n] \choose \ell}$, and the $((S_0, S_1, S'_0, S'_1), (T_0, T_1, T'_0, T'_1))$-th entry is $1$ if $S_0 \oplus T_0 = C_0$, $S_1 \oplus T_1 = C_1$, $S'_0 \oplus T'_0 = C'_0$, $S'_1 \oplus T'_1 = C'_1$, and is $0$ otherwise.

We let $A_{i,j} = \sum_{\vec{C} \in \cH^{(2)}_i, \vec{C'} \in \cH^{(2)}_j : w_1 = w'_1} A_{i,j}^{(\vec{C}, \vec{C'})}$ and $A = \sum_{(i,j) \in M} b_i b_j A_{i,j}$.
\end{definition}
We now observe that each matrix $A_{i,j}^{(\vec{C}, \vec{C'})}$ has exactly $D^{4}$ nonzero entries, where $D = 2 \cdot {n - 2 \choose \ell - 1}$, and the matrix has $N^4$ rows/columns, where $N = {n \choose \ell}$. We note that $D/N \sim \ell/n$, and so the average number of nonzero entries per row (or column), i.e., the density, is $(D/N)^4 \sim (\ell/n)^4 = (\ell/n)^{q/2}$, as the arity of the constraints is $8$.

We also observe that for any $x \in \Fits^n$, $D^4 f_{M,b}(x) = {x'}^{\top} A x'$, where $x'$ is the vector with $(S_0, S_1, S'_0, S'_1)$-th entry equal to $\prod_{v \in S_0} x_v \prod_{v \in S_1} x_v \prod_{v \in S'_0} x_v \prod_{v \in S'_1} x_v$. We thus have that
\begin{flalign*}
k n^3 \cdot D^4 \leq D^4 \cdot \val(f_{M,b}) \leq \boolnorm{A} \leq N^4 \norm{A}_2 \enspace.
\end{flalign*}
For any $i \ne j$, the matrix $A_{i,j}$ has density $\sim m_{i,j} (D/N)^4 \sim (\ell/n)^4$, where $m_{i,j}$ is the number of the constraints in $f_{b}$ with right-hand side $b_i b_j$.
Let us now argue that each $m_{i,j}$ is at most $O(n^3)$. Indeed, $m_{i,j}$ is the number of pairs of $2$-chains $(i, C_0, w_0, C_1, w_1) \in \cH_i^{(2)}$ and $(j, C'_0, w'_0, C'_1, w'_1) \in \cH_j^{(2)}$ where $w_1 = w'_1$. To show that $m_{i,j} \leq O(n^3)$, we pick $w_0, w_1$ and $w'_0$, for a total of $n^3$ choices, and observe that this completely determines both chains. Indeed, because $H_i$ is a matching, there is at most one constraint $C$ in $H_i$ that contains $w_0$, and then $C_0$ must be $C \setminus \{w\}$. This similarly shows that we have at most one choice of $C_1$ and also $C'_0$. Finally, because $w'_1 = w_1$, and we know $w_1$, we thus know $w'_1$ as well, which by similar reasoning gives us at most one choice for $C'_1$, and we have determined the entire chain. We note that we have a lower bound of $\sim k n^3$ on the total number of constraints $\sum_{(i, j) \in M} m_{i,j}$, so this calculation also shows that no $m_{i,j}$ can be much larger than the average.

Returning to the density calculation, we have shown that $A_{i,j}$ has density at most $n^3 (\ell/n)^4 = \ell^4/n$. Again, following the blueprint in \cref{sec:evenqLDC}, we will set $\ell = n^{1/4} \cdot \polylog(n)$, and we want to show that the matrices $A_{i,j}$ satisfy the approximate regularity condition, i.e., the number of rows/columns with more than $\Delta = \ell^4 \cdot \polylog(n)/n$ nonzero entries is at most $N^4/\poly(n)$. Let us finish the proof, assuming that this holds.

\parhead{Proof assuming approximate regularity.} Let $\cB$ denote the set of rows/columns that are ``bad'' for some pair $(i,j)$, i.e., the matrix $A_{i,j}$ has more than $\Delta$ nonzero entries in that row. Let $B_{i,j}$ be the matrix where the rows and columns in $\cB$ have been all set to $0$. Let $B = \sum_{(i,j) \in M} b_i b_j B_{i,j}$. We have that $B$ is the sum of mean $0$ independent matrices, each with spectral norm $\norm{B_{i,j}}_2 \leq \Delta$. Therefore, by matrix Khintchine (\cref{fact:matrixkhintchine}), we have that with high probability over $b$, $\norm{B}_2 \leq O(\Delta \sqrt{k \log (N^4)}) = O(\Delta \sqrt{k \ell \log n})$.

Now, we observe that $\boolnorm{A - B} \leq o(N)$. This is because the number of nonzero entries that we have removed from $A$ to produce $B$ is at most $k \cdot n^3 \cdot N^4/\poly(n) = o(N^4)$ (there are $k$ edges $(i,j)$ in the matching $M$, each has $m_{i,j} \leq n^3$ constraints, and each row of $A_{i,j}$ has at most $m_{i,j} \leq n^3$ nonzero entries) provided that the $\poly(n)$ factor is large enough. We thus conclude that
\begin{flalign*}
k n^3 \cdot D^4 \leq D^4 \cdot \val(f_{M,b}) \leq \boolnorm{A - B} + N^4 \norm{B}_2 \leq o(N^4) + N^4 O(\Delta \sqrt{k \ell \log n}) \enspace.
\end{flalign*}
Substituting the value for $\Delta$ and rearranging, we conclude that $k \leq \ell \cdot \polylog(n) \leq \tilde{O}(n^{1/4})$.

We remark that \cref{sec:warmupstep1,sec:warmupstep2} are fairly mechanical, and they justify the use of the heuristic calculation. The place where we had ``freedom'' is in the choice of constraints to use in the initial XOR instance, which we chose to be the $2$-chains $\cH^{(2)}_i$. It thus remains to bound the number of bad rows $\cB$. This ``row pruning'' step is key to converting the heuristic into a full proof.

\subsection{Step 3: row pruning, the key technical step}
\label{sec:warmupstep3}
We want to understand if, after dropping a $1/\poly(n)$ fraction of the rows, every Kikuchi graph $A_{i,j}$ satisfies approximate regularity. This is equivalent to showing that for every matrix $A_{i,j}$, with probability at least $1-1/\poly(n)$ a uniformly random row $(S_0, S_1, S'_0, S'_1)$, has at most $\Delta$ nonzero entries in $A_{i,j}$ for $\Delta = \ell^4 \cdot \polylog(n)/n = \Delta_{avg} \polylog(n)$. 

\parhead{The heavy pair degree.} We now make a key observation. Whether the above approximate regularity property holds for a given collection of matchings $H_1, H_2,\ldots, H_n$ is governed by a single parameter that we call the \emph{heavy pair degree} $d$. This is the maximum, over all pairs $\{v,v'\} \subseteq [n]$, of the number of hyperedges across the $H_i$'s that contain $\{v,v'\}$. We will prove that if $d$ is small enough then approximate regularity holds for every $A_{i,j}$ after dropping a $1/\poly(n)$-fraction of rows. When $d$ is large, this property will not hold for the $A_{i,j}$'s from \cref{def:2chainkikuchi}. Instead, we will define a \emph{different} collection of Kikuchi matrices that have high density and for which row pruning succeeds. 

\begin{lemma}[Row pruning for $2$-chains with no heavy pairs]
\label{lem:2chainrowpruning}
Let $H_1, \dots, H_n$ be $3$-uniform hypergraph matchings of size $\delta n$, and let $d$ be the maximum, over all pairs $\{v,v'\}$ of vertices, of the number of pairs $(u,C)$ with $u \in [n]$ and $C \in H_u$ where $\{v,v'\} \subseteq C$. Fix $i \ne j \in [k]$, and let $A_{i,j}$ be the matrix defined in \cref{def:2chainkikuchi} at level $\ell \in \N$.

Suppose that $d \leq \ell^2$. Then, the number of rows $(S_0, S_1, S'_0, S'_1)$ of $A_{i,j}$ with more than $\Delta = \ell^4 \cdot \polylog(n)/n$ nonzero entries is at most $N^4/\poly(n)$.
\end{lemma}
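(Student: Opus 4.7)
The plan is to bound the row-degree of $A_{i,j}$ at a random row via the partite polynomial concentration inequality (\cref{lem:partitepolyconc}), and then convert this into a bound on the number of bad rows via Markov's inequality. First, I would model each $S_h$ by an independent $p$-biased Bernoulli vector $z^{(h)} \in \{0,1\}^n$ with $p = \ell/n$; passing from the uniform $\ell$-set model to the Bernoulli model loses only an $O(\sqrt{n})$ factor that is absorbed into $\poly(n)$. The entry $A_{i,j}^{(\vec C, \vec C')}$ contributes to the row degree iff $|S_h \cap C_h| = 1$ for each of the four positions $h \in \{0,1,0',1'\}$, and bounding each such indicator by $\sum_{v \in C_h} z^{(h)}_v$ yields the upper bound
\begin{equation*}
D(S_0,S_1,S'_0,S'_1) \;\le\; P(z) \;:=\; \sum_{(\vec C, \vec C') : w_1 = w'_1}\ \Big(\sum_{v_0 \in C_0} z^{(0)}_{v_0}\Big)\Big(\sum_{v_1 \in C_1} z^{(1)}_{v_1}\Big)\Big(\sum_{v'_0 \in C'_0} z'^{(0)}_{v'_0}\Big)\Big(\sum_{v'_1 \in C'_1} z'^{(1)}_{v'_1}\Big),
\end{equation*}
which is $4$-partite multilinear of degree $4$ with nonnegative coefficients. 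Using the bound $m_{i,j} = O(n^3)$ established in \cref{sec:warmupstep2}, $\mu := \E[P] = O(\ell^4/n)$, which matches the target average degree $\Delta/\polylog(n)$.

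The next step is to verify the hypothesis of \cref{lem:partitepolyconc} by bounding the expected partial derivatives $\mu_Z$ for every $Z \in ([n] \cup \{\star\})^4$. Writing $\nu_Z$ for the number of chain-pairs compatible with the fixed coordinates of $Z$, linearity of expectation gives $\mu_Z = \Theta(\nu_Z \cdot (\ell/n)^{4-|Z|})$. A routine case analysis, using the matching property of each $H_u$ repeatedly to propagate constraints through the chains, shows $\nu_Z \le O(n^{3-|Z|})$, and hence $\mu_Z \le O(\mu)$, in every case \emph{except} $Z = (\star, v_1, \star, v'_1)$, where the fixed vertices occupy the two ``mid'' positions ($v_1 \in C_1$ and $v'_1 \in C'_1$) of their respective chains. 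In this exceptional case the chain-pair count factorizes along the common tail $w^* = w_1 = w'_1$:
\begin{equation*}
\nu_{(\star, v_1, \star, v'_1)} \;\le\; O\Big(\sum_{w^* \in [n]} d(\{v_1, w^*\})\, d(\{v'_1, w^*\})\Big) \;\le\; O\Big(d \cdot \sum_{w^*} d(\{v'_1, w^*\})\Big) \;\le\; O(d n),
\end{equation*}
using $\max_{w^*} d(\{v_1, w^*\}) \le d$ and $\sum_{w^*} d(\{v'_1, w^*\}) = 2\, \#\{(u,C) : v'_1 \in C \in H_u\} = O(n)$, the latter being a direct consequence of each $H_u$ being a matching. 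Under the hypothesis $d \le \ell^2$, this gives $\nu_Z \le O(\ell^2 n)$, so $\mu_Z \le O(\ell^4/n) = O(\mu)$, and the hypothesis of \cref{lem:partitepolyconc} holds with $\mu = O(\ell^4/n)$ and $\gamma = O(1)$.

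Finally, applying \cref{lem:partitepolyconc} with $\beta = \polylog(n)$ sufficiently large yields $(1+\beta)^4 \mu = \polylog(n) \cdot \ell^4/n = \Delta$, while the tail probability $4(n+1)^4 \exp(-\Omega(\beta))$ is at most $1/\poly(n)$. Markov's inequality then converts this tail bound over a random row into the desired bound on the total number of bad rows. The hard part will be handling the exceptional case $Z = (\star, v_1, \star, v'_1)$: without the hypothesis $d \le \ell^2$, this partial derivative can be much larger than $\mu$, forcing $\gamma \gg 1$ and a qualitatively weaker concentration bound that fails to establish $\Delta = \tilde O(\ell^4/n)$. This is precisely what motivates the regularity decomposition developed later in the paper, which reduces the general case to the low-heavy-pair regime treated by this lemma.
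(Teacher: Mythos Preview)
Your approach is essentially identical to the paper's: define the partite degree polynomial, pass to the $p$-biased Bernoulli model, bound the expected partial derivatives $\mu_Z$, and apply \cref{lem:partitepolyconc}. Your treatment of the exceptional $|Z|=2$ case $Z = (\star, v_1, \star, v'_1)$ matches the paper's bound of $\nu_Z \le O(nd)$.

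There is one small gap. Your claim that $\nu_Z \le O(n^{3-|Z|})$ holds in every case except $Z = (\star, v_1, \star, v'_1)$ is not correct at $|Z| = 3$. Take $Z = (z_0, z_1, \star, z'_1)$: fixing $z_0, z_1$ pins down $w_0$ and then $w_1 = w'_1$, but to recover $\vec{C'}$ you only know that the hyperedge $C'_1 \cup \{w'_1\}$ contains the pair $\{z'_1, w'_1\}$, which gives $d$ choices for $(w'_0, C'_1)$ rather than $O(1)$; the mirror case $(\star, z_1, z'_0, z'_1)$ is analogous. The paper explicitly flags this (``For $|Z| = 3$, a similar issue arises and gives a bound of $\mu_3 \le O(\ell d/n)$''). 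Under the hypothesis $d \le \ell^2$ this still yields $\mu_Z \le (\ell/n)\cdot O(d) \le O(\ell^3/n) \le O(\mu)$, so your conclusion survives; you just need to include these $|Z|=3$ cases among the exceptions rather than claiming the generic bound $\nu_Z \le O(1)$ there.
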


We note that if the matchings $H_1, \dots, H_n$ are \emph{random}, then we have $d \leq \polylog(n)$ with high probability, and so random matchings satisfy the ``small heavy-pair degree'' assumption with high probability. We can thus think of $d \leq \polylog(n)$ as a pseudorandom property of a collection $H_1, \dots, H_n$ of matchings. We now sketch a proof of \cref{lem:2chainrowpruning}. 

\parhead{The degree polynomial and its partial derivatives.} As the first step in the proof of \cref{lem:2chainrowpruning}, we define a degree $4$ polynomial $\Deg_{i,j}\colon\Bits^{4n} \to \N$, where we think of the $4n$ variables as split into $4$ groups of $n$ variables $s^{(0)}, s^{(1)}, s'^{(0)}, s'^{(1)}$, which are indicator variables of the $4$ sets $S_0, S_1, S'_0, S'_1$, respectively. This polynomial $\Deg_{i,j}(s^{(0)}, s^{(1)}, s'^{(0)}, s'^{(1)})$ upper bounds the number of nonzero entries in the $(S_0, S_1, S'_0, S'_1)$-th row in the matrix $A_{i,j}$ in \cref{def:2chainkikuchi}.

Formally, let $\cT_{i,j}$ denote the (multi)-set of $4$-tuples $(u_0, u_1, v_0, v_1)$ such that there exists $\vec{C} = (i, C_0, w_0, C_1, w_1) \in \cH^{(2)}_{i}$ and $\vec{C'} = (j, C'_0, w'_0, C'_1, w'_1) \in \cH^{(2)}_{j}$ with $w_1 = w'_1$ such that $u_0 \in C_0, u_1 \in C_1, v_0 \in C'_0, v_1 \in C'_1$; if there are multiple such pairs $(\vec{C}, \vec{C'})$ that produce the same $(u_0, u_1, v_0, v_1)$, then we add this tuple multiple times. 
Then, we set
\begin{equation*}
\Deg_{i,j}(s^{(0)}, s^{(1)}, s'^{(0)}, s'^{(1)}) \coloneqq \sum_{(u_0, u_1, v_0, v_1) \in \cT_{i,j}} s^{(0)}_{u_0} s^{(1)}_{u_1} s'^{(0)}_{v_0} s'^{(1)}_{v_1}\enspace.
\end{equation*}
Note that $\Deg_{i,j}$ is a polynomial with non-negative coefficients. We are interested in the probability that $\Deg_{i,j}$, on uniform draws of $4$-tuples of $\ell$-size sets, takes a value that deviates from its expectation $\mu$ by some multiplicative factor. It turns out (see \cref{lem:coupling}) that we can pass on to independent $p$-biased product distribution on $\Bits^{4n}$ for $p \sim \ell/n$ without much loss. This is helpful because the tail behavior of low-degree polynomials with non-negative coefficients on product distributions is determined by a bound on its expected partial derivatives. Namely, variants of the Kim-Vu inequality (see~\cref{lem:partitepolyconc}) show the following: \emph{if the expectation of every partial derivative of $\Deg_{i,j}$ is at most $\mu$, then $\Deg_{i,j}(S_0, S_1, S'_0, S'_1) \leq O(\mu \log n)$ with probability at least $1-1/\poly(n)$}. 

Let us now examine the expected partial derivatives of $\Deg_{i,j}(s)$. We start by introducing notation to refer to them. Let $Z = (z_0, z_1, z'_0, z'_1) \in ([n] \cup \{\star\})^{4}$ be an ordered tuple of length $4$, with entries either in $n$ or set to $\star$, which we think of as an ``unfixed'' value. Then, $Z$ encodes partial derivatives with respect to any subset of variables that use at most one variable in each of the groups $s^{(0)}$, $s^{(1)}$, $s'^{(0)}$, $s'^{(1)}$. All other partial derivatives of $\Deg_{i,j}$ are $0$ since $\Deg_{i,j}$ has degree $1$ in each of the $4$ groups of variables (i.e., $\Deg_{i,j}$ is $4$-partite). We know that $\E[\Deg_{i,j}(s)] = \mu_{(\star, \star, \star, \star)} \leq 2^4 (\ell/n)^4 \cdot n^3 = O(1) \cdot \ell^4/n$; the factor of $2^4$ comes from the fact that each pair $(\vec{C}, \vec{C'})$ adds $2^4$ different tuples to $\cT_{i,j}$. Now, \cref{lem:partitepolyconc} implies that the chance that $\Deg_{i,j}$ takes a value larger than $\mu \cdot \polylog(n)$ is at most $1/\poly(n)$ if $\mu_Z \leq \mu$ for all $Z$.

\parhead{Computing expected partial derivatives} To help bound the expected partial derivatives $\mu_Z$, let us relate these parameters to combinatorial quantities of the hypergraphs $H_1, H_2,\ldots, H_n$. Notice that when we take partial derivatives with respect to some $Z$, the only monomials that ``survive'' are ones that ``contain'' $Z$, and furthermore the expectation of the partial derivative is simply $(\ell/n)^{\text{\# of $\star$ entries in $Z$}}$ times the number of such monomials. Formally, let $\deg_{i,j}(Z)$ be the number of pairs $(\vec{C}, \vec{C'}) \in \cH_i^{(2)} \times \cH_j^{(2)}$ where $w_1 = w'_1$ and $z_0 \in C_0, z_1 \in C_1, z'_0 \in C'_0, z'_1 \in C'_1$, where for the symbol $\star$, we say that $\star \in C$ always holds --- we say that such a pair $(\vec{C}, \vec{C'})$ \emph{contains} $Z$. Then, the expected partial derivative at $Z$ is $\mu_Z = 2^{4 - \abs{Z}}(\ell/n)^{4 - \abs{Z}} \deg_{i,j}(Z)$, where $\abs{Z}$ is the number of non-$\star$ entries in $Z$.\footnote{The extra factor of $2^{4 - \abs{Z}}$ comes from the fact that for every $Z$ and pair $(\vec{C}, \vec{C'})$ containing $Z$, the pair $(\vec{C}, \vec{C'})$ produces  $2^{4 - \abs{Z}}$ tuples $(u_0, u_1, v_0, v_1)$ in $\cT_{i,j}$ that contain $Z$. In this case, this is just a constant factor, so we can ignore it.} For example, $Z = (\star, \star, \star, \star)$ is contained in all such pairs of $2$-chains, and so $\deg_{i,j}(\star, \star, \star, \star) = m_{i,j} \leq O(n^3)$ and $\mu_Z = \mu = 16 (\ell/n)^4 m_{i,j}$. Let us use the shorthand $\mu_{t} = \max_{Z : \abs{Z} = t} \mu_Z$.

Let $Z$ be an arbitrary $4$-tuple with at least one non-$\star$ entry. As explained above, estimating $\mu_Z$ is, up to scaling, equivalent to counting $\deg_{i,j}(Z)$, the number of pairs $(\vec{C},\vec{C'})$ that contain $Z$. We next observe that if $Z$ has no $\star$ entries, then the number of $2$-chains $(\vec{C},\vec{C'})$ containing $Z$ is an absolute constant. This is because there is at most one constraint $C_0 \cup \{w_0\}$ that contains $z_0$ in $H_i$. Given this constraint, there are $2$ choices for $w_0$, as $w_0 \in C_0 \cup \{w_0\} \setminus \{z_0\}$. Given $w_0$, there is at most one constraint $C_1 \cup \{w_1\}$ in $H_1$ that contains $z_1$, and then at most $2$ choices for $w_1$. We can similarly use the knowledge of $(z_0',z_1')$ to bound the number of choices for $C_0',C_1'$. All in all, we have at most $16 = O(1)$ choices for the pair $(\vec{C},\vec{C'})$ given $Z$ with no $\star$ entries. This immediately shows that for $Z$ such that $\abs{Z} = 4$, $\mu_Z \leq O(1) \leq \mu$. 

Let us now deal with $Z$'s with at least one $\star$ entry by breaking up into cases depending on $\abs{Z}$. We will view the counting of $\deg_{i,j}(Z)$ as a procedure that makes a bounded number of choices to decode the pair $(\vec{C}, \vec{C'})$.

Let us deal with the case when $\abs{Z} = 1$. By swapping the roles of $i$ and $j$ if needed, without loss of generality we can assume that one of $z_0$ or $z_1$ is non-$\star$, and all other entries in $Z$ are $\star$. There are at most $n$ choices for $z_0$ (if $z_1 \ne \star$) or $z_1$ (if $z_0 \ne \star$). We now have $n$ choices for $z'_0$, which again determines $C'_0$ and $w'_0$ up to $2$ choices. We now observe that $(C'_1, w'_1)$ is uniquely determined. Indeed, this is because we know $w'_1$, as it equals $w_1$ (the two $2$-chains must have matching tails), and therefore this determines the hyperedge $C'_1 \cup \{w'_1\} \in H_{w'_0}$ uniquely. We have thus shown that for $Z$ with $\abs{Z} = 1$, we have $\deg_{i,j}(Z) \leq O(n^2)$, and so $\mu_Z \leq (\ell/n)^3 \cdot O(n^2) \leq O(\ell^3/n) \leq O(\ell^4/n)$.

Let us now handle the case when $\abs{Z} = 2$. Similar arguments as above show that $\Deg_{i,j}(Z) \leq O(n)$ holds for all $Z$ except when the non-$\star$ entries of $Z$ look like $Z = (\star, z_1, \star, z'_1)$ where $z_1, z'_1 \ne \star$, and thus $\mu_Z \leq (\ell/n)^2 \cdot O(n) \leq O(\ell^4/n)$ for these $Z$'s. To count $\deg_{i,j}(Z)$ for $Z = (\star, z_1, \star, z'_1)$ where $z_1, z'_1 \ne \star$, we pay a factor of $n$ to determine $z_0$, and then this determines (up to an $O(1)$ factor) $C_0$ and $C_1$ as well. Now, we know $w'_1$ (because it is equal to $w_1$) and $z'_1$ which is in $C'_1$. Thus, the hyperedge $C'_1 \cup \{w'_1\}$ must contain the pair $\{z'_1, w'_1\}$. Using the heavy pair degree, there are at most $d$ choices for the pair $(w'_0, C'_1 \cup \{w'_1\})$, and after learning $w'_0$ we also know $C'_0$. Hence, we have paid a total of $O(n d)$ choices, which implies that $\mu_2 \leq (\ell/n)^2 \cdot O(nd) = O(\ell^2 d/n)$. For $\abs{Z} = 3$, a similar issue arises and gives a bound of $\mu_3 \leq O(\ell d/n)$. 

We can now finish the proof of \cref{lem:2chainrowpruning}.
\begin{proof}[Proof of \cref{lem:2chainrowpruning}]
Notice that if $d \leq \ell^2$ then $\mu_t \leq \mu$ for every $t$. Applying \cref{lem:partitepolyconc} now yields that the probability that $\Deg_{i,j} > \mu \cdot \polylog(n)$ is at most $1/\poly(n)$. Taking a union bound on $k < n$ yields that the fraction of bad rows $|\cB|/N$ is at most $1/\poly(n)$, as desired. 
\end{proof}

\subsection{Step 4: hypergraph decomposition to handle large heavy pair degree}
\label{sec:2chaindecomp}
We will handle the case when the heavy pair degree is high by designing a \emph{different} Kikuchi matrix. To do this, we we will construct the cross term polynomial (obtained by applying the Cauchy--Schwarz inequality) slightly differently. Our current Kikuchi matrix is built from the XOR instance obtained by pairing up chains that agree on their tails and thus ``cancel'' (i.e., square out) one variable. When the heavy pair degree is large, we will build chains by cancelling a pair of variables instead. The number of pairs of chains that agree in a pair of variables instead of just their tails, i.e., the new number of ``Cauchy--Schwarzed'' constraints, will of course be smaller than before. On the other hand, since we cancel a pair of variables instead of just the tail, the arity of the resulting XOR instance will be smaller: $6$ instead of $8$. The punchline is that the density vs.\ arity trade-off (i.e., our key heuristic discussed in \cref{sec:chainheuristic}) breaks in our favor, \emph{provided that there are many ``heavy pairs''.}

To formally implement this argument, we \emph{decompose} the set of chains by ``labeling'' each chain by the heavy pair contained within, if one exists. Intuitively, this is the pair of variables in the chain that we intend to cancel in the Cauchy--Schwarz trick. If the chain does not contain any heavy pair, then we label it by its tail variable $w$, which we will cancel in the Cauchy--Schwarz trick as done before in \cref{sec:warmupstep1}. We let $\cH_{Q}$ denote the set of chains labeled by the heavy pair $Q$, and $\cH_w$ denote the set of chains labeled by the tail variable $w$. For technical reasons (that will become relevant when we do the row pruning argument for the different, yet-to-be-defined Kikuchi matrices), our decomposition will produce multiple pieces labeled by the \emph{same} heavy pair $Q$, i.e., $\cH_{Q,1}$, $\cH_{Q,2}$, etc., and for two chains labeled by the same $Q$, we shall only  cancel the pair $Q$ if these two chains lie within the \emph{same} piece $\cH_{Q,p}$.

Formally, our hypergraph decomposition is as follows. Given the collection $\cH^{(1)} = \{(u,C, w) : u \in [n], C \cup \{w\} \in H_u\}$ of $1$-chains, we perform the following greedy algorithm: if there exists an ordered pair $Q = (Q_1, Q_2)$ such that there are more than $d \coloneqq \ell^2$ $1$-chains $(u,C, w)$ in $\cH^{(1)}$ with $Q_1 \in C$ and $Q_2 = w$, i.e., $Q$ is a heavy pair contained in the chain $(u, C, w)$, then we choose an arbitrary set of \emph{exactly} $d$ such $1$-chains, remove them from $\cH^{(1)}$, and place them in a new ``partition'' $\cH_{Q, p}$; here, $p \in \N$ denotes the ``label'' of the partition, as we may be producing multiple partitions with the same $Q$, and so we will denote these different pieces of the partition by $\cH_{Q,1}$, $\cH_{Q,2}$, etc. Finally, if there is no such heavy pair $Q$, then we create partitions $\cH_{w}$ for each $w \in [n]$, and add all remaining $1$-chains with ``tail $w$'', i.e., $1$-chains of the form $(u,C,w)$, to $\cH_{w}$.

This decomposition has the following properties:
\begin{enumerate}[(1)]
\item $\cH^{(1)} = (\cup_{w} \cH_w) \bigcup (\cup_{(Q,p)} \cH_{Q,p})$ is a disjoint partition of $\cH^{(1)}$;
\item For each $Q = (Q_1, Q_2)$ and $p \in \N$, $\cH_{Q,p}$ is a set of $1$-chains that ``contain'' the tuple $Q$, i.e., each $(u,C,w)$ in $\cH_{Q,p}$ has $w = Q_2$ and $C \ni Q_1$;
\item For each $Q$ and $p \in \N$, $\abs{\cH_{Q, p}} = d$;
\item For each $w \in [n]$, there is only one partition $\cH_{w}$;
\item The total number of partitions $\cH_{Q,p}$ is at most $O(n^2/d)$, as there are at most $O(n^2)$ $1$-chains, and each $\cH_{Q,p}$ has exactly $d$ $1$-chains.
\end{enumerate}
We stress that the decomposition is only on $1$-chains, \emph{not} the set of $2$-chains $\cup_{i \in [k]} \cH^{(2)}_i$ that are the constraints in the XOR instance! At a high level, this is because, e.g., the $2$-chains in $\cH_{i}^{(2)}$ (or $\cH_j^{(2)}$) are formed by taking a $1$-chain and \emph{prepending} it with a hyperedge in $H_i$ (or $H_j$), and so ``first link'' in each $2$-chain is specific to the choice of $i \in [k]$, but the ``second link'' is an arbitrary $1$-chain, and so it is ``shared'' across the $\cH_{i}^{(2)}$'s in some informal sense.\footnote{For this reason, in \cref{sec:regular-partition}, the length of the chains defining the XOR constraints is $r+1$, but we only decompose length $r$ chains.} This property turns out to be important when it comes time to bound the expected partial derivatives.

Now, we define $\cH_{i, Q, p}^{(2)}$ to be the set of $2$-chains $(i, C_0, w_0, C_1, w_1)$ where the ``second link'' $(w_0, C_1, w_1)$ is in $\cH_{Q,p}$. Using the decomposition, we now define the following polynomials:
\begin{flalign*}
&\Phi_b(x) \coloneqq \sum_{i = 1}^k b_i \sum_{\vec{C} = (i, C_0, w_0, C_1, w_1) \in \cH_i^{(2)}} x_{C_0} x_{C_1} x_{w_1} \enspace,\\
&\Psi_{i, w}(x) \coloneqq \sum_{C_0, w_0 : C_0 \cup \{w_0\} \in H_i}  \sum_{(w_0, C_1, w_1) \in \cH_{w}} x_{C_0} x_{C_1} \enspace,\\
&\Psi_{i, Q, p}(x) \coloneqq \sum_{(i, C_0, w_0, C_1, w_1) \in \cH^{(2)}_{i, Q,p}} x_{C_0} x_{C_1 \setminus Q_1} \enspace, \\
&\Psi^{(0)}_{b}(x,y) \coloneqq \sum_{i = 1}^k \sum_{w \in [n]} b_i y_{w}\Psi_{i,w}(x) \enspace, \\
&\Psi^{(1)}_{b}(x,y) \coloneqq \sum_{i = 1}^k \sum_{(Q,p)} b_i y_{Q,p} \Psi_{i,Q, p}(x) \enspace,
\end{flalign*}
where above $y_{Q,p}$ and $y_w$ are new variables.
By definition, if we set $y_w = x_w$ and $y_{Q} = x_{Q_1} x_{Q_2}$, then we have that $\Phi_b(x) = \Psi^{(0)}(x,y) + \Psi^{(1)}(x,y)$. Indeed, all  we have done is partitioned the constraints into these two polynomials and removed the ``$x_{Q_1} x_{Q_2}$ term'' from each monomial, replacing it with the new variable $y_{Q,p}$.

We now refute the two polynomials $\Psi^{(0)}(x,y)$ and $\Psi^{(1)}(x,y)$ separately using the machinery in \cref{sec:warmupstep1,sec:warmupstep2,sec:warmupstep3}. In fact, \cref{sec:warmupstep1,sec:warmupstep2,sec:warmupstep3} immediately show that we can successfully refute the polynomial $\Psi^{(0)}(x,y)$. Indeed, the only issue that we encountered was in \cref{sec:warmupstep3}, where the row pruning failed if there was a pair $\{v,v'\}$ that appeared in more than $\ell^2$ $1$-chains in $\cH^{(1)}$. However, this cannot happen, as otherwise our decomposition algorithm would not have terminated.

It thus remains to handle the second polynomial, $\Psi^{(1)}(x,y)$. Applying the ``Cauchy--Schwarz trick'' of \cref{sec:warmupstep1}, we can reduce this to the case of bounding the polynomial:
\begin{flalign*}
f_{M,b}(x) = \sum_{(i,j) \in M} b_i b_j \sum_{(Q,p)} \Psi_{i,Q,p}(x) \Psi_{j,Q,p}(x) \enspace,
\end{flalign*}
where $M$ is a maximum matching, as before. Notice that the constraints in $f_{M,b}$ have arity $6$ (see \cref{fig:2chainderived}). Following the blueprint of~\cref{sec:warmupstep2}, we define the following Kikuchi matrices.
\begin{definition}
\label{def:2chainkikuchiderived}
For $i \ne j \in [k]$, $(Q,p)$, and $\vec{C} = (i, C_0, w_0, C_1, w_1) \in \cH^{(2)}_{i, Q, p}$, $\vec{C'} = (j, C'_0, w'_0, C'_1, w'_1) \in \cH^{(2)}_{j, Q,p}$, we define the matrix $A_{i,j, Q, p}^{(\vec{C}, \vec{C'})}$ as follows. The matrix $A_{i,j, Q,p}^{(\vec{C}, \vec{C'})}$ is indexed by a $3$-tuple of sets $(S_0, R, S'_0)$, each in ${[n] \choose \ell}$, and the $(S_0, R, S'_0), (T_0, W, T'_0)$-th entry is $1$ if $S_0 \oplus T_0 = C_0$, $S'_0 \oplus T'_0 = C'_0$, and $R = \{u\} \cup U$, $W = \{v\} \cup V$, where $C_1 = \{u, Q_1\}$, $C'_1 = \{v, Q_1\}$, and $U \subseteq [n]$ is a set of size $\ell - 1$ where $u,v \notin U$.

We let $A_{i,j} = \sum_{Q, p} \sum_{\vec{C} \in \cH^{(2)}_{i, Q, p}, \vec{C'} \in \cH^{(2)}_{j, Q, p}} A_{i,j, Q,p}^{(\vec{C}, \vec{C'})}$ and $A = \sum_{(i,j) \in M} b_i b_j A_{i,j}$.
\end{definition}

\begin{figure}[t]
    \centering
    \includegraphics[width=0.9\textwidth]{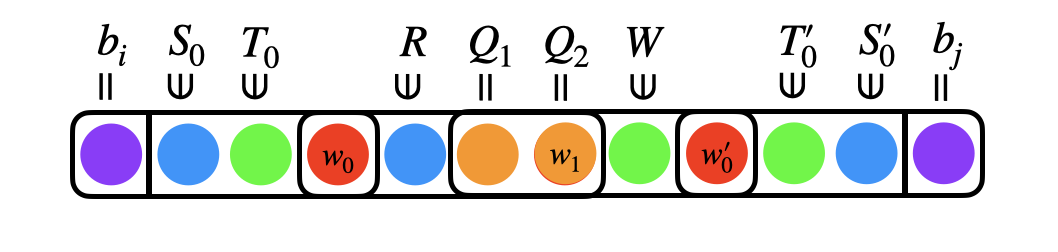}
    \caption{A pair of $2$-chains $\vec{C} = (i, C_0, w_0, C_1, w_1) \in \cH^{(2)}_{i, Q, p}$, $\vec{C'} = (j, C'_0, w'_0, C'_1, w'_1) \in \cH^{(2)}_{j, Q,p}$. The blue vertices appear in the sets $(S_0, R, S'_0)$ for the rows of the matrix $A_{i,j,Q,p}^{(\vec{C}, \vec{C'})}$, and the green vertices appear in the columns. The orange elements are the elements of $Q$ that are canceled via the Cauchy--Schwarz operation.}
    \hrulefill
        \label{fig:2chainderived}
\end{figure}

Notice that for $\vec{C} = (i, C_0, w_0, C_1, w_1) \in \cH^{(2)}_{i, Q, p}$ and $\vec{C'} = (j, C'_0, w'_0, C'_1, w'_1) \in \cH^{(2)}_{j, Q,p}$, the split of the elements in the constraint across the row $(S_0, R, S'_0)$ and the column $(T_0, W, T'_0)$ is asymmetric: see \cref{fig:2chainderived}.

Applying the same machinery in \cref{sec:warmupstep2} to the matrices in \cref{def:2chainkikuchiderived} will yield the correct lower bound provided that the row pruning step succeeds. It thus remains to bound the number of rows in $A_{i,j}$ for a fixed pair $(i,j)$ with a number of nonzero entries exceeding the average by a $\polylog(n)$ factor.

We now apply \cref{lem:partitepolyconc}. As before, we define a similar degree polynomial $\Deg_{i,j}$, and the tail bound boils down to computing the expected partial derivatives $\mu_Z$, where $Z = (z_0, r, z'_0) \in ([n] \cup \{\star\})^3$ is now a tuple of length $3$, and $\mu_Z = (\ell/n)^{3 - \abs{Z}} \deg_{i,j}(Z)$, as the constraints have arity $3$. We observe that $\deg_{i,j}(\star, \star, \star) \leq O(n^2 d)$, as we have $O(n^2)$ choices for $\vec{C} = (i, C_0, w_0, C_1, w_1) \in \cH^{(2)}_{i}$ (which then determines $(Q,p)$), followed by $O(d)$ choices for $(w'_0, C'_1, w'_1)$ (because this must be in $\cH^{(1)}_{Q,p}$, which has size $d$), and then a unique choice for $C_0$.
Therefore, $\mu_0 \leq (\ell/n)^3 \cdot O(n^2 d) = O(\ell^3 d/n)$.

Bounding $\mu_1$ is straightforward, and we omit the calculations. We obtain a bound of $\mu_1 \leq (\ell/n)^2 \cdot O(n d) = O(\ell^2 d/n)$. Bounding $\mu_2$ can be done with a trivial bound of $\deg_{i,j}(Z) \leq O(n)$, yielding $\mu_2 \leq (\ell/n) \cdot O(n) = O(\ell)$.
Finally, it is simple to bound $\deg_{i,j}(Z) \leq O(1)$ when $\abs{Z} = 3$, and so we obtain $\mu_3 \leq O(1)$.

We notice that $\mu_0 \geq \mu_1$ and $\mu_2 \geq \mu_3$ always hold. So, either $\mu_0$ or $\mu_2$ must be the maximum. Because $d = \ell^2$, we have $\mu_0 = O(\ell^3 d/n) \sim \ell^5/n \gg \ell \sim \mu_2$ because $\ell^4 \gg n$, by choice of $\ell$. Thus, $\mu_0 \gg \mu_2$, and so the row pruning argument, etc., will all succeed. This, combined with the refutation argument for $\Psi^{(0)}_{b}(x)$, implies that our heuristic calculation succeeds and we get a bound of $k \leq \tilde{O}(\ell)$, where $\ell$ is chosen to be $\tilde{O}(n^{1/4})$. Thus, we obtain a lower bound of $k \leq \tilde{O}(n^{1/4})$.
\end{proof}

\subsection{Preview: extending the warmup to a proof of \cref{mthm:main}}
We now give a brief overview of how we shall extend the ideas used in this warmup to prove \cref{mthm:main}. First, we observe that in the argument we presented in \cref{sec:warmupstep1,sec:warmupstep2,sec:warmupstep3,sec:2chaindecomp}, there were only two crucial moments in the proof where we had a lot of freedom: (1) the choice of the constraints in the initial XOR instance (in this warmup, we chose the set of $2$-chains with head $i \in [k]$), and (2) the choice of the hypergraph decomposition in \cref{sec:2chaindecomp} --- the rest of the proof was fairly mechanical, and boiled down to computing the expected partial derivatives $\mu_Z$. Namely, if we can choose the constraints and the decomposition so that the row pruning succeeds for all the resulting Kikuchi matrices, i.e., the expected partial derivatives of the degree polynomials are appropriately bounded, then the general machinery in \cref{sec:warmupstep1,sec:warmupstep2,sec:warmupstep3} succeeds in proving the lower bound predicted by the heuristic calculation in \cref{sec:chainheuristic} (up to a small loss, see \cref{rem:losing-on-the-heuristic}).

As discussed in \cref{sec:chainheuristic}, we shall define the XOR instance using $(r+1)$-chains for a parameter $r = O(\log n)$, and the heuristic calculation predicts that this will yield an exponential lower bound.
Thus, the key technical component of the proof is to choose the decomposition of the $(r+1)$-chains so that the degree polynomials of the resulting Kikuchi matrices all satisfy the bounded expected partial derivatives condition. In \cref{sec:2chaindecomp}, we showed how to do this for the case when $r = 1$.

We now wish to point out the following crucial observation: the decomposition in \cref{sec:2chaindecomp} is ``informed'' by the row pruning calculation for the \emph{undecomposed chains} done in \cref{sec:warmupstep3}. Specifically, in \cref{sec:warmupstep3}, we argued that if there is a violating partial derivative for the undecomposed chains, then there is some combinatorial structure in the chains (namely, a heavy pair) that is the ``cause'' of the large expected partial derivative, and this combinatorial structure is exactly the criteria that we use to decompose the hypergraph. In some sense, the hypergraph decomposition (along with the modified Cauchy--Schwarz trick and Kikuchi matrices) can be thought of as a precise way to ``fix'' this high expected partial derivative. For longer chains, there is once again an intimate relationship between the existence of a violating expected partial derivative and a certain ``denser-than-anticipated'' combinatorial structure (analogous to heavy pairs) being present in the chains we construct. For larger chains, this structure is a more complicated to describe, but an analogous chain decomposition for this structure accomplishes the same job.

More precisely, we generalize the decomposition of \cref{sec:2chaindecomp} as follows. As done in \cref{sec:2chaindecomp}, we shall think of an $(r+1)$-chain in $\cH^{(r+1)}_i$ as being split into two subchains, the ``first link'' in $H_i$ and then the rest of the chain, which is an $r$-chain. As before, our decomposition shall decompose the $r$-chain part only, and this induces a decomposition of the $(r+1)$-chains in $\cH^{(r+1)}_i$. Recall that in \cref{sec:2chaindecomp}, we decomposed a $1$-chain $(u, C, w)$ by picking a $Q$ where $Q_1 \in C$ and $Q_2 = w$. Notice that $Q$ only contains one element of the hyperedge $C$; there was no need to do a further decomposition to handle, e.g., heavy triples $Q = (Q_1, Q'_1, Q_2)$ where $\{Q_1, Q'_1\} = C$ and $Q_2 = w$.

Now, we have $r$-chains $(u, C_1, w_1, \dots, C_r, w_r)$, and we shall decompose if there is a $Q = (Q_1, \dots, Q_{r+1}) \in ([n] \cup \{\star\})^{r} \times [n]$ such that (1) $Q$ is heavy, i.e., is contained in many $r$-chains, meaning that (a) $Q_{h+1} = w_r$, and so in particular $Q_{h+1} \ne \star$, and (b) $Q_h \in C_h$ for $h = 1, \dots, r$; and (2) $Q$ is \emph{contiguous}, meaning that if $h \in [r+1]$ is the minimal $h$ such that $Q_h \ne \star$, then $Q_{h'} \ne \star$ for all $h' \geq h$, i.e., $Q$ has $\star$'s followed by only non-$\star$ entries.

Condition (1) above is a somewhat natural extension of the decomposition method in \cref{sec:2chaindecomp}, but condition (2) is trickier. It turns out (in a somewhat subtle way) that because the $H_i$'s are matchings, if there is a violating expected partial derivative, then not only is there a heavy $Q$, but there must be a heavy \emph{contiguous} $Q$. In a sense (that can be made precise), the contiguous $Q$'s are \emph{irreducible} violations and thus it is enough to only handle them.

\section{Proof of \cref{mthm:main}: From LCCs to XOR Formulas}
\label{sec:lcctoxor}

We now present the proof of \cref{mthm:main} for the case of $\F = \F_2$. The proof is spread over \cref{sec:lcctoxor,sec:regular-partition,sec:kikuchimethod,sec:rowpruning} and follows the steps in the warmup. In the current section, we define $r$-chains and the family of XOR instances associated to the LCC that we wish to refute. Then, in \cref{sec:regular-partition}, we decompose the $r$-chains, and thereby decompose the $(r+1)$-chains forming the constraints in the XOR instance. Then, in \cref{sec:kikuchimethod}, we define the Kikuchi matrices and finish the argument up to the proof of the row pruning lemma, \cref{lem:rowpruning}, an analogue of \cref{lem:2chainrowpruning} that is the key technical lemma. Finally, in \cref{sec:rowpruning}, we prove \cref{lem:rowpruning}.

Let $\Code \colon \F_2^k \to \F_2^n$ be $(3, \delta, \eps)$-locally correctable.
Without loss of generality, by \cref{fact:normalform} we can assume that $\Code$ is $(3, \delta')$-normally decodable, where $\delta' \geq \delta/6$ and $n' = 2n$. For the remainder of the proof, we will redefine $\delta$ to be $\delta'$, and $n$ to be $2n$.
We shall also think of the code $\Code \colon \F_2^k \to \F_2^n$ as a map $\Code \colon \Fits^k \to \Fits^n$.

We will now define satisfiable XOR formulas $\Phi$ associated with the linear code $\Code$.
Let $\Code:\Fits^k \to  \Fits^n$ be a linear $(3, \delta)$-normally correctable code. Recall that without loss of generality, $\Code$ is systematic, meaning that the first $k$ bits of $\Code$ are the message bits. In particular, for every $b \in \Fits^k$, there is a unique $x \in \Code$ such that $x \vert_{[k]} = b$. We can thus generate $x \gets \Code$ uniformly at random by first choosing $b \gets \Fits^k$ uniformly at random, and then setting $x$ to be the unique extension of $b$. 

Since $\Code$ is a linear $(3, \delta)$-normally correctable code, there exist $3$-uniform hypergraph matchings $H_1, \dots, H_n$, each of size exactly $\delta n$, such that every $x \in \Code$ satisfies the following system of $4$-XOR constraints, i.e., each constraint has arity $4$: 
\begin{equation} \label{eq:naive-xor}
\forall u \in [n], C \in H_u, \text{  } x_C x_u = 1\mper
\end{equation}

We will construct an XOR formula by \emph{long chain} derivations. Intuitively, a long chain derivation starts from the natural XOR constraints \eqref{eq:naive-xor} and derives new ones by chaining together $t$ constraints with an appropriate combinatorial structure. Below, we formalize the set of constraints in this formula as a family of hypergraphs built from the $H_u$'s. 

\begin{definition}[$t$-chain hypergraph $\cH^{(t)}$]
Let $t \geq 1$ be an integer. For any $u \in [n]$, let $\cH_u^{(t)}$ denote the set of tuples of the form $(u,C_1, w_1, C_2, w_2, \dots, C_t, w_t)$, where each $C_h \in {[n] \choose 2}$, $w_h \in [n]$, and it holds that for all $1 \leq h \leq t$, $C_h \cup \{w_h\} \in H_{w_{h - 1}}$ where we set $w_0 \coloneqq u$. We call $u$ the head, $w_h$'s the \emph{pivots} for $1 \leq h \leq t - 1$ and $w_t$ the \emph{tail} in such a chain. We let $\cH^{(t)} = \cup_{u \in [n]} \cH_u^{(t)}$ denote the set of all $t$-chains, where $\cH^{(t)}_u$ is the set of $t$-chains with head $u$.
\end{definition}

The following simple observation helps us understand the combinatorial structure in the chains. 

\begin{observation}
\label{obs:monomial-associated-with-chain}
Let $x = \Code(b)$ for a linear LCC over $\F_2$ with $\{H_u\}_{u \in [n]}$ being the associated matchings. Then, for any $t$-chain $(u,C_1, w_1, C_2, w_2, \ldots, C_t, w_t)$, $x$ satisfies $x_u x_{w_t} \prod_{h= 1}^t x_{C_h} = 1$. 
\end{observation}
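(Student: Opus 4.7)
The plan is to prove the identity by telescoping the $t$ basic LCC constraints that constitute the chain. By the definition of $\cH^{(t)}_u$, for each $1 \leq h \leq t$ we have $C_h \cup \{w_h\} \in H_{w_{h-1}}$ (where $w_0 = u$), so this is one of the hyperedges from the matching of the ``head'' index $w_{h-1}$. Applying the normal-form constraint \eqref{eq:naive-xor} to this hyperedge tells us that for every codeword $x = \Code(b)$, we have $x_{w_{h-1}} \cdot x_{C_h \cup \{w_h\}} = 1$, i.e.,
\begin{equation*}
x_{w_{h-1}} \cdot x_{w_h} \cdot x_{C_h} = 1 \mper
\end{equation*}

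The next step is to multiply these $t$ equalities together over $h = 1, \dots, t$. The right-hand side remains $1$, and the left-hand side becomes
\begin{equation*}
\Paren{\prod_{h=1}^t x_{w_{h-1}}} \Paren{\prod_{h=1}^t x_{w_h}} \Paren{\prod_{h=1}^t x_{C_h}} \mper
\end{equation*}
Reindexing the first product yields $x_u \cdot \prod_{h=1}^{t-1} x_{w_h}$, and the second product equals $x_{w_t} \cdot \prod_{h=1}^{t-1} x_{w_h}$. Multiplying them pairs up each internal pivot $w_1, \dots, w_{t-1}$ with itself, and since $x_v \in \Fits$ we have $x_v^2 = 1$ for every $v$, so all internal pivots cancel.

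What remains is $x_u \cdot x_{w_t} \cdot \prod_{h=1}^t x_{C_h} = 1$, which is exactly the claimed identity. There is no real obstacle: the proof is a one-line telescoping calculation using only the definition of a $t$-chain and the fact that squaring $\pm 1$ values is the identity. The only subtle point to flag in the write-up is the convention $w_0 = u$ that makes the first link of the chain correspond to a constraint from $H_u$.
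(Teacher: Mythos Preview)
Your proof is correct and essentially identical to the paper's own argument: both multiply the $t$ link constraints $x_{w_{h-1}} x_{C_h} x_{w_h} = 1$ together and use $x_v^2 = 1$ to square out the intermediate pivots $w_1, \dots, w_{t-1}$. The paper's version is just terser, but the idea is the same telescoping.
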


\begin{proof}
We know that $x$ satisfies $x_{w_h} x_{C_{h+1}} x_{w_{h+1}} =1$ for every $0 \leq h \leq t$ where we define $w_0 = u$. Taking products of the left-hand sides of each of these $t$ equations, we observe that for every $1 \leq h \leq t-1$, $x_{w_h}$ is ``squared out'' (since $x_{v}^2 = 1$ for every $v \in [n]$), and this finishes the proof.
\end{proof}

\begin{figure}[t]
    \centering
    \includegraphics[width=0.9\textwidth]{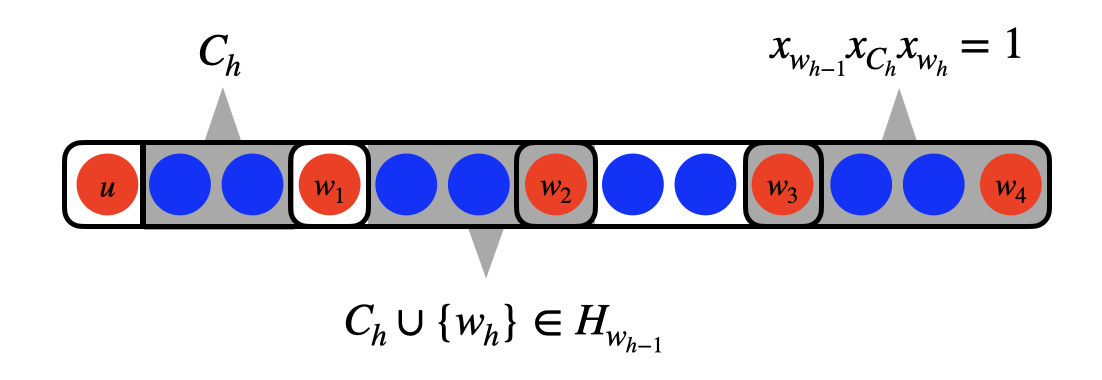}
    \caption{A $4$-chain. The pairs of blue vertices are the $C_{h}$'s, and the red vertices are the $w_{h}$'s. Note that for any $x \in \Code$, we have $x_{w_{h-1}} x_{C_h} x_{w_{h}} = 1$.}
    \hrulefill
        \label{fig:basic-chain}
\end{figure}

\parhead{Building chains iteratively.} It is useful to think of $t$-chains as being built by extending smaller chains by iteratively adding hyperedges to the head (i.e. to the left). The following notation and observation formalizes this.
\begin{definition}[Extending Chains]
For the $t$-chain hypergraph $\cH^{(t)}$ built from $3$-matchings $H_1, H_2, \ldots, H_n$ on $[n]$, we define $H_u \circ \cH^{(t+1)}$ as:
\begin{equation*}
H_u \circ \cH^{(t)} = \cup_{w_0 \in [n]} \Set{ (u,C_0, \vec{C} \mid \vec{C} \in \cH_{w_0}^{(t)}, \{C_0 \cup \{w_0\} \in H_{u}} \enspace.
\end{equation*}
\label{def:composition-of-chains}
\end{definition}

\begin{observation} \label{obs:iterative-extending-chains}
For $t \geq 1$, let $\cH^{(t)}$ be the $t$-chain hypergraph built from $3$-matchings $H_1, H_2, \ldots, H_n$ on $[n]$. Then, $\cH^{(t+1)} = \cup_{u \in [n]} H_u \circ \cH^{(t)} =\cup_{u \in [n]} \cH^{(t')}_u \circ \cH^{(t-t')}$ for any $0<t' < t$. 
\end{observation}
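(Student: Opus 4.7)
The plan is to prove this purely combinatorial observation about chain concatenation by direct unrolling of the definitions; no probabilistic or spectral input is required, and I would not expect any step to pose real difficulty beyond careful index bookkeeping.

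For the first equality $\cH^{(t+1)} = \cup_{u \in [n]} H_u \circ \cH^{(t)}$, I would establish a bijection between the two sets. Given any $(t+1)$-chain $\vec{C} = (u, C_1, w_1, C_2, w_2, \ldots, C_{t+1}, w_{t+1}) \in \cH^{(t+1)}$, the chain-defining conditions $C_h \cup \{w_h\} \in H_{w_{h-1}}$ for $1 \le h \le t+1$ (with $w_0 = u$) split naturally into two groups. The single condition $C_1 \cup \{w_1\} \in H_u$ is exactly the requirement for the first hyperedge to come from $H_u$ in Definition~\ref{def:composition-of-chains}. The remaining $t$ conditions are precisely the conditions for the suffix $(w_1, C_2, w_2, \ldots, C_{t+1}, w_{t+1})$ to lie in $\cH^{(t)}_{w_1}$. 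The resulting map $\vec{C} \mapsto (u, C_1, (w_1, C_2, w_2, \ldots, C_{t+1}, w_{t+1}))$ is clearly injective, and is surjective onto $\cup_u H_u \circ \cH^{(t)}$ by direct inspection of Definition~\ref{def:composition-of-chains}.

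For the second equality, I would proceed by iterating the first equality. Applying it once to the inner $\cH^{(t)}$ yields $\cH^{(t+1)} = \cup_u H_u \circ \cup_{u'} H_{u'} \circ \cH^{(t-1)}$; repeating this $t'-1$ more times groups the first $t'$ hyperedges into a single $t'$-chain with head $u$, whose tail is identified with the head of the remaining chain. The remaining chain is then a member of $\cH^{(t-t')}$ (with the composition convention of Definition~\ref{def:composition-of-chains} naturally extended so that the tail of the left factor serves as the head of the right factor, exactly as $u$ does for $H_u \circ \cH^{(t)}$). Equivalently, and perhaps more directly, I would argue that every $(t+1)$-chain splits uniquely at the interior pivot $w_{t'}$ into a $t'$-chain prefix $(u, C_1, w_1, \ldots, C_{t'}, w_{t'}) \in \cH^{(t')}_u$ and the complementary suffix built from $w_{t'}$, and that the chain-defining conditions on $\vec{C}$ are exactly the union of the chain-defining conditions on the two pieces with no cross-interaction (since the pivot $w_{t'}$ appears in both).

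The main and really only obstacle is notational: one must reconcile the composition convention of Definition~\ref{def:composition-of-chains}, in which $H_u \circ \cH^{(t)}$ produces $(t+1)$-chains, with the index shift implicit in writing $\cH^{(t')}_u \circ \cH^{(t-t')}$ on the right-hand side, whose lengths add along the shared pivot $w_{t'}$. Modulo this bookkeeping the content is purely a restatement of the recursive structure of $t$-chains as iterated extensions of chains by one hyperedge drawn from the appropriate matching, and the proof reduces to checking the definitions.
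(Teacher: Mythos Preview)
Your proposal is correct; the paper states this as an observation without proof, treating it as immediate from the definitions, and your approach of unrolling the recursive structure of chains is the natural (and essentially only) way to verify it. You are also right to flag the index bookkeeping in the second equality: as written, composing a $t'$-chain with a $(t-t')$-chain along the shared pivot yields a $t$-chain rather than a $(t{+}1)$-chain, so the intended statement presumably has $\cH^{(t+1-t')}$ on the right (or $\cH^{(t)}$ on the left), and your remark that this is purely a notational reconciliation is the correct diagnosis.
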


\parhead{Chains that fix some positions.} We will often refer to the set of chains where some of the $C_h$'s are forced to contain some $v_h \in [n]$. Towards this, we introduce the following terminology. 
\begin{definition}[Chains containing $Q$]
For any $Q = (Q_1, \dots, Q_t, Q_{t+1}) \in \{[n] \cup \star\}^{t} \times [n]$, we say that a chain $(u, C_1, w_1, \ldots, C_t, w_t) \in \cH^{(t)}$ \emph{contains} $Q$ if $Q_{t+1} = w_t$ and for $1 \leq h \leq t$, if $Q_h \neq \star$, then $Q_h \in C_h$. We say that a $Q$ is \emph{contiguous} if there exists $s \leq t$ such that $Q_{h} \neq \star$ for every $h \geq s+1$ and $Q_{h} = \star$ for every $1 \leq h \leq s$, i.e., the first $s$ entries are $\star$, and the remaining entries are non-$\star$. We note that by definition, $Q_{t+1} \ne \star$ always.

We say that $Q$ is \emph{complete} if $Q$ does not contain any $\star$. We say that $Q' \supseteq Q$ if whenever $Q_h \neq \star$, $Q'_h = Q_h$. We define the size $\abs{Q}$ to be the number of coordinates in $Q$ that do not equal $\star$.

We write $\cH^{(t)}_Q$ to denote the set of all $t$-chains that contain $Q$, and for $u \in [n]$, we write $\cH^{(t)}_{u,Q}$ to denote the set of $t$-chains with head $u$ that contain $Q$.
\end{definition}
We caution the reader that $\cH^{(t)}_u$ and $\cH^{(t)}_Q$,  are different sets of chains. In context, it shall be easy to distinguish between the two cases as the type of $u$ and $Q$ are different: namely, we have $u \in [n]$ and $Q \in \{[n] \cup \star\}^{t+1}$.

\parhead{XOR Formulas from $r$-chains.} Next, we define XOR formulas associated with $\cH^{(r+1)}$ that are guaranteed to be satisfiable. The length of the chain depends on a parameter $r$, which we shall set later.
\begin{definition}[The XOR Formula $\Phi$]
Fix $r \in \N$.

For any $b = (b_1, \dots, b_k) \in \Fits^k$, define the polynomial $\Phi_b$:
\begin{flalign*}
&\Phi_b(x) = \sum_{i = 1}^k b_i \sum_{(i,C_0,w_0,C_1, w_1\ldots, C_r,w_r) \in \cH^{(r+1)}} x_{w_r} \prod_{h=0}^r x_{C_h} = \sum_{i = 1}^k b_i \sum_{C_0, w_0 : C_0 \cup \{w_0\} \in H_i} x_{C_0} \sum_{(w_0, C_1,w_1,\ldots, C_r,w_r) \in \cH^{(r)}} x_{w_r}\prod_{h=1}^r x_{C_h}
\enspace.
\end{flalign*}
We will drop the subscript $b$ when it is clear from the context. 
\end{definition}
We note the equality holds above as we are simply thinking of the chain $(i,C_0,w_0,\ldots, C_r,w_r)$ as being split into two parts, the $1$-chain $(i, C_0, w_0)$, followed by the $r$-chain $(w_0,C_1, w_1, \ldots, C_r,w_r)$. We write the polynomial in this form because for much of the proof, we shall wish to think of the $r$-chain as separate from the $1$-chain $(i, C_0, w_0)$.

We now observe that $\Phi_b(x)$ is satisfiable and thus has a high value. 
\begin{lemma}
\label{lem:val-lb}
For every $b \in \Fits^k$, $\Phi_b$ is satisfied by $x = \Code(b)$ and thus, $\val(\Phi_b) = k (3 \delta n)^{r+1}$.
\end{lemma}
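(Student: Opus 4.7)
\textbf{Proof plan for \cref{lem:val-lb}.} The plan is to show two things: (i) the polynomial $\Phi_b$ contains exactly $k(3\delta n)^{r+1}$ monomials, and (ii) each monomial evaluates to $+1$ when $x = \Code(b)$ is plugged in. Since $\Phi_b(x)$ is a sum of $\pm 1$ values (each monomial has coefficient $b_i \in \Fits$ and $x \in \Fits^n$ makes $x_{w_r}\prod_h x_{C_h} \in \Fits$), the maximum value is trivially upper bounded by the number of monomials, so matching this upper bound simultaneously certifies satisfiability of $\Phi_b$ (by $\Code(b)$) and yields $\val(\Phi_b) = k(3\delta n)^{r+1}$.

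For the counting step, I would prove by induction on $t \geq 1$ that for every $u \in [n]$, $\abs{\cH^{(t)}_u} = (3\delta n)^t$. The base case $t=1$ is immediate: a $1$-chain with head $u$ is a pair $(u, C_0, w_0)$ with $C_0 \cup \{w_0\} \in H_u$, and since $\abs{H_u} = \delta n$ and each hyperedge has $3$ ways of distinguishing one of its elements as $w_0$, we get $3\delta n$ such $1$-chains. For the inductive step, I would use \cref{obs:iterative-extending-chains}, which writes $\cH^{(t+1)}_u = \bigcup_{w_0} \{(u, C_0, \vec{C}) : \vec{C} \in \cH^{(t)}_{w_0}, C_0 \cup \{w_0\} \in H_u\}$, to conclude that each choice of hyperedge-and-distinguished-element in $H_u$ (of which there are $3\delta n$) can be followed by any of the $(3\delta n)^t$ chains rooted at the chosen $w_0$, yielding $(3\delta n)^{t+1}$ total. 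Summing over the $k$ choices of head $i \in [k]$ then gives $k(3\delta n)^{r+1}$ monomials in $\Phi_b$.

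For the satisfaction step, fix any $(r+1)$-chain $(i, C_0, w_0, C_1, w_1, \ldots, C_r, w_r)$ with head $i \in [k]$. By \cref{obs:monomial-associated-with-chain} applied to this chain, $x = \Code(b)$ satisfies
\[
x_i \cdot x_{w_r} \cdot \prod_{h=0}^{r} x_{C_h} = 1.
\]
Since $\Code$ is systematic and $i \in [k]$, we have $x_i = b_i$, so $x_{w_r}\prod_{h=0}^r x_{C_h} = b_i$ (using $b_i^2 = 1$ in $\Fits$). Multiplying by $b_i$, the corresponding term in $\Phi_b$ equals $b_i \cdot b_i = 1$. Summing over all $k(3\delta n)^{r+1}$ terms gives $\Phi_b(\Code(b)) = k(3\delta n)^{r+1}$.

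There is essentially no obstacle here: the only mild subtlety is making sure the counting is correct, in particular noting that the matchings $H_u$ give exactly $\delta n$ hyperedges each (guaranteed by the normal-form reduction in \cref{fact:normalform}) and that hyperedges in the normal form have coefficients all non-zero, so the factor of $3$ per link is genuine. Combined with the upper bound $\val(\Phi_b) \leq \#\{\text{monomials}\}$, the equality $\val(\Phi_b) = k(3\delta n)^{r+1}$ follows.
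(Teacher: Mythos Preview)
Your proposal is correct and follows essentially the same approach as the paper: both invoke \cref{obs:monomial-associated-with-chain} (plus the systematic property) to show every term evaluates to $+1$ at $x=\Code(b)$, and both count the $(r+1)$-chains with head in $[k]$ by the same $3\delta n$-choices-per-link argument. Your version makes the induction explicit and spells out the upper bound $\val(\Phi_b)\le \#\{\text{terms}\}$, but the logical content is identical.
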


\begin{proof}
Observe that $\Phi_b$ is a sum of monomials corresponding to a $r$-chain each of which is satisfied by $x = \Code(b)$ by \cref{obs:monomial-associated-with-chain}. Thus, $\val(\Phi_b)$ equals the total number of chains of length $r$ with head in $[k]$, which we next count. 

Define $w_{-1} \coloneqq i$. Given $w_{h-1}$ for $h \geq 0$, there are $\delta n$ choices for the set $C_h \cup \{w_h\} \in H_h$ and for each such choice, there are $3$ choices for the next pivot $w_h$. Thus, the number of $(r+1)$-chains with head $i$ is $(3\delta n)^{r+1}$. Summing over the $k$ possible heads completes the proof.
\end{proof}

\section{Contiguously Regular Partition of Chains}
\label{sec:regular-partition}
In this section, we partition the $r$-chain hypergraph $\cH^{(r)}$ into buckets that satisfy a useful regularity property. We first abstract out the relevant properties of the partition below and then show how it can be using a simple greedy partitioning algorithm. This partitioning will be key to setting up and analyzing our spectral refutation in the next section. 

\begin{definition}[Contiguously regular partition]
\label{def:decomp}
For $\delta>0$ and $r \in \N$, let $\cH^{(r)}$ be the $r$-chain hypergraph built form $3$-matchings $H_1, H_2, \ldots, H_n$ on $[n]$ of size $\delta n$ each. Let $\cH^{(r)} = \cup_{Q,p} \cH^{(r)}_{Q,p}$ be a disjoint partition of $\cH^{(r)}$ indexed by $Q \in ([n] \cup \{\star\})^r \times [n]$ and $p \in [m]$ for some large enough $m \in \N$. We say that such a partition is $d$-contiguously regular if the following conditions hold:

\begin{enumerate}[(1)]
\item for every $Q \in ([n] \cup \{\star\})^r \times [n]$ and $p \in [m]$, $\cH^{(r)}_{Q,p} \subseteq \cH^{(r)}_Q$, 
\item for every $(Q,p)$ such that $\cH^{(r)}_{Q,p} \neq \emptyset$, $Q$ is contiguous, 

\item if $|Q|=1$, then $\cH^{(r)}_{Q,p} = \emptyset$ whenever $p>1$,
\item for every contiguous $Q,Q'$ such that $Q' \supseteq Q$, 
\[
\Abs{ \Set{\vec{C'} \in \cH^{|Q'|-1} \mid \vec{C'} \text{ contains } Q', \text{ and } \exists \vec{C} \text{ extending } \vec{C'}, \vec{C} \in \cH^{(r)}_{Q,p} }} \leq d^{|Q'|-1}\mper
\]
\item For every $t$, the set $P_t$ of all $(Q,p)$ such that $\cH^{(r)}_{Q,p} \neq \emptyset$ and $|Q|=t+1$ satisfies $|P_t| d^t \leq n (3 \delta n)^{t}$. Observe that $|P_0|\leq n$ is forced by (3). 
\end{enumerate}
\end{definition}
We now give a bit of intuition for the definition.
A contiguously regular partition takes the set of $r$-chains $\cH^{(r)}$ and decomposes it into pieces, where the pieces are intuitively indexed by $Q$; however, for technical reasons, we will want to have multiple pieces assigned to the \emph{same} $Q$, and so we disambiguate these pieces using the label $p$, i.e., we can have pieces $\cH^{(r)}_{Q,1}$, $\cH^{(r)}_{Q,2}$, $\cH^{(r)}_{Q,3}$, etc.

Condition (1) says that the chains in the piece $\cH^{(r)}_{Q,p}$ in the decomposition are all chains that contain $Q$, hence why we view them as indexed by $Q$. Condition (2) says that the only nonempty pieces have a \emph{contiguous} $Q$, hence the name ``contiguously regular partition''. Condition (3) says that if $\abs{Q} = 1$, then there is only one piece with this $Q$. Recall that when $\abs{Q} \geq 2$, we can have pieces $\cH^{(r)}_{Q,1}$, $\cH^{(r)}_{Q,2}$, $\cH^{(r)}_{Q,3}$, etc.; we have asserted that when $\abs{Q} = 1$, this does not happen. Condition (4) is a \emph{regularity} condition saying that chains in $\cH^{(r)}_{Q,p}$ appear in this piece because the tuple $Q$ is ``maximal''. Condition (5) asserts that the number of pieces with $\abs{Q}$ of a given size is not too large.

We now make the following observation.
\begin{observation}
\label{obs:pieceupperbound}
 Items (1), (2), and (4) imply that $\abs{\cH^{(r)}_{Q,p}}\leq n (3\delta n)^{r-|Q|} d^{|Q|-1}$ for all $(Q,p)$.

Moreover, let $\cH^{(r)}_{u,Q,p}$ denote the set of chains in $\cH^{(r)}_{Q,p}$ with head $u$. Then, Items (1), (2) and (4) imply that $\abs{\cH^{(r)}_{u, Q,p}} \leq (3\delta n)^{r-|Q|} d^{|Q|-1}$ when $\abs{Q} \leq r$.
 \end{observation}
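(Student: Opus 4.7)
The plan is to bound $|\cH^{(r)}_{Q,p}|$ by splitting every chain in $\cH^{(r)}_{Q,p}$ into a ``tail'' of length $|Q|-1$ (occupying the non-$\star$ positions of $Q$) and a ``prefix'' of length $r - |Q| + 1$ (occupying the $\star$ positions), bounding the number of distinct tails via Item (4) of Definition~\ref{def:decomp}, and bounding the number of ways a given tail extends to a full $r$-chain via the matching property of the $H_u$'s.

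Set $t = |Q| - 1$. By Item (2), $Q$ is contiguous, so its non-$\star$ entries occupy exactly positions $r-t+1, r-t+2, \ldots, r+1$. For any $\vec{C} = (u, C_1, w_1, \ldots, C_r, w_r) \in \cH^{(r)}_{Q,p}$, define its tail $\vec{C}' \coloneqq (w_{r-t}, C_{r-t+1}, w_{r-t+1}, \ldots, C_r, w_r) \in \cH^{(t)}$. By Item (1), $\vec{C}'$ contains the natural restriction of $Q$ to its non-$\star$ positions, and $\vec{C}$ extends $\vec{C}'$. Applying Item (4) with $Q' = Q$ therefore gives at most $d^{|Q|-1}$ candidate tails $\vec{C}'$.

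The main step is counting, for a fixed tail $\vec{C}'$ with head $v \coloneqq w_{r-t}$, the number of extensions to a full $r$-chain $\vec{C} \in \cH^{(r)}$. Such an extension corresponds to choosing an $(r-t)$-chain $(u, C_1, w_1, \ldots, C_{r-t}, w_{r-t})$ with $w_{r-t} = v$. Assuming $|Q| \leq r$ (so $r-t-1 \geq 0$), I would first select the shorter $(r-t-1)$-chain $(u, C_1, w_1, \ldots, C_{r-t-1}, w_{r-t-1})$ freely, contributing at most $n (3\delta n)^{r-t-1}$ choices in total, or $(3\delta n)^{r-t-1}$ if the head $u$ is already fixed (for the ``moreover'' claim). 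It then remains to specify the hyperedge $C_{r-t} \cup \{v\} \in H_{w_{r-t-1}}$. The key observation is that since $H_{w_{r-t-1}}$ is a \emph{matching}, $v$ belongs to at most one such hyperedge, which also uniquely determines $C_{r-t}$. This is the one place where the matching structure is essential; without it I would pay an extra factor of $3\delta n$ and lose the intended bound.

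Multiplying the two counts gives $|\cH^{(r)}_{Q,p}| \leq d^{|Q|-1} \cdot n (3\delta n)^{r-|Q|}$ and $|\cH^{(r)}_{u,Q,p}| \leq d^{|Q|-1} (3\delta n)^{r-|Q|}$ (when $|Q| \leq r$), as claimed. The boundary case $|Q| = r+1$, relevant only to the first inequality, is handled by applying Item (4) with $Q' = Q$ to directly obtain $|\cH^{(r)}_{Q,p}| \leq d^r$, which is in turn bounded by the desired $n(3\delta n)^{-1}d^r = d^r/(3\delta)$ since $\delta$ is a sufficiently small constant. No substantive obstacle is expected beyond careful bookkeeping of indices in the tail-prefix split; the argument is essentially a direct application of the definitions.
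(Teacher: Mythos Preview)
Your proposal is correct and follows essentially the same argument as the paper's proof: both apply Item~(4) with $Q'=Q$ to bound the number of suffixes by $d^{|Q|-1}$, count the remaining prefix by choosing the head and the first $r-|Q|$ links freely, and then invoke the matching property of $H_{w_{r-|Q|}}$ to pin down the final connecting hyperedge at no extra cost. The boundary case $|Q|=r+1$ is also handled identically.
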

 \begin{proof}
 To see this, we apply item (4) with $Q' = Q$, and we now count the chains in $\cH^{(r)}_{Q,p}$ by (1) first choosing a suffix $\vec{C'} \in \cH^{\abs{Q} - 1}_{\abs{Q}}$, and then (2) completing the chain. By Item (4), we have at most $d^{\abs{Q} - 1}$ choices for the suffix. Once the suffix is fixed, we now complete the chain as follows. If $\abs{Q} = r + 1$, then we have chosen the entire chain and are done. Otherwise, we do the following. First, we choose $w_0$, which has $n$ choices. Then, we choose $C_1 \cup \{w_1\} \in H_{w_0}$, which has $\delta n$ choices, followed by $w_1 \in C_1 \cup \{w_1\}$, which has $3$ choices. We repeat this until we reach the point in the chain where we have determined $w_{r - \abs{Q}}$. Because we also know the suffix $\vec{C'}$, we have already determined $w_{r + 1 - \abs{Q}}$. Because $H_{w_{r - \abs{Q}}}$ is matching, there is at most one $C_{r + 1 - \abs{Q}}$ such that $C_{r + 1 - \abs{Q}} \cup \{w_{r + 1 - \abs{Q}}\} \in H_{w_{r - \abs{Q}}}$, and so we have determined the entire chain. We have thus made at most $n (3 \delta n)^{r - \abs{Q}} d^{\abs{Q} - 1}$ choices when $\abs{Q} \leq r$, and $d^{r}$ choices if $\abs{Q} = r + 1$. In both cases, this is at most $n (3\delta n)^{r-|Q|} d^{|Q|-1}$.

 Finally, we note that for $\abs{Q} \leq r$, the above argument also bounds $\abs{\cH^{(r)}_{u, Q,p}}$. We simply save a factor of $n$ because $w_0$ must be equal to $u$.
 \end{proof}

We now give an algorithm that, given $\cH^{(r)}$ and $d$, outputs a $d$-contiguously regular partition of $\cH^{(r)}$ using a simple iterative greedy scheme. 

\begin{lemma}[Contiguously regular partition of chains]
\label{lem:decomp}
For $\delta>0$, let $H_1, H_2, \ldots, H_n$ be arbitrary $\delta n$ size $3$-matchings on $[n]$. For $r \in \N$, let $\cH^{(r)}$ be the $r$-chain hypergraph built from $H_1,H_2,\ldots, H_n$. Then, for every $d \in \N$, there exists a $d$-contiguously regular partition $\cH^{(r)} =\cup_{Q,p} \cH^{(r)}_{Q,p}$. 
\end{lemma}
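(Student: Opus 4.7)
The plan is to build the partition by a top-down greedy peeling algorithm that processes $t = r, r-1, \dots, 1$ and, at each stage, extracts all ``heavy'' pieces whose label $Q$ has $\abs{Q} = t+1$. Initialize a working pool $\cR \leftarrow \cH^{(r)}$. For $t$ going from $r$ down to $1$: while there exists a contiguous $Q$ with $\abs{Q} = t+1$ such that the set
$S_{Q} \coloneqq \{\vec{C'} \in \cH^{(t)} : \vec{C'}\text{ contains }Q, \text{ and some } \vec{C} \in \cR\text{ extends }\vec{C'}\}$
has size at least $d^{t}$, pick any $d^{t}$ suffixes from $S_{Q}$, introduce a fresh label $p$, and move into a new piece $\cH^{(r)}_{Q,p}$ every chain of $\cR$ that extends one of these suffixes. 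When the outer loop terminates, assign each remaining $\vec{C} \in \cR$ to $\cH^{(r)}_{Q,1}$, where $Q$ is the unique contiguous tuple of size $1$ whose last entry equals the tail $w_{r}$ of $\vec{C}$.

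Items (1) and (2) are immediate, since every chain placed into $\cH^{(r)}_{Q,p}$ extends a suffix that contains the contiguous $Q$, and item (3) is immediate because size-$1$ pieces arise only from the leftover assignment, uniquely indexed by the tail. For item (5), note that at stage $t$ each extracted piece removes $d^{t}$ distinct length-$t$ suffixes from future consideration (once the extending chains are deleted from $\cR$, those suffixes cannot re-enter any $S_{Q'}$ for $\abs{Q'}=t+1$), so summing over the iterations at stage $t$ yields $\abs{P_{t}}\,d^{t} \leq \abs{\cH^{(t)}} \leq n(3\delta n)^{t}$, with $\abs{P_{0}} \leq n$ forced by item~(3).

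The principal check is item (4), and the top-down order in $t$ is precisely what makes it go through. Fix a piece $\cH^{(r)}_{Q,p}$ with $\abs{Q} = t+1$ and any contiguous $Q' \supseteq Q$. The case $Q' = Q$ holds with equality $d^{\abs{Q'}-1} = d^{t}$ by construction. When $Q' \supsetneq Q$, set $s+1 \coloneqq \abs{Q'} > t+1$. At the moment the piece $(Q,p)$ is extracted, the inner loop at stage $s$ has already terminated, so the pool $\cR$ at that time contains strictly fewer than $d^{s}$ length-$s$ suffixes containing $Q'$. Since $\cH^{(r)}_{Q,p}$ is a subset of that pool, the number of length-$s$ suffixes in $\cH^{(r)}_{Q,p}$ containing $Q'$ is also at most $d^{s}-1$, as required; the leftover size-$1$ pieces satisfy (4) by the same argument applied to any refinement of size $\geq 2$, and by inspection when $Q' = Q$. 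The main ``obstacle'' is thus conceptual rather than calculational: organising the extraction from large $\abs{Q}$ downward so that the ``forward'' bound on all refinements $Q' \supseteq Q$ becomes a byproduct of the termination of earlier stages, while the budget bound (5) follows by bookkeeping the consumption of length-$t$ suffixes.
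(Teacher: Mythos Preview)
Your proof is correct and takes a genuinely different organizational route from the paper's.

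The paper builds the partition \emph{bottom-up}: it constructs a $d$-contiguously regular partition of $\cH^{(t)}$ inductively for $t=1,2,\ldots,r$, at each stage extending the pieces from $\cH^{(t-1)}$ by one link and then extracting new pieces labelled by complete $(t{+}1)$-tuples whenever a piece becomes too heavy. Your approach is \emph{top-down}: you work directly on the pool of $r$-chains and peel off heavy length-$t$ suffixes for $t=r,r-1,\ldots,1$. The crucial difference is in how item~(4) is verified for refinements $Q'\supsetneq Q$. In your scheme, by the time a piece with $\abs{Q}=t+1$ is carved out, every stage $s>t$ has already terminated, so the pool (and hence any subset of it) contains fewer than $d^{s}$ length-$s$ suffixes through any contiguous $Q'$ of size $s{+}1$; item~(4) is thus an immediate byproduct of the processing order and the monotone shrinking of $\cR$. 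In the paper's bottom-up scheme, the corresponding check must be threaded through the induction on $t$, and the paper's one-line justification (``the loop in step (2bi) finished'') is considerably terser than your argument. Your approach also avoids ever materializing the intermediate partitions of $\cH^{(1)},\ldots,\cH^{(r-1)}$, which are not used elsewhere. Both approaches yield the same existential conclusion, but your top-down organization makes the regularity verification more transparent.
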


\begin{proof}
The greedy algorithm that computes the decomposition is given below.
\begin{mdframed}
  \begin{algorithm}
    \label{alg:decomp}\mbox{}
    \begin{description}
    \item[Given:]
       An $r$-chain hypergraph $\cH^{(r)}$. 
    \item[Output:]
       A contiguously $d$-regular partition $\cH^{(r)} = \cup_{Q,p} \cH^{(r)}_{Q,p}$. 
     \item[Operation:]\mbox{}
    \begin{enumerate}
    \item \textbf{Initialize: } For $Q = (w)$ for each $w\in [n]$, let $\cH^{(0)}_{Q,1} = \{(w)\}$, i.e., the set of $0$-chains with tail $w$. 
    \item \textbf{Iterative Greedy Fixing:} For $t = 1,\ldots,r$, do: 
        \begin{enumerate}
            \item Initialize $\cH^{(t)}_{(\star,Q),p} = \cup_{u \in [n]} H_u \circ \cH^{(t-1)}_{Q,p}$ for every $Q \in ([n] \cup \{\star\})^{t-1} \times [n]$.
            \item For every $Q' =(u,Q)$ for $Q \in [n]^{t-1} \times [n]$, initialize $p'=1$ and do:
            \begin{enumerate}
                \item Let $\cR = \cH^{(t)}_{(\star,Q),p} \cap \{ \vec{C} \in \cH^{(t)} \mid \vec{C} \text{ contains } Q' \}$. If $|\cR| \leq d^{|Q'|-1}$, end.
                \item Otherwise, select exactly $d^{|Q'|-1}$ $t$-chains from $\cR$, remove them from $\cH^{(t)}_{(\star, Q), p}$, and put them in a new piece $\cH^{(t)}_{Q',p'}$.
                \item Set $p' = p'+1$.
            \end{enumerate}
        \end{enumerate}
      \end{enumerate}
    \end{description}
  \end{algorithm}
  \end{mdframed}
We now verify that our decomposition satisfies the properties required of a contiguously $d$-regular partition. The key observation is that the algorithm iterates over $t = 1, \dots, r$, and computes, after the $t$-th iteration, a $d$-contiguously regular partition of $\cH^{(t)}$. Indeed, we prove this by induction. For $t = 0$ this trivially holds. 

We now show the induction step. We observe that properties (1) and (2) are trivial. Property (3) holds because of the following. We observe that the pieces in the decomposition of $\cH^{(t)}$ are either obtained by extending ``old'' pieces to get $\cH^{(t)}_{(\star, Q), p} = \cup_{u \in [n]} H_u \circ \cH^{t-1}_{Q, p}$, or by adding ``new'' pieces produced in step (2b). We note that we only produce new pieces for $Q$ with $\abs{Q} \geq 2$, so we cannot violate property (3). Property (4) follows because the loop in step (2bi) finished. 

Finally, to check property (5), we need to bound $\abs{P_{t'}}$ for $0 \leq t' \leq t$. As $\abs{P_{0}} \leq n$ always holds, it remains to bound $\abs{P_{t'}}$ for $t' \leq t$. We note that all the ``new'' pieces have a $Q$ where $\abs{Q} = t+1$. Hence, for $t' \leq t-1$, $\abs{P_{t'}}$ satisfies property (5) by the induction hypothesis. To bound $\abs{P_{t}}$, we observe that each new partition contains $d^{\abs{Q'}} = d^{t}$ chains. As $\cH^{(t)}$ has at most $n (3 \delta n)^t$ chains (see, e.g., the proof of \cref{lem:val-lb}), the bound on $\abs{P_t}$ follows.
 \end{proof} 

Every $d$-contiguously regular partition of $\cH^{(r)}$ naturally relates to a ``bipartite'' polynomial $\Psi$ (i.e., $\Psi$ has additional variables $y_{Q,p}$ corresponding to labels of the buckets in the partition in addition to the original variables $x \in \Fits^n$) such that $\val(\Psi)$ upper bounds $\val(\Phi)$. Our main technical argument will construct a spectral refutation to upper bound  $\val(\Psi)$ for a $d$-contiguously regular partition of $\cH^{(r)}$ for an appropriate choice of $d$. 

\begin{definition}[Bipartite XOR Formulas from a contiguously regular partition] \label{def:bip-Psi}
Fix $r,d \in \N$ and for the $r$-chain hypergraph $\cH^{(r)}$ built from $3$-matchings $H_1, H_2,\ldots, H_n$ on $[n]$ of size $\delta n$ each, let $\cH^{(r)} = \cup_{p \in P, Q \in ([n] \cup \{\star\})^r \times [n]} \cH^{(r)}_{Q,p}$ be a contiguously $d$-regular partition. 
For each nontrivial piece $\cH^{(r)}_{Q,p}$, we define $\Psi_{i,Q,p}$ as the following XOR formula with terms corresponding $(r+1)$-chains obtained by (1) taking $r$-chains from a single piece $\cH^{(r)}_{Q,p}$ with $x_Q$ ``modded out'' from the corresponding monomial and (2) \emph{joining} with a $1$-chain $(i, C_0, w_0)$. Namely,
\begin{flalign*}
\Psi_{i,Q,p}(x) =  \sum_{C_0, w_0 : C_0 \cup \{w_0\} \in H_i} \sum_{(w_0, C_1, w_1, C_2, w_2, \dots, C_r, w_r) \in \cH_{Q,p}^{(r)}} x_{C_0} x_{w_r\setminus Q_{r+1}} \prod_{h = 1}^r x_{C_h \setminus Q_h} \enspace.
\end{flalign*}
Here, we use the convention that if $Q_h = \star$, then $C_h \setminus Q_h \coloneqq C_h$. We note that because $w_r = Q_{r+1}$, we have $x_{w_r \setminus Q_{r+1}} = 1$.

For each $0 \leq t \leq r$,  let $\Psi^{(t)}(x,y) = \sum_{i=1}^k \sum_{(Q,p) \in P_t} b_i y_{Q,p}  \Psi_{i,Q,p}(x)$. Finally, we let $\Psi(x,y) = \sum_{0 \leq t \leq r} \Psi^{(t)}(x,y)$; here, for every piece $\cH^{(r)}_{Q,p}$ in the contiguously regular partition, we introduce a new variable $y_{Q,p}$.
\end{definition}

We next observe that $\Psi$ is satisfiable and thus has a large value for every $b \in \Fits^k$. Indeed, the observation is that we have replaced the monomial $x_{Q}$ in $\Phi$ with a new variable $y_{Q,p}$ for each $(Q,p)$.
\begin{lemma}
\label{lem:val-lb-labeled}
Let $\cH^{(r)} = \cup_{p \in P, Q} \cH^{(r)}_{Q,p}$ be a contiguously $d$-regular partition. Fix $b \in \Fits^k$ and $x \in \Fits^n$. Then, there is a $y$ such that $\Psi(x,y) = \Phi(x)$. In particular, setting $x = \Code(b)$, we have that $\val(\Psi(x,y)) \geq \val(\Phi(x)) \geq k (3 \delta n)^{r+1}$.
\end{lemma}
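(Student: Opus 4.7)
The plan is a direct monomial-by-monomial matching argument. The polynomial $\Psi$ was constructed from $\Phi$ by taking each chain monomial, splitting it according to which piece $\cH^{(r)}_{Q,p}$ its $r$-chain tail lies in, and replacing the ``pinned'' sub-monomial $x_Q = \prod_{h:Q_h\neq\star} x_{Q_h}$ by a fresh variable $y_{Q,p}$. So the natural choice that recovers $\Phi$ is to set $y_{Q,p}$ back to $x_Q$. There is no obstacle; the only task is to verify the algebra honors the $\star$-conventions and the partition structure.

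Concretely, I would first fix any $x\in\Fits^n$ and define, for every nonempty piece $\cH^{(r)}_{Q,p}$ of the contiguously $d$-regular partition,
\[
y_{Q,p} \coloneqq \prod_{h \,:\, Q_h \neq \star} x_{Q_h}\mper
\]
Since $Q_{r+1}\neq \star$ by definition of $Q$, this product always contains the factor $x_{Q_{r+1}}$.

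Next, I would use the decomposition in \cref{obs:iterative-extending-chains} to write each $(r+1)$-chain $(i,C_0,w_0,C_1,w_1,\ldots,C_r,w_r)$ uniquely as a $1$-chain $(i,C_0,w_0)$ prepended to an $r$-chain $(w_0,C_1,w_1,\ldots,C_r,w_r)$. The latter lies in exactly one piece $\cH^{(r)}_{Q,p}$ of the partition (by disjointness), so each chain appearing in $\Phi_b$ contributes to exactly one summand $\Psi_{i,Q,p}$ in $\Psi(x,y)$. I would then check the monomial identity
\[
y_{Q,p}\cdot x_{C_0}\, x_{w_r\setminus Q_{r+1}}\prod_{h=1}^r x_{C_h\setminus Q_h} \;=\; x_{w_r}\prod_{h=0}^r x_{C_h}\mper
\]
The equality holds because: (i) $w_r = Q_{r+1}$ forces $x_{w_r\setminus Q_{r+1}}=1$, while the factor $x_{Q_{r+1}}$ inside $y_{Q,p}$ restores $x_{w_r}$; (ii) for $h\geq 1$ with $Q_h\neq \star$, the pinned variable $x_{Q_h}$ carried by $y_{Q,p}$ combines with $x_{C_h\setminus Q_h}$ to give $x_{C_h}$; and (iii) for $h\geq 1$ with $Q_h=\star$, the convention $C_h\setminus Q_h = C_h$ directly yields $x_{C_h}$. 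Multiplying by $b_i$ and summing over all $(i,C_0,w_0)$ with $C_0\cup\{w_0\}\in H_i$ and all pieces $(Q,p)$ reproduces $\Phi_b(x)$ term by term, establishing $\Psi(x,y)=\Phi(x)$ for this choice of $y$.

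Finally, for the value bound, I would apply \cref{lem:val-lb} with $x=\Code(b)$: taking $y^*$ to be the setting produced above from $\Code(b)$, we get $\val(\Psi)\geq \Psi(\Code(b),y^*) = \Phi_b(\Code(b)) = k(3\delta n)^{r+1}$. Since the whole argument is purely a bookkeeping verification, the only ``hard'' point is being careful with the empty-set conventions ($x_\emptyset = 1$) and the case $Q_h=\star$; once these are spelled out, the proof is essentially one line per term.
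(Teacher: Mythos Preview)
Your proposal is correct and takes exactly the same approach as the paper: set $y_{Q,p} = x_Q \coloneqq \prod_{h : Q_h \ne \star} x_{Q_h}$ and observe that this undoes the substitution made in defining $\Psi$. The paper's proof is literally that one line; you have simply spelled out the monomial-by-monomial verification and the invocation of \cref{lem:val-lb} in more detail, which is fine.
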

We note that the system of equations in $\Phi(x)$ is satisfiable, and so $\val(\Phi)$ is simply the number of constraints in the instance.
\begin{proof}
Set $y_{Q,p} = x_Q$ for every $(Q,p)$, where $x_Q \coloneqq \prod_{h : Q_h \ne \star} x_{Q_h}$. 
\end{proof}

For intuition, we observe that for random matchings $H_1, \dots, H_n$, the $O(\log n)$-contiguously regular partition is the trivial one.
\begin{lemma}[Trivial partition is regular for random matchings ] \label{lem:contiguous-regularity-is-pseudorandom}
Let $H_1, H_2, \ldots, H_n$ be uniformly random and independent $3$-matchings on $[n]$. Then, the trivial partition of the associated $r$-chain hypergraph $\cH^{(r)}$, where we set $\cH^{(r)}_{Q,p}$ to be the set of all chains with tail $w$ if $Q = (\star, \dots, \star, w)$ and $p = 1$, and empty otherwise, is $O(\log n)$-regular with probability at least $1-1/n$ over the draw of $H_i$'s. 
\end{lemma}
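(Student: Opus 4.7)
For the trivial partition, the only nonempty pieces are $\cH^{(r)}_{(\star,\ldots,\star,w),1}$, consisting of all $r$-chains with tail $w$. I will first verify properties (1), (2), (3), and (5) of \cref{def:decomp}, which are immediate from this description: (1) and (2) hold because $Q = (\star,\ldots,\star,w)$ is contiguous and every chain with tail $w$ trivially contains it; (3) holds because $|Q|=1$ for every nonempty piece; and (5) holds because $|P_0| \leq n$ and $|P_t|=0$ for $t \geq 1$. The content of the lemma is therefore to establish the regularity condition (4), which after upper-bounding by dropping the ``extends to some $\vec{C}$'' requirement reduces to showing that for every contiguous $Q' \supseteq (\star,\ldots,\star,w)$ with $|Q'| = t+1$, the number of $t$-chains $\vec{C'} \in \cH^{(t)}$ containing $Q'$ is at most $d^{t}$ with $d = O(\log n)$.

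\textbf{The key step: counting chains backward.} Writing $Q' = (\star,\ldots,\star,q_1,\ldots,q_t,w)$, a $t$-chain $(u,C_1,w_1,\ldots,C_t,w_t)$ contains $Q'$ iff $w_t=w$ and $q_h \in C_h$ for each $h$. Since $H_{w_{h-1}}$ is a matching, given $w_{h-1}$ and $w_h$ the hyperedge $C_h \cup \{w_h\}$ is uniquely determined as the hyperedge of $H_{w_{h-1}}$ containing $w_h$, and requiring $q_h \in C_h$ then amounts to asking that $\{q_h,w_h\}$ lie in a common hyperedge of $H_{w_{h-1}}$. The plan is to enumerate the chain backward from $w_t = w$: for each unordered pair $\{a,b\} \subseteq [n]$, define
\begin{equation*}
N_{a,b} \coloneqq \big|\{v \in [n] : \{a,b\} \text{ lies in some hyperedge of } H_v\}\big|,
\end{equation*}
and observe that for a fixed $w_h$, the number of admissible predecessors $w_{h-1}$ is at most $N_{q_h,w_h}$. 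Iterating over $h = t, t-1, \ldots, 1$ determines the entire chain (the head $w_0$ is the final predecessor, with no further constraint) and bounds the number of $t$-chains containing $Q'$ by $\prod_{h=1}^{t} N_{q_h,w_h} \leq (\max_{a \neq b} N_{a,b})^{t}$.

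\textbf{Concentration of $N_{a,b}$.} It remains to argue that $\max_{a \neq b} N_{a,b} = O(\log n)$ with probability at least $1 - 1/n$. For a uniformly random $3$-matching $H_v$ of size $\delta n$, by symmetry each unordered pair lies in $H_v$ with probability $3\delta n / \binom{n}{2} = O(1/n)$; since the $H_v$'s are independent, $N_{a,b}$ is a sum of $n$ independent Bernoullis of mean $O(1/n)$, so $\E[N_{a,b}] = O(1)$. A standard Chernoff/Bernstein bound (\cref{fact:bernstein}) then yields $\Pr[N_{a,b} > C\log n] \leq n^{-\omega(1)}$ for a sufficiently large constant $C$, and union bounding over the $\binom{n}{2}$ pairs gives $\max_{a \neq b} N_{a,b} \leq O(\log n)$ with probability $\geq 1 - 1/n$. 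Taking $d = O(\log n)$ then witnesses that the trivial partition is $d$-contiguously regular.

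\textbf{Where the subtlety lies.} The only step requiring real care is the structural reduction in the second paragraph: I must check carefully that, given the matching property of each $H_v$, after fixing $w_t = w$ every $t$-chain containing $Q'$ is obtained by sequentially choosing the pivots $w_{h-1}$ subject only to the pair condition counted by $N_{q_h,w_h}$, with no extra degree of freedom at the head or in the choice of hyperedge. Once this is in place, the remainder is the routine Chernoff-plus-union-bound calculation above.
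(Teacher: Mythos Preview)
Your proposal is correct and follows essentially the same approach as the paper: bound the pair co-degree $N_{a,b}$ by $O(\log n)$ via Chernoff plus a union bound over $\binom{n}{2}$ pairs, then count $t$-chains containing a contiguous $Q'$ by reconstructing the pivots backward from the tail, using the matching property to bound each step by the co-degree. One minor notational point: the intermediate expression $\prod_{h=1}^{t} N_{q_h,w_h}$ is not a single number since $w_h$ depends on the earlier choices, but your subsequent bound $(\max_{a\neq b} N_{a,b})^t$ is what the iterated argument actually yields, so the conclusion is sound.
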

\begin{remark}
Eventually (in \cref{lem:rowpruning,sec:step5,sec:step6}), we will set the parameter $d$ to be constant. However, if the matchings $H_1, \dots, H_n$ are random, then with high probability the trivial partition will \emph{not} be $O(1)$-regular. However, if we run \cref{alg:decomp} to decompose the $r$-chains, then with high probability over the draw of $H_1, \dots, H_n$, only a $o(1)$-fraction of the $r$-chains will be placed in a ``non-trivial component'' of the decomposition, i.e., in a piece $\cH^{(r)}_{Q,p}$ where $\abs{Q} \geq 2$.
Phrased differently, if we discard a $o(1)$-fraction of hyperedges from the random matchings, then the trivial partition of the $r$-chain hypergraph of the remaining hyperedges will be $O(1)$-regular. 
This fact is somewhat analogous to the fact that sparse random graphs are not, e.g., triangle-free with high probability, but can be made triangle-free by removing a very small number of edges.
\end{remark}

\begin{proof}
We claim that the trivial partition of $\cH^{(r)}$ is $d$-regular for $d = O(\log n)$ with probability at least $1-1/n$. In the trivial refinement, as defined above, we partition $\cH^{(r)}$ by simply placing a chain in $\cH^{(r)}_{Q,p}$ if $Q = (\star, \dots, \star, w)$, $p = 1$, and the tail of the chain is $Q_{r+1} = w$.

Towards this, we first prove that for every pair ${u,v} \subseteq [n]$, the number of hyperedges in the multiset $\cup_{u \in [n]} H_i$ that contain $u$ and $v$ is at most $O(\log n)$. To see this, observe that the chance that there are some $u,v$ that co-occur in a hyperedge in at least $b$ different $H_i$'s is at most $n^2 {n \choose b} (3/n)^b \leq 3^b/b! \leq 1/n$ if $b = c \log_2 n$ for some large enough $c>0$. We will now set $d_2 = 2c \log_2 n$ and confirm $d$-regularity of the trivial refinement. 

Now take any contiguous $Q \in ([n] \cup \{\star\})^{r} \times [n]$ of size $|Q| = t+1$. Consider the chains $\vec{C'} = (w_0,C_1,w_1,\ldots, C_t,w_t) \in \cH^{(t)}$ that contain $Q$. We now iteratively choose \\
$(w_{t-1},C_t,w_t), (w_{t-2},C_{t-1},w_{t-1}), \ldots, (w_0,C_1,w_1)$. Assuming we have made the first $h$ choices in the list, we have determined $w_{t-h-1}$. There are at most $c \log_2 n$ choices for a hyperedge in any of $H_i$s that contains $Q_{t-h-1}$ and $w_{t-h-1}$ and given this choice, at most $2$ choices for $w_{t-h}$. So in total, we have at most $(2c \log_2 n)^{t}=d^t$ choices. 
\end{proof}

\section{Spectral Refutation via Kikuchi Matrices}
\label{sec:kikuchimethod}
In \cref{sec:regular-partition}, we defined polynomials $\Psi^{(t)}(x,y)$ such that $\E_b[\val(\Phi)] \leq \sum_{t = 0}^r \E_b[\val(\Psi^{(t)})]$. Thus, to prove \cref{mthm:main}, we need to upper bound $\E_b[\val(\Psi^{(t)})]$ for each $t$. In this section, we will define, for each $0 \leq t \leq r$, a Kikuchi matrix $A^{(t)}$ such that $\E_b[\val(\Psi^{(t)})^2] \leq \boolnorm{A^{(t)}}$. Then, in \cref{sec:rowpruning} we shall bound $\boolnorm{A^{(t)}}$ and finish the proof.

\subsection{Step 1: the Cauchy--Schwarz trick} First, we show that we can relate $\Psi^{(t)}(x,y)$ to a certain ``cross-term'' polynomial $f_M$ obtained via applying the Cauchy--Schwarz inequality.

\begin{lemma}[Cauchy--Schwarz trick] \label{lem:cauchy-schwarz}
Let $M$ be a maximum directed matching\footnote{A directed matching is a matching, only the edges are additionally directed}\footnote{This is a perfect matching if $k$ is even, and will leave one element of $[k]$ unmatched if $k$ is odd.} of $[k]$ and let $f_M$ be the cross-term polynomial defined as
\begin{equation*}
f_M(x) = f_M^{(t)} = \sum_{\{i ,  j\} \in M} b_i b_j \sum_{(Q,p) \in P_t} \Psi_{i,Q,p}(x) \Psi_{j,Q,p} (x) \mper
\end{equation*}
Then,
\begin{equation*}
\E_{b \gets \Fits^k} \val(\Psi^{(t)}) \leq  k  \Paren{|P_t| (3 \delta n)^{r- t} d^{t}}^2  + |P_t| 2 k \E_{M} \E_{b \gets \Fits^k} [\val(f_M^{(t)})] \mcom
\end{equation*}
where the expectation $\E_{M}$ is over a uniformly random maximum matching $M$.
\end{lemma}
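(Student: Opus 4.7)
The plan is to derive this bound via Cauchy--Schwarz on the $(Q,p)$-sum followed by the random-matching averaging trick from \cref{sec:warmupstep1} to expose independence of the $b_i b_j$ coefficients. As a first step, since $\Psi^{(t)}(x,y) = \sum_{i,(Q,p)} b_i y_{Q,p} \Psi_{i,Q,p}(x)$ is linear in $y$ and each $y_{Q,p}$ ranges over $\Fits$, optimizing over $y$ first gives $\val(\Psi^{(t)}) = \max_x \sum_{(Q,p)}\Abs{\sum_i b_i \Psi_{i,Q,p}(x)}$. Squaring and Cauchy--Schwarzing the outer sum of $|P_t|$ terms yields
\[
\val(\Psi^{(t)})^2 \;\leq\; |P_t| \cdot \max_x \sum_{(Q,p) \in P_t}\Paren{\sum_i b_i \Psi_{i,Q,p}(x)}^2,
\]
and expanding the inner square (using $b_i^2 = 1$) splits the right-hand side into a deterministic diagonal piece $\sum_{(Q,p),i} \Psi_{i,Q,p}(x)^2$ and a zero-mean off-diagonal piece $\sum_{(Q,p),i\ne j} b_i b_j \Psi_{i,Q,p}(x) \Psi_{j,Q,p}(x)$.

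For the diagonal part, the key ingredient is the uniform bound $|\Psi_{i,Q,p}(x)| \leq (3\delta n)^{r-t} d^t$, which I will obtain by counting each term of $\Psi_{i,Q,p}$ as the choice of a hyperedge $E \in H_i$ (at most $\delta n$ choices), a pivot $w_0 \in E$ (at most $3$ choices), and a chain in $\cH^{(r)}_{w_0,Q,p}$ (at most $(3\delta n)^{r-t-1} d^t$ by \cref{obs:pieceupperbound}, since $|Q| = t+1$). Summing $\Psi_{i,Q,p}(x)^2$ over $i \in [k]$ and $(Q,p) \in P_t$ and then folding in the outer $|P_t|$ factor from Cauchy--Schwarz gives the deterministic term $k (|P_t|(3\delta n)^{r-t} d^t)^2$ in the stated bound.

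For the off-diagonal part, I will introduce a uniformly random maximum directed matching $M$ of $[k]$, exactly as in \cref{sec:warmupstep1}: a short calculation (splitting on the parity of $k$) shows that each ordered pair $(i,j)$ with $i \ne j$ lies in $M$ with probability at least $1/(2k)$ (precisely $1/(2(k-1))$ for even $k$ and $1/(2k)$ for odd $k$), hence
\[
\sum_{i \ne j} b_i b_j \sum_{(Q,p)} \Psi_{i,Q,p}(x) \Psi_{j,Q,p}(x) \;\leq\; 2k \cdot \E_M[f_M^{(t)}(x)].
\]
Since the right-hand side upper bounds the left for every $x$ and $b$, I can take $\max_x$ to replace $f_M^{(t)}(x)$ by $\val(f_M^{(t)})$ inside the expectation (using $\max_x \E_M \leq \E_M \max_x$), multiply through by the outer $|P_t|$ from Cauchy--Schwarz, and finally take expectation over $b$. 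Combining with the diagonal bound then yields the claimed inequality (with the left-hand side understood as $\E_b[\val(\Psi^{(t)})^2]$, which is the quantity actually used in the Khintchine step in \cref{sec:kikuchimethod}). The only slightly delicate points are obtaining the exact $(3\delta n)^{r-t} d^t$ factor from \cref{obs:pieceupperbound} and tracking the parity-of-$k$ correction for the random matching; neither is a substantive obstacle, so the proof is essentially mechanical once the two computations are assembled.
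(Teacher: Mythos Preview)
Your proposal is correct and follows essentially the same argument as the paper: Cauchy--Schwarz on the $(Q,p)$-sum (the paper phrases this as $\sum_{(Q,p)} y_{Q,p}^2 = |P_t|$, which is equivalent to your ``optimize over $y$, then square'' formulation), split into diagonal and off-diagonal, bound the diagonal via \cref{obs:pieceupperbound}, and handle the off-diagonal via the random-matching identity. You also correctly note that the left-hand side is really $\E_b[\val(\Psi^{(t)})^2]$, which is how the lemma is used in \cref{sec:step5}.

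One small gap: your counting argument for $|\Psi_{i,Q,p}(x)| \leq (3\delta n)^{r-t} d^t$ invokes the bound $|\cH^{(r)}_{w_0,Q,p}| \leq (3\delta n)^{r-t-1} d^t$ from \cref{obs:pieceupperbound}, but that bound is stated only for $|Q| \leq r$, i.e., $t \leq r-1$. When $t = r$, your ``$3\delta n$ choices for $(E,w_0)$, then bound the chain'' order of counting overcounts. The paper instead reverses the order in this case: first choose the $r$-chain in $\cH^{(r)}_{Q,p}$ (at most $d^r$ choices by the first part of \cref{obs:pieceupperbound}), which fixes $w_0$, and then note there is at most one $C_0$ with $C_0 \cup \{w_0\} \in H_i$ since $H_i$ is a matching. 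This gives the same bound $(3\delta n)^0 d^r = d^r$ and completes the case split.
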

\begin{proof}
We will first apply the Cauchy--Schwarz inequality to eliminate the $y$ variables:
\begin{align*}
\Psi^{(t)}(x,y)^2 &= \Paren{\sum_{(Q,p) \in P_t} y_{Q,p} \Paren{\sum_{i=1}^k b_i \Psi_{i,Q,p}}}^2\\
&\leq \Paren{\sum_{(Q,p) \in P_t} y_{Q,p}^2} \Paren{\sum_{(Q,p) \in P_t} \Paren{\sum_{i = 1}^k b_i \Psi_{i,Q,p}}^2}\\
&= |P_t| \Paren{\sum_{(Q,p) \in P_t} \sum_{i =1}^k \Psi^{2}_{i,Q,p} + \sum_{(Q,p) \in P_t} \sum_{i \neq j \in [k]} b_i b_j \Psi_{i,Q,p} (x) \Psi_{j,Q,p}}\enspace.
\end{align*}

Observe that $\abs{\Psi_{i,Q,p}(x)}$ is at most the number of $(w_0, C_1, w_1, \dots, C_r, w_r) \in \cH^{(r)}_{Q,p}$ and $C_0 \in {[n] \choose 2}$ such that $C_ \cup \{w_0\} \in H_i$. If $\abs{Q} = r + 1$, i.e., $t = r$, then we observe that by \cref{obs:pieceupperbound}, we have $\abs{\cH^{(r)}_{Q,p}} \leq d^{\abs{Q} - 1}$, and for each choice of $(w_0, C_1, w_1, \dots, C_r, w_r) \in \cH^{(r)}_{Q,p}$, there is at most one choice of $C_0$. If $\abs{Q} \leq r$, then we have at most $(3 \delta n)$ choices for $(C_0, w_0)$, and for each $w_0$, we have by \cref{obs:pieceupperbound}  that $\abs{\cH^{(r)}_{w_0, Q, p}} \leq (3\delta n)^{r-|Q|} d^{|Q|-1}$, giving us $(3\delta n)^{r-|Q| + 1} d^{|Q|-1}$ choices in total. We thus have that $\abs{\Psi_{i,Q,p}(x)} \leq (3\delta n)^{r-|Q| + 1} d^{|Q|-1}$, regardless of $\abs{Q} = t + 1$.

Thus, for $|Q|=t+1$, $\sum_{Q,p} \sum_{i =1}^k \Psi^{2}_{i,Q,p} \leq k |P_t| \Paren{(3\delta n)^{r-t} d^{t}}^2$. This gives us an upper bound of $k |P_t| \Paren{(3\delta n)^{r-t} d^{t}}^2$ on the first term. 

Let's now analyze the second term. Since a uniformly random maximum matching on $[k]$ contains a (directed) edge $(i,j)$ with probability exactly $\frac{1}{2(k-1)}$ if $k$ is even, and $\frac{1}{2k}$, if $k$ is odd, we have:
\begin{align*}
\sum_{i \neq  j \in [k]} b_i b_j \sum_{(Q,p) \in P_t} \Psi_{i,Q,p}(x) \Psi_{j,Q,p} (x) 
\leq 2k \E_M \sum_{(i,j) \in M} b_i b_j \sum_{(Q,p) \in P_t} \Psi_{i,Q,p}(x) \Psi_{j,Q,p} (x) = 2k \E_M [f_M^{(t)}] \mper
\end{align*} 
Using that $\val(\E_M[ f_M^{(t)}])\leq \E_M [ \val(f_M^{(t)})]$ completes the proof.
\end{proof}

\subsection{Step 2: defining the Kikuchi matrices} It thus remains to bound $\val(f_M)$ for an arbitrary directed maximum matching $M$. 

For $i \in [k]$ and $(Q,p)$, we let $\cH^{(r+1)}_{i, Q, p}$ denote the set of chains in $\cH^{(r+1)}$ of the form $(i, C_0, w_0, C_1, w_1, \dots, C_r, w_r)$ where $(w_0, C_1, w_1, \dots, C_r, w_r) \in \cH^{(r)}_{Q,p}$. We define the Kikuchi matrices that we consider below.
\begin{definition}[Kikuchi matrices for a fixed $t$]
\label{def:kikuchi}
Let $i,j \in [k]$.

Let $\vec{C} = (i,C_0, w_0, C_1, w_1, C_2, w_2, \dots, C_r, w_r) \in \cH^{(r+1)}_{i,Q,z}$ and $\vec{C'} = (j,C'_0, w'_0, C'_1, w'_1, C'_2, w'_2, \dots, C'_r, w'_r) \in \cH^{(r+1)}_{j,Q,z}$. For $Q$ of size $|Q|=t+1$, we let $A_{i,j,Q,p}^{(\vec{C}, \vec{C'})} \in \Bits^{{{[n]} \choose \ell}^{2r+2-t}}$ be the matrix with rows and columns by indexed by $(2r+2-t)$-tuples of sets $(S_0, \dots, S_{r-t}, S'_0, \dots, S'_{r-t}, R_1, \dots, R_{t})$  of size exactly $\ell$. Note that when $t=0$, we do not have any ``$R_h$'s'' in the row/column index tuples. 

We set $A_{i,j,Q,p}^{(\vec{C}, \vec{C'})}((S_0, \dots, S_{r-t}, S'_0, \dots, S'_{r-t}, R_1, \dots, R_{t}), (T_0,  \dots, T_{r-t}, T'_0, \dots T'_{r-t}, W_1, \dots, W_{t}))$ equal to $1$ if the following holds, and otherwise we set this entry to be $0$.
\begin{enumerate}
\item For $h = 0, \dots, r - t$, we have $S_h \oplus T_h = C_h$,
\item For $h = 0, \dots, r - t$, we have $S'_h \oplus T'_h = C'_h$,
\item For $h = 1, \dots, t$, we have $R_h = \{u\} \cup U$, $W_h = \{v\} \cup U$, where $C_{r - t + h} = \{u, Q_h\}$, $C'_{r - t + h} = \{v, Q_h\}$, and $U \subseteq [n]$ is a set of size $\ell - 1$ with $u,v \notin U$.
\end{enumerate}
We let $A_{i,j} = \sum_{(Q,p) \in P_t} \sum_{\vec{C} \in \in \cH^{(r+1)}_{i,Q,p} \vec{C'} \in \cH^{(r+1)}_{j,Q,p}} A_{i,j,Q,p}^{(\vec{C}, \vec{C'})}$, and for any matching $M$ on $[k]$, let $ A_M = \sum_{(i,j) \in M} b_i b_j A_{i,j}$.
\end{definition}

\subsection{Step 3: relating the ``Cauchy--Schwarzed'' polynomial $f_M$ and the Kikuchi matrix $A$}
The following lemma shows that we can express $f_M(x)$ as a (scaling of a) quadratic form on the matrix $A$.
\begin{lemma}
\label{lem:completeness}
Let $x \in \Fits^n$, and let $x' \in \Fits^{N}$, where $N = {n \choose \ell}^{2r + 2 - t}$, denote the vector where the $(S_0, S_1, \dots, S_{r-t}, S'_0, S'_1, \dots, S'_{r-t}, R_1, \dots, R_{t})$-th entry of $x'$ is $\prod_{h = 0}^{r-t} x_{S_h} x_{S'_h} \prod_{h = 1}^{t} x_{R_h}$. Then, ${x'}^{\top} A x' = D f_M(x)$, where $D = 2^{2r + 2 - 2t} {n - 2 \choose \ell - 1}^{2r + 2 - t}$. 
Note that $D/N = 2^{2r + 2 - 2t} \left(\frac{\ell(n - \ell)}{n(n-1)}\right)^{2r + 2- t}$. 
In particular, $\val(f_M) \leq \frac{1}{D} \boolnorm{A}$. 
\end{lemma}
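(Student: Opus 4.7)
The goal is to verify a direct algebraic identity between the quadratic form ${x'}^\top A x'$ and the polynomial $f_M(x)$, modulo a combinatorial counting factor $D$. My plan is to first reduce to working with a single ``atomic'' matrix $A_{i,j,Q,p}^{(\vec{C}, \vec{C'})}$ for one fixed pair of $(r+1)$-chains. Expanding $A = \sum_{(i,j) \in M} b_i b_j \sum_{(Q,p) \in P_t} \sum_{\vec{C}, \vec{C'}} A_{i,j,Q,p}^{(\vec{C}, \vec{C'})}$ and $f_M(x) = \sum_{(i,j) \in M} b_i b_j \sum_{(Q,p) \in P_t} \Psi_{i,Q,p}(x)\Psi_{j,Q,p}(x)$, it suffices to show that for each fixed pair $(\vec{C}, \vec{C'})$, the quadratic form ${x'}^\top A_{i,j,Q,p}^{(\vec{C}, \vec{C'})} x'$ equals $D$ times the corresponding monomial in $\Psi_{i,Q,p}(x)\Psi_{j,Q,p}(x)$.

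For fixed $(\vec{C}, \vec{C'})$, I would expand ${x'}^\top A_{i,j,Q,p}^{(\vec{C}, \vec{C'})} x'$ as a sum over row/column index pairs $(I,J)$, where $I = (S_0, \dots, S_{r-t}, S'_0, \dots, S'_{r-t}, R_1, \dots, R_t)$ and $J = (T_0, \dots, T_{r-t}, T'_0, \dots, T'_{r-t}, W_1, \dots, W_t)$. The key simplification is that since $x \in \Fits^n$, we have $x_v^2 = 1$, so $x_S x_T = x_{S \oplus T}$ for any $S, T$. Thus for each $h \in \{0, \dots, r-t\}$, whenever $S_h \oplus T_h = C_h$ we get $x_{S_h} x_{T_h} = x_{C_h}$ (and similarly for primes), while for each $h \in \{1, \dots, t\}$, $R_h = \{u\} \cup U$ and $W_h = \{v\} \cup U$ gives $x_{R_h} x_{W_h} = x_u x_v$. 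Using $C_{r-t+h} = \{u, Q_h\}$ and $C'_{r-t+h} = \{v, Q_h\}$, this last product rewrites as $x_{C_{r-t+h} \setminus Q_h} x_{C'_{r-t+h} \setminus Q_h}$ (with $x_{Q_h}^2 = 1$), matching precisely the ``$Q$-modded-out'' monomial appearing in $\Psi_{i,Q,p} \Psi_{j,Q,p}$.

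It then remains to count the number of valid $(I, J)$ pairs. For each $h \in \{0, \dots, r-t\}$, the pairs $(S_h, T_h) \in \binom{[n]}{\ell}^2$ with $S_h \oplus T_h = C_h$ are counted by choosing which element of $C_h$ lies in $S_h \setminus T_h$ (giving a factor of $2$) and then choosing the common $\ell-1$ elements of $S_h \cap T_h$ from $[n] \setminus C_h$ (giving $\binom{n-2}{\ell-1}$), for a total of $2\binom{n-2}{\ell-1}$; and identically for $(S'_h, T'_h)$. For each $h \in \{1, \dots, t\}$, the pair $(R_h, W_h)$ is parameterized by $U \subseteq [n] \setminus \{u,v\}$ of size $\ell - 1$, giving $\binom{n-2}{\ell-1}$ choices. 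Multiplying these independent factors yields exactly $D = 2^{2r+2-2t}\binom{n-2}{\ell-1}^{2r+2-t}$, completing the identity.

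Finally, since $x \in \Fits^n$ implies $x' \in \Fits^N$ entry-wise (each $x'_I$ is a product of $\pm 1$'s), we conclude
\[
D \cdot f_M(x) \;=\; {x'}^\top A x' \;\leq\; \max_{u,v \in \Fits^N} u^\top A v \;=\; \boolnorm{A}\mcom
\]
and maximizing over $x \in \Fits^n$ gives $\val(f_M) \leq \boolnorm{A}/D$. I expect the main ``obstacle'' to be purely notational: keeping straight the $(r-t+1)$-fold product of $(S_h, T_h)$-pairs, the $t$-fold product of $(R_h, W_h)$-pairs, and their separate algebraic contributions; the substantive content is the elementary identity $x_S x_T = x_{S \oplus T}$ together with the straightforward product count. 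A minor subtlety to address is the degenerate possibility $u = v$ (for some $h$), which I would handle by observing that both the count of $(R_h, W_h)$ pairs and the resulting product $x_u x_v = 1$ are consistent with $\binom{n-2}{\ell-1}$ choices under the stated constraint $u,v \notin U$, so the identity is unaffected.
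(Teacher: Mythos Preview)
Your proposal is correct and follows essentially the same approach as the paper: both reduce by linearity to a single atomic matrix $A_{i,j,Q,p}^{(\vec{C},\vec{C'})}$, use the identity $x_S x_T = x_{S\oplus T}$ for $x\in\Fits^n$ to show every nonzero entry contributes the same monomial, and then count nonzero entries by independently enumerating the $(S_h,T_h)$, $(S'_h,T'_h)$, and $(R_h,W_h)$ pairs to obtain $D = 2^{2r+2-2t}\binom{n-2}{\ell-1}^{2r+2-t}$. One small remark: your handling of the degenerate case $u=v$ is not quite right---if $u=v$ then the constraint $u,v\notin U$ yields $\binom{n-1}{\ell-1}$ rather than $\binom{n-2}{\ell-1}$ choices---but the paper's proof does not address this edge case either, and it does not affect the argument in any substantive way.
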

\begin{proof}
Expanding definitions, we have
\begin{flalign*}
&x'^{\top} A x' = \sum_{(i,j) \in M} b_i b_j \sum_{(Q,p) \in P_t} \sum_{\vec{C} \in \cH^{(r+1)}_{i,Q,p}, \vec{C'} \in \cH^{(r+1)}_{j,Q,p}} x'^{\top} A_{i,j,Q,p}^{(\vec{C}, \vec{C'})} x' \enspace,\\
&f_M(x) = \sum_{(i,j) \in M}b_i b_j  \sum_{(Q,p) \in P_t}  \Psi_{i,Q,p}(x) \Psi_{j,Q,p}(x) \enspace,
\end{flalign*}
where we recall that
\begin{flalign*}
&\Psi_{i,Q,p}(x) = \sum_{C_0, w_0 : C_0 \cup \{w_0\} \in H_i} x_{C_0} \sum_{(w_0, C_1, w_1, C_2, w_2, \dots, C_r, w_r) \in \cH^{(r)}_{Q,p}} \prod_{h = 1}^{r - t} x_{C_h} \prod_{h = 1}^{t} x_{C_{r - t + h} \setminus Q_h} \\
&= \sum_{\vec{C} \in \cH^{(r+1)}_{i,Q,p}}  \prod_{h = 0}^{r - t} x_{C_h} \prod_{h = 1}^{t} x_{C_{r - t + h} \setminus Q_h} 
\end{flalign*}
Thus, it suffices to show that
\begin{flalign*}
x'^{\top} A_{i,j,Q,p}^{(\vec{C}, \vec{C'})} x' &= D \cdot  \prod_{h = 0}^{r - t} x_{C_h} \prod_{h = 1}^{t} x_{C_{r - t + h}\setminus Q_h} \prod_{h = 0}^{r - t} x_{C'_h} \prod_{h = 1}^{t} x_{C'_{r - t + h}\setminus Q_h} \enspace.
\end{flalign*}
Let  $\vec{S} = (S_0, S_1, \dots, S_{r-t}, S'_0, S'_1, \dots, S'_{r-t},R_1, \dots, R_t)$ and $\vec{T} = (T_0,  \dots, T_{r-t}, T'_0, \dots T'_{r-t},W_1, \dots, W_{t})$ be such that $A_{i,j,Q,p}^{(\vec{C},\vec{C'})}(\vec{S},\vec{T}) = 1$. Then, we have that
\begin{flalign*}
x'_{\vec{S}} x'_{\vec{T}} &= \prod_{h = 0}^{r-t} x_{S_h} x_{T_h} x_{S'_h} x_{T'_h} \prod_{h = 1}^{t} x_{R_h} x_{W_h} =  \prod_{h = 0}^{r-t} x_{S_h \oplus T_h} x_{S'_h \oplus T'_h} \prod_{h = 1}^{t} x_{R_h \oplus W_h} \\
&= \prod_{h = 0}^{r-t} x_{C_h} x_{C'_h} \prod_{h = 1}^{t} x_{C_{r - t+h} \setminus Q_h} x_{C'_{r - t+h} \setminus Q_h} \enspace,
\end{flalign*}
where we use that this entry of $A_{i,j,Q,p}^{(\vec{C},\vec{C'})}(\vec{S},\vec{T})$ is nonzero and $x \in \Fits^n$ so that $x_u^2 = 1$ for any $u \in [n]$.

Next, we prove that there are exactly $D$ pairs $(\vec{S},\vec{T})$ where $A_{i,j,Q,p}^{(\vec{C},\vec{C'})}(\vec{S},\vec{T})$ is nonzero. We observe that, for each $h = 0, \dots, r - t+1$, there are exactly $2 {n - 2 \choose \ell - 1}$ pairs $(S_h, T_h)$ such that $S_h \oplus T_h = C_h$. Indeed, this is because $C_h = \{u,v\}$ has size exactly $2$, so $S_h \oplus T_h = C_h$ implies that $S_h = \{u\} \cup U$ and $T_h = \{v\} \cup U$. There are $2$ choices for $u \in C_h$ to assign to $S_h$, and then afterward there are ${n - 2 \choose \ell - 1}$ choices for the set $U$, which is a set of size $\ell - 1$ not containing either of $u,v$. For $h = 1, \dots, t$, there are exactly ${ n - 2 \choose \ell - 1}$ pairs $(R_h, W_h)$ satisfying the condition, as this is the number of choices for $U$. Here, note that we do not have the additional factor of $2$ because the $u$ coming from $C_{r - t + h}$ must be in $R_h$. Combining, we see that $D = 2^{2(r - t + 1)} {n - 2 \choose \ell - 1}^{2(r - t + 1) + t}$, as required. The ``in particular'' follows by noting $\val(f_M) = \max_{x \in \Fits^n} f_M(x)$ and $\boolnorm{A} = \max_{x', y' \in \Fits^{N}} x'^{\top} A y'$.
\end{proof}

\subsection{Step 4: bounding the $\infty \to 1$-norm of $A$ via row pruning}
By \cref{lem:completeness}, in order to upper bound $\E_b[\val(f_M)]$, it suffices to bound $\E_b[\boolnorm{A}]$.

We always have $\boolnorm{A} \leq N \norm{A}_2$. It turns out that $\norm{A}_2$ is governed by the maximum degree (relative to the average) of any of $A_{i,j}$, where by degree we mean the number of nonzero entries in a row/column. However, the $A_{i,j}$'s can have rows of degree significantly larger the the average, and this prohibits the spectral norm from giving a good bound on $\val(f_M)$.

The key observation is that, for a certain choice of our parameters $\ell, r, d$, the fraction of these ``bad rows'' is very small and does not noticeably affect $\boolnorm{A}$. We can thus ``zero out'' these bad rows and then use the spectral certificate on the ``pruned matrix'' to bound $\val(f_M)$. Establishing this combinatorial fact is the crux of our proof and is captured in the following lemma. The next section (\cref{sec:rowpruning}) is dedicated to the proof of this lemma, and constitutes the key component of the proof. In fact, at a high level all the steps done so far in the proof are somewhat generic, and our key innovation is choosing the decomposition step carefully to ensure that this row pruning step succeeds.

\begin{restatable}[Row pruning]{lemma}{RowPruning}
\label{lem:rowpruning}
Fix $r \geq 1$, and let $\cH^{(r)}$ denote the $r$-chain hypergraph. Let $\cup_{Q,p} \cH^{(r)}_{Q,p}$ be a $d$-contiguously regular partition of $\cH^{(r)}$. Fix $0 \leq t \leq r$ and a maximum directed matching $M$ on $[k]$. Let $A$ be the Kikuchi matrix defined in \cref{def:kikuchi}, which depends on $r$, $t$, the pieces $\cup_{(Q,p) \in P_t} \cH^{(r)}_{Q,p}$ of the refinement, and the matching $M$.

Let $\Gamma > 0$ be a constant, and let $\Delta = 9 \cdot 2^{2r + 2 - 2t} (\ell/n)^{2r + 2 - t} d^t (3 \delta n)^{2r + 1 - t}$. Let $\cB$ denote the set of rows $\vec{S}$ such that there exists $i \ne j \in [k]$ where the $\vec{S}$-th row/column of $A_{i,j}$ has more than $\Delta$ nonzero entries. 

Suppose there is $\gamma \in (0,1)$ such that
\begin{enumerate}[(1)]
\item\label{item:smallgamma} $\gamma \leq \frac{1}{c \Gamma r^3 \log_2 n}$, where $c$ is a sufficiently large absolute constant;
\item\label{item:largeell} $\frac{2}{3 \delta \ell}  \leq \gamma$;
\item\label{item:largedensity} $\left(\frac{3 \gamma \delta \ell}{4}\right)^{r + 1}  \geq n$;
\item\label{item:couplingbound}$(2r + 2) \exp(-\ell/64r^2) \leq \ell^{-\Gamma r}$;
\item $d = 3 \delta \ell \gamma$.
\end{enumerate}
Then, the number of bad rows is $\abs{\cB} \leq 2 \ell^{-\Gamma r} N$.
\end{restatable}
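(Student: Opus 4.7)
The plan is to reduce the row-pruning bound to a tail estimate for a ``degree polynomial'' that upper bounds the degree of a row in $A_{i,j}$, and then invoke the partite polynomial tail bound of \cref{lem:partitepolyconc}. Fix $i \neq j \in [k]$. By inspection of \cref{def:kikuchi}, the degree of a row $\vec{S} = (S_0,\ldots,S_{r-t},S'_0,\ldots,S'_{r-t},R_1,\ldots,R_t)$ in $A_{i,j}$ equals the number of chain pairs $(\vec{C},\vec{C'}) \in \cH^{(r+1)}_{i,Q,p} \times \cH^{(r+1)}_{j,Q,p}$, over $(Q,p) \in P_t$, for which a valid column index $\vec{T}$ exists; this amounts to requiring that each $C_h$ (resp.\ $C'_h$) meets $S_h$ (resp.\ $S'_h$) in exactly one element, and that $C_{r-t+h} \setminus \{Q_h\} \in R_h$ (and similarly for $C'_{r-t+h}$) for $h=1,\ldots,t$. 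This count is upper bounded by the value of a non-negative multilinear $(2r+2-t)$-partite polynomial $\Deg_{i,j}(\vec{s})$ of degree $2r+2-t$ in indicator variables of the sets in $\vec{S}$. Replacing the uniform distribution on $\ell$-sized sets by the $p$-biased product distribution with $p = \ell/n$ loses at most $(2r+2)\exp(-\ell/64 r^2) \leq \ell^{-\Gamma r}$ by item~\ref{item:couplingbound}, via a standard Chernoff-plus-coupling argument analogous to the one in the warmup (\cref{sec:warmupstep3}).

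To apply \cref{lem:partitepolyconc} with base value $\mu \coloneqq \mu_{(\star,\ldots,\star)}$ and decay rate $\gamma$, we must verify $\mu_Z \coloneqq \E[\partial_Z \Deg_{i,j}] \leq \mu \cdot \gamma^{|Z|}$ for every $Z$. Multilinearity and the partite structure factor $\mu_Z$ as $(\ell/n)^{(2r+2-t)-|Z|} \cdot \deg_{i,j}(Z)$, where $\deg_{i,j}(Z)$ counts chain pairs ``containing'' $Z$, i.e., the positions fixed by $Z$ in the $C_h$, $C'_h$, or ``non-$Q_h$'' elements of $C_{r-t+h}, C'_{r-t+h}$ must match the $Z$-values. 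Combining \cref{obs:pieceupperbound} with item (5) of \cref{def:decomp} gives $\deg_{i,j}(\emptyset) \leq O(n (3\delta n)^{2r-t} d^t)$, so with $d = 3\delta \ell \gamma$ the base value $\mu$ matches $\Delta$ up to the constant slack captured by the factor $9$ in the definition of $\Delta$. The remaining task is to prove $\deg_{i,j}(Z) \leq \deg_{i,j}(\emptyset) \cdot (\gamma\ell/n)^{|Z|}$.

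Establishing this decay is the principal technical obstacle. The strategy is to count $\deg_{i,j}(Z)$ by decoding the chain $\vec{C}$ (and symmetrically $\vec{C'}$) link by link, starting from the tail $w_r$ and walking backwards. At each link, one of two cases applies. (a) A $Z$-coordinate at a ``misaligned'' position of the chain (a fixed coordinate sitting in the interior of $\vec{C}$ with a $\star$ past it in $Z$, or a fixed coordinate that does not extend the partition label $Q$ contiguously) combines with the matching property of $H_{w_{h-1}}$ to uniquely pin down the hyperedge $C_h \cup \{w_h\}$, contributing a savings factor of $\Omega(\delta n) \geq n/(\gamma\ell)$ (by item~\ref{item:largeell}) relative to the $3\delta n$ free choices in the unconstrained case. (b) The $Z$-coordinate is ``aligned'' with an extension of $Q$ to a strictly larger contiguous tuple $Q'$; in that case condition (4) of \cref{def:decomp} caps the number of length-$(|Q'|-1)$ suffixes extendable into $\cH^{(r)}_{Q,p}$ by $d^{|Q'|-1}$, which saves the factor $(3\delta n)/d = n/(\gamma\ell)$ per fixed coordinate. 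The contiguity clause (2) of \cref{def:decomp} is essential: it guarantees that any heavy extension $Q'$ relevant to case (b) is itself contiguous (so condition (4) actually applies), and it ensures that a non-contiguous fixing pattern in $Z$ necessarily triggers case (a), where the matching property alone suffices. Applying this case analysis symmetrically to $\vec{C}$ and $\vec{C'}$, and then summing over $(Q,p) \in P_t$ using item (5) of \cref{def:decomp}, yields the bound $\mu_Z \leq \mu \gamma^{|Z|}$.

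Given this decay, \cref{lem:partitepolyconc} applied with $\beta$ chosen so that $(1+\beta)^{2r+2-t}\mu \leq \Delta$ (a constant $\beta \leq O(1/r)$ suffices by the constant-factor slack in $\Delta$) produces $\Pr_{\vec{S}}[\Deg_{i,j}(\vec{S}) > \Delta] \leq \ell^{-\Gamma r}/n^2$ for each fixed pair $(i,j)$; item~\ref{item:smallgamma} is precisely what guarantees that the Bernstein exponent $\beta^2/O(\gamma)$ dominates the prefactor $\log((n+1)^{2r+2-t})$ in the tail bound. Finally, union-bounding over the at most $k^2 \leq n^2$ pairs $(i,j)$ and folding in the $\ell^{-\Gamma r}$ coupling error from the first paragraph yields $|\cB| \leq 2 \ell^{-\Gamma r} N$, completing the proof.
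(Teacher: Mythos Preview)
Your high-level plan matches the paper's: upper-bound the row degree by a $(2r+2-t)$-partite polynomial $\Deg_{i,j}$, couple the uniform $\ell$-set distribution to a $p$-biased product distribution (losing the $\ell^{-\Gamma r}$ from item~\ref{item:couplingbound}), and then apply \cref{lem:partitepolyconc} once you verify $\mu_Z\le\mu\gamma^{|Z|}$. The choice $\beta=\Theta(1/r)$ and the role of item~\ref{item:smallgamma} are also right.

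There is, however, a real structural inaccuracy in your setup that propagates into the partial-derivative count. The Kikuchi matrix of \cref{def:kikuchi} is \emph{not} symmetric in $\vec{C}$ and $\vec{C'}$: the row index $\vec{S}=(S_0,\ldots,S_{r-t},S'_0,\ldots,S'_{r-t},R_1,\ldots,R_t)$ constrains all $r+1$ links of $\vec{C}$ (via $C_h\cap S_h\ne\emptyset$ for $h\le r-t$ and $C_{r-t+h}\setminus\{Q_h\}\in R_h$ for $h\ge 1$), but only the first $r+1-t$ links of $\vec{C'}$ (via $C'_h\cap S'_h\ne\emptyset$). Your parenthetical ``and similarly for $C'_{r-t+h}$'' is false, and so the degree polynomial has $r+1$ coordinates for $\vec C$ but only $r+1-t$ for $\vec C'$. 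In the paper this asymmetry is the crux: one chooses $\vec C$ first (walking forward from the head $i$, using the matching property to save $3\delta n/2$ per fixed $Z_1$-coordinate), which in particular \emph{determines} $(Q,p)$ rather than summing over it; then for $\vec C'$ one walks forward from $j$ through the $Z'_2$-portion and invokes condition~(4) of \cref{def:decomp} with $Q'=Z''_2\Vert Q$ to bound the remaining suffix by $d^{t+|Z''_2|}$. Your ``symmetric'' case split does not make clear how the last $t$ links of $\vec C'$ are controlled, and your separate sum over $(Q,p)$ via item~(5) is redundant once $\vec C$ is chosen.

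A second gap: you never invoke hypothesis~\ref{item:largedensity}. In the paper it is needed for precisely the boundary case where $Z_2$ has no $\star$ entries (so $|Z_2|=r+1-t$). There the count of $\vec C'$ collapses to $O(2^{r+1})$ by the matching argument alone, and the resulting ratio $\mu_Z/(\gamma^{|Z|}\mu)$ is controlled by $(3\delta\ell\gamma/4)^{r+1}\ge n$, not by the per-coordinate savings of $n/(\gamma\ell)$ that your cases~(a)/(b) deliver. Without this separate case your bound $\mu_Z\le\mu\gamma^{|Z|}$ is not established across all $Z$.
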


\subsection{Step 5: finishing the proof}
\label{sec:step5}
Let $\Gamma$ be a sufficiently large constant, $r = O(\log n)$, $\gamma = 1/O(\log^4 n)$, and $\ell = O(\log^4 n/\delta)$ for a sufficiently large constant. This choice of parameters satisfies all the conditions in \cref{lem:rowpruning} and furthermore they satisfy 
\begin{equation*}
 (2 \ell^{-\Gamma r})  \cdot (2 \ell)^{2r + 2 - t} \leq 1/n^2 \enspace.
\end{equation*}
We note that by our choice of parameters, $d$ is constant.

Applying \cref{lem:decomp}, we can construct a contiguous $d$-regular refinement of $\cH^{(r)}$, given by $\cH^{(r)} = \cup_{Q,p} \cH^{(r)}_{Q,p}$ and polynomials $\Psi^{(t)}$ for $0 \leq t \leq r$ such that $ k (3 \delta n)^{r + 1} \leq \E_b[\val(\Phi)] \leq \sum_{t = 0}^r \E_b[\val(\Psi^{(t)})]$.

By \cref{lem:cauchy-schwarz}, we have
\begin{flalign*}
&(\E_b[\val(\Psi^{(t)})])^2 \leq \E_b[\val(\Psi^{(t)})^2] =  \E_b[\val((\Psi^{(t)})^2)] \leq k  \Paren{|P_t| (3 \delta n)^{r - t} d^{t}}^2  + |P_t| 2 k \E_{M} \E_{b \gets \Fits^k} [\val(f_M^{(t)})] \enspace.
\end{flalign*}

By \cref{lem:completeness}, we have for any maximum directed matching $M$, 
\begin{flalign*}
\E_{b \gets \Fits^k} [\val(f_M^{(t)})] \leq \frac{1}{D} \boolnorm{A} \enspace.
\end{flalign*}

Let $B_{i,j}$ denote the matrix $A_{i,j}$ after we zero out all rows and columns in $\cB$. We observe that $\norm{B_{i,j}}_2 \leq \Delta$ as every row and \emph{column} in $B_{i,j}$ has at most $\Delta$ nonzero entries; the fact about the columns of $B_{i,j}$ follows because $A_{i,j} = A_{j,i}^{\top}$, so the set $\cB$ contains all the bad columns as well. Thus, $\boolnorm{A_{i,j} - B_{i,j}} \leq |\cB| \cdot (2 \ell)^{2r + 2 - t} \leq 2\ell^{-\Gamma r} N \cdot (2 \ell)^{2r + 2 - t} \leq N/n^2$, by our choice of parameters. Here, we used that every row of $A_{i,j}$ can (crudely) have at most $(2\ell)^{2r + 2 - t}$ nonzero entries.

Now, let $B = \sum_{(i,j) \in M} b_i b_j B_{i,j}$. The random matrix $B$ is a Rademacher series with $k$ terms in $\R^{N \times N}$. By the Matrix Khintchine (\cref{fact:matrixkhintchine}) inequality, we have that $\E_{b}[\norm{B}_2] \leq O(\Delta \sqrt{k \log N}) = O(\Delta \sqrt{k r \ell \log n})$. Note that here we use that $M$ is a matching, so $b_i b_j$ and $b_{i'} b_{j'}$ are independent Rademacher random variables for distinct edges $(i,j)$ and $(i',j')$ in the matching. 

Hence, we have that
\begin{flalign*}
D \E_b[\val(f_M)] &\leq \E_b[\boolnorm{A}] \leq \E_b[\boolnorm{B} + \boolnorm{A - B}] \\
&\leq \E_b[N \norm{B}_2] + kN/n^2 \leq N O(\Delta \sqrt{k r \ell \log n}) + o(N) \mper
\end{flalign*}
Thus, $\E_b[\val(f_M)] \leq \frac{N}{D} O(\Delta \sqrt{k r \ell \log n})$.  Using the bound on $\Delta$ from \cref{lem:rowpruning}, we have that
\begin{flalign*}
&\frac{N}{D} \Delta \leq 2^{-2r - 2 + 2t} \left(\frac{n(n-1)}{\ell(n - \ell)}\right)^{2r + 2- t} \cdot 9 \cdot 2^{2r + 2 - 2t} (\ell/n)^{2r + 2 - t} d^t (3 \delta n )^{2r + 1 - t} \\
&= \left(\frac{n-1}{n - \ell}\right)^{2r + 2- t} \cdot 9 \cdot  d^t (3 \delta n )^{2r + 1 - t} \leq e^{\frac{O(\ell r)}{n}} \cdot 9 \cdot  d^t (3 \delta n )^{2r + 1 - t} \leq O(1) \cdot  d^t (3 \delta n )^{2r + 1 - t} \enspace,
\end{flalign*}
and so we conclude that
$\E_b[ \val(f_M)] \leq d^t (3 \delta n)^{2r + 1 - t} O(\sqrt{k r \ell \log n})$, where we use that $\ell r \leq n$.

We thus have
\begin{flalign*}
&\E_b[\val(\Psi^{(t)})] \leq \sqrt{ k  \Paren{|P_t| (3 \delta n)^{r - t} d^{t}}^2  + |P_t| 2 k \cdot d^t (3 \delta n)^{2r + 1 - t} O(\sqrt{k r \ell \log n})} \enspace.
\end{flalign*}
Next, we note that we have $\abs{P_t} d^t \leq \abs{\cH^{(t)}} = n (3 \delta n)^t$, and so
\begin{flalign*}
&\E_b[\val(\Psi^{(t)})] \leq \sqrt{ k  \Paren{n (3 \delta n)^t \cdot (3 \delta n)^{r - t} }^2  + n (3 \delta n)^t \cdot 2 k \cdot  (3 \delta n)^{2r + 1 - t} O(\sqrt{k r \ell \log n})} \\
&\leq O(1) \cdot \sqrt{ k}  \frac{1}{3 \delta}  (3 \delta n)^{r+1} + O(1) \cdot (3 \delta n)^{r + 1} \sqrt{\frac{k}{3 \delta}}  ({k r \ell \log n})^{1/4} \\
&\leq  O(1) \cdot \sqrt{k} \cdot (3 \delta n)^{r+1} \left(  \frac{1}{3 \delta} +  \sqrt{\frac{1}{3 \delta}}  ({k r \ell \log n})^{1/4} \right) \\
&\leq  O(1) \cdot \sqrt{k} \cdot (3 \delta n)^{r+1}  \sqrt{\frac{1}{3 \delta}}  ({k r \ell \log n})^{1/4} \enspace,
\end{flalign*}
assuming that $\delta^{-2} \leq O(k r \ell \log n) = O(k \log^{6} n/\delta)$.\footnote{When we optimize the $\log n$ factor in the next step, we will no longer need this assumption, which is why it does not appear in \cref{mthm:main}.}
Thus,
\begin{flalign*}
 &k (3 \delta n)^{r + 1} \leq \E_b[\val(\Phi)] \leq \sum_{t = 0}^r \E_b[\val(\Psi^{(t)})] \leq (r+1) \cdot O(1) \cdot \sqrt{k} \cdot (3 \delta n)^{r+1}  \sqrt{\frac{1}{3 \delta}}  ({k r \ell \log n})^{1/4} \\
 &\implies k \leq \frac{1}{9 \delta^2} \cdot O( r^5  \ell \log n) = O( \log^{10} n/\delta^3) \enspace,
\end{flalign*}
by our choice of $r, \ell$. 
We note that, up to the proof of \cref{lem:rowpruning}, this \emph{almost} finishes the proof of \cref{mthm:main}. The issue is that we have lost an additional $\log^2 n$-factor. In the next and final step, we shall save this factor by reformulating the above proof as a reduction to a $2$-LDC and applying a off-the-shelf bound on linear $2$-LDCs instead of a spectral refutation to finish.

\subsection{Step 6: optimizing the $\log n$ factor}
\label{sec:step6}
We shall now reformulate the arguments in \cref{sec:step5} to give us a reduction from the $3$-LCC $\Code$ to a $2$-LDC $\Code'$. 
Instead of bounding $\val(\Psi^{(t)})$ using the $\infty\to1$ norm of the Kikuchi matrices, we shall instead use the Kikuchi matrices to give a \emph{reduction} to a linear $2$-LDC, and then we apply the lower bound of~\cite{GoldreichKST06} (\cref{fact:2ldclb}). The difference between \cref{sec:step5} and this subsection is similar to the difference between the main proof and the proof in Appendix B in~\cite{AlrabiahGKM23}, which also saves some additional $\log n$ factors in the setting of $3$-LDC lower bounds.

The reason for the savings is that, in the case of $2$-query linear codes, \cref{fact:2ldclb} shows a lower bound of $2 \log_2 n \geq \delta k$, which saves a factor of $\delta$ over the lower bound from spectral refutation of $O(\log n) \geq \delta^2 k$ for general codes. In our reduction, we shall produce a $2$-LDC with $\delta' \sim \delta/(\log^2 n)$, so this optimization saves us a $O(\log^2 n)$ factor. As a result, we get a final lower bound of $k \leq O(\log^{8} n)$, as opposed to the lower bound of $k \leq O(\log^{10} n)$ that we obtained in \cref{sec:step5}.

We proceed similarly to~\cref{sec:step5}.
Let $\Gamma$ be a sufficiently large constant, $r = O(\log n)$, $\gamma = 1/O(\log^4 n)$, and $\ell = O(\log^4 n/\delta)$ for a sufficiently large constant. We note that this choice of parameters satisfies all the conditions in \cref{lem:rowpruning}, and furthermore they satisfy 
\begin{equation*}
 (2 \ell^{-\Gamma r})  \cdot (2 \ell)^{2r + 2 - t} \leq 1/n^2 \enspace.
\end{equation*}

Applying \cref{lem:decomp}, we can construct a contiguous $d$-regular refinement of $\cH^{(r)}$, given by $\cH^{(r)} = \cup_{Q,p} \cH^{(r)}_{Q,p}$ and polynomials $\Psi^{(t)}$ for $0 \leq t \leq r$ such that $ k (3 \delta n)^{r + 1} \leq \E_b[\val(\Phi)] \leq \sum_{t = 0}^r \E_b[\val(\Psi^{(t)})]$. 

Now, we observe that there exists $t \in \{0, \dots, r\}$ such that $k (3 \delta n)^{r + 1}/(r+1) \leq \E_b[\val(\Psi^{(t)})]$. In particular, $\Psi^{(t)}$ has at least $k (3 \delta n)^{r + 1}/(r+1)$ constraints.
For the remainder of the proof, we let $t$ be this particular value in $\{0, \dots, r\}$.

By \cref{lem:cauchy-schwarz}, we have
\begin{flalign*}
&(\E_b[\val(\Psi^{(t)})])^2 \leq \E_b[\val(\Psi^{(t)})^2] =  \E_b[\val((\Psi^{(t)})^2)] \leq k  \Paren{|P_t| (3 \delta n)^{r - t} d^{t}}^2  + |P_t| 2 k \E_{M} \E_{b \gets \Fits^k} [\val(f_M^{(t)})] \enspace.
\end{flalign*}
Therefore, there exists a maximum directed matching $M$ on $[k]$ such that 
\begin{equation*}
\frac{1}{k \abs{P_t}} (\E_b[\val(\Psi^{(t)})])^2 -  \abs{P_t} \Paren{ (3 \delta n)^{r - t} d^{t}}^2 \leq 2\E_{b \gets \Fits^k} [\val(f_M^{(t)})] \enspace.
\end{equation*}
For the remainder of the proof, we let $M$ be this particular directed matching.

Let $L = \{ i : (i,j) \in M\}$ denote the ``left halves'' of the edges in the matching $M$. We note that $k' \coloneqq \abs{L} \geq \frac{k-1}{2}$.
Let $\Code' \colon \Fits^L \to \Fits^{2N}$, where $N \coloneqq {n \choose \ell}^{2r + 2 - t}$, be the linear code defined from $\Code$ as follows. For each $b \in \Fits^L$, we first extend $b$ to be in $\Fits^k$ by setting $b_j = 1$ for all $j \notin L$ (for $b \in \Fits^L$, we shall abuse notation and think of $b$ as in $\Fits^k$ using this trivial extension). Then, we let $x = \Code(b)$, and finally we let $x' \coloneqq \Code'(b)$ be the vector with $2N$ coordinates, one for each row/column of $A$, where the $\vec{S}$-th entry (similarly $\vec{T}$-th entry) is given by $x'_{\vec{S}} = \prod_{h = 0}^{r-t} x_{S_h} x_{S'_h} \prod_{h = 1}^{t} x_{R_h}$.

We make the following observations. First, we note that $\Code'$ is clearly a linear map. Secondly, following \cref{lem:completeness}, we note that for every $b \in \Fits^L$, every $(i,j) \in M$ (which implies that $i \in L$ and $j \notin L$), and row $\vec{S}$ and column $\vec{T}$ where $A_{i,j}(\vec{S}, \vec{T}) = 1$, we have that $x' = \Code'(b)$ satisfies $x'_{\vec{S}} x'_{\vec{T}} = b_i$.

We now show that $\Code'$ is a $(2, \delta)$-LDC for $\delta' = \Omega(\delta/r^2)$. Formally, we shall show that for each $(i,j) \in M$, there exists a matching $G''_{i,j}$ on $[2N]$ such that for every $b \in \Fits^L$, each edge $(\vec{S}, \vec{T})$ in $G''_{i,j}$, we have $x'_{\vec{S}} x'_{\vec{T}} = b_i$ where $x' = \Code'(b)$, and furthermore $\frac{1}{k'} \sum_{(i,j) \in M} \abs{G''_{i,j}} \geq \delta' \cdot 2N$, where we recall that $k' \coloneqq \abs{L}$ is the dimension of $\Code'$.

We have already argued that for every edge $(\vec{S}, \vec{T})$ in the bipartite graph $G_{i,j}$ defined by the adjacency matrix $A_{i,j}$ (where the rows and columns form the left and right sets of vertices), we have $x'_{\vec{S}} x'_{\vec{T}} = b_i$. It thus remains to show that $G_{i,j}$ has a matching of size $\delta' N$.

As before, let $B_{i,j}$ denote the matrix $A_{i,j}$ after we zero out all rows and columns in $\cB$. We observe that every row and \emph{column} in $B_{i,j}$ has at most $\Delta$ nonzero entries; the fact about the columns of $B_{i,j}$ follows because $A_{i,j} = A_{j,i}^{\top}$, so the set $\cB$ contains all the bad columns as well. Thus, the bipartite graph $G'_{i,j}$ defined by the adjacency matrix $B_{i,j}$ has maximum (left or right) degree at most $\Delta$ and therefore has a matching $G''_{i,j}$ of size at least $\abs{G'_{i,j}}/\Delta$.

Now, the number of edges removed is at most $|\cB| \cdot (2 \ell)^{2r + 2 - t} \leq 2\ell^{-\Gamma r} N \cdot (2 \ell)^{2r + 2 - t} \leq N/n^2$, by our choice of parameters. Here, we used that every row of $A_{i,j}$ can (crudely) have at most $(2\ell)^{2r + 2 - t}$ nonzero entries. Thus, we have that 
$\abs{E(G'_{i,j})} \geq \abs{E(G_{i,j})} - N/n^2$.

In order to finish the reduction, we need to lower bound $\frac{1}{N} \sum_{(i,j) \in M} \abs{E(G''_{i,j})}$. We have that
\begin{flalign*}
&\frac{1}{N} \sum_{(i,j) \in M} \abs{E(G''_{i,j})} \geq \sum_{(i,j \in M)} \frac{1}{N \Delta}\abs{E(G'_{i,j})} \geq \sum_{(i,j \in M)} \frac{1}{N \Delta}\left(\abs{E(G_{i,j})} - \frac{N}{n^2}\right) \geq  \frac{1}{N \Delta}\left(D \cdot \E_{b \gets \Fits^k} [\val(f_M^{(t)})] - \frac{k N}{n^2}\right) \\
&\geq \frac{D}{2N\Delta}\left(\frac{1}{k \abs{P_t}} (\E_b[\val(\Psi^{(t)})])^2 -  \abs{P_t} \Paren{ (3 \delta n)^{r - t} d^{t}}^2 \right) - \frac{k}{\Delta n^2} \\
&\geq \frac{D}{2N \Delta}\left(\frac{1}{k \abs{P_t}} \cdot\frac{k^2 (3 \delta n)^{2r + 2}}{(r+1)^2} -  \abs{P_t}\Paren{ (3 \delta n)^{r - t} d^{t}}^2 \right) - \frac{k}{\Delta n^2} \enspace.
\end{flalign*}
Using the bound on $\Delta$ from \cref{lem:rowpruning}, we have that
\begin{flalign*}
&\frac{N \Delta}{D} \leq 2^{-2r - 2 + 2t} \left(\frac{n(n-1)}{\ell(n - \ell)}\right)^{2r + 2- t} \cdot 9 \cdot 2^{2r + 2 - 2t} (\ell/n)^{2r + 2 - t} d^t (3 \delta n )^{2r + 1 - t} \\
&= \left(\frac{n-1}{n - \ell}\right)^{2r + 2- t} \cdot 9 \cdot  d^t (3 \delta n )^{2r + 1 - t} \leq e^{\frac{O(\ell r)}{n}} \cdot 9 \cdot  d^t (3 \delta n )^{2r + 1 - t} \leq O(1) \cdot  d^t (3 \delta n )^{2r + 1 - t} \enspace,
\end{flalign*}
which implies that
\begin{flalign*}
&\frac{1}{N} \sum_{(i,j) \in M} \abs{E(G''_{i,j})} \geq \frac{1}{O(1) d^t (3 \delta n )^{2r + 1 - t}} \left(\frac{1}{k \abs{P_t}} \cdot\frac{k^2 (3 \delta n)^{2r + 2}}{(r+1)^2} -  \abs{P_t}\Paren{ (3 \delta n)^{r - t} d^{t}}^2 \right) - \frac{k}{\Delta n^2} \enspace.
\end{flalign*}
Next, we note that we have $\abs{P_t} d^t \leq \abs{\cH^{(t)}} = n (3 \delta n)^t$, and so
\begin{flalign*}
&\frac{1}{N} \sum_{(i,j) \in M} \abs{E(G''_{i,j})} \geq \frac{1}{O(1) d^t (3 \delta n )^{2r + 1 - t}} \left(\frac{1}{k \abs{P_t}} \cdot\frac{k^2 (3 \delta n)^{2r + 2}}{(r+1)^2} -  \abs{P_t}\Paren{ (3 \delta n)^{r - t} d^{t}}^2 \right) - \frac{k}{\Delta n^2} \\
&\geq  \frac{1}{O(1)} \left(\frac{k (3 \delta)}{(r+1)^2}- \frac{1}{3 \delta} \right)   - \frac{k}{\Delta n^2} \geq \frac{1}{O(1)} \frac{k (3 \delta)}{(r+1)^2} \enspace,
\end{flalign*}
if we assume that $k \geq \Omega(r^2/\delta^2) = \Omega(\log^2 n/\delta^2)$. Note that if not, then we have that $\delta^2 k \leq O(\log^2 n)$, which is a better lower bound than \cref{mthm:main}.

Therefore,
we have shown that 
\begin{equation*}
\frac{1}{k'} \sum_{(i,j) \in M} \abs{E(G''_{i,j})} \geq \delta' N \enspace,
\end{equation*}
where $k' = \abs{L} \geq \frac{k-1}{2}$ and $\delta' = \Omega(\delta/r^2)$. Hence, by \cref{fact:2ldclb}, it follows that 
\begin{flalign*}
&O(\ell r \log n) \geq 2 \log_2 N \geq \delta' k \geq \Omega(\delta k/r^2) \\
&\implies k \leq  O(\ell r^3 \log n/\delta)  \leq O(\log^8 n/\delta^2) \enspace,
\end{flalign*}
i.e., $2^{\Omega((\delta^2 k)^{1/8})} \leq n$. This finishes the proof of \cref{mthm:main} for the case of $\F = \F_2$, up to the proof of \cref{lem:rowpruning}.

\section{Row Pruning: Proof of \cref{lem:rowpruning}}
\label{sec:rowpruning}
In this section, we prove \cref{lem:rowpruning}, restated below, which is the main technical component in the proof of \cref{mthm:main}. 

\RowPruning*

For $i \ne j \in [k]$ and a row $\vec{S}$, let $\deg_{i,j}(\vec{S})$ denote the number of nonzero entries in the $\vec{S}$-th row.
The main idea of the proof is to observe that for any $i,j$, $\deg_{i,j}(\vec{S})$ is upper-bounded by a $(2r + 2 - t)$-partite polynomial $\Deg_{i,j}(s^{(1)}, s^{(2)}, \ldots, s^{(2r+2-t)})$ in $n(2r+2-t)$ variables $s^{(h)}_u$ for $1 \leq h \leq 2r + 2 - t$ and $u \in [n]$ that define $\vec{S}$, i.e., $s^{(h)}$ (in $\Bits^n$) represents the $0-1$ indicator vector of $S_h$ (or $S'_h$ or $R_h$, depending on the value of $h$).
The contiguous regularity property allows us to control the expected partial derivatives of $\Deg_{i,j}$ and thus apply the tail bounds for partite polynomials in \cref{lem:partitepolyconc}. 

Let us first set up the polynomial $\Deg_{i,j}$ formally.  

For $\vec{C} \in \cH^{(r+1)}_{i,Q,p}$ and $\vec{C'} \in \cH^{(r+1)}_{j,Q,p}$, let $\cT_{i,j,Q,p}^{(\vec{C}, \vec{C'})}$ denote the set of $(2r + 2 - t)$-tuples\\ $(u_0, \dots, u_{r - t}, u_{r - t+1}, \dots, u_{r}, v_{0}, \dots, v_{r - t})$ such that for $h = 0, \dots, r - t$, $u_h \in C_h$ and $v_h \in C'_h$, and for $h = 1, \dots, t$, we have $u_{r - t + h} \in C_h \setminus Q_h$. 
For a row $\vec{S}$ and a tuple $U = (u_0, \dots, u_{r - t}, u_{r - t+1}, \dots, u_{r}, v_{0}, \dots, v_{r - t})$, we write $U \in \vec{S}$ to mean that $u_h \in S_h, v_h \in S'_h$ for $h = 0, \dots, r - t$ and $u_{r - t + h} \in R_h$ for $h = 1, \dots, t$.

We next make an easy observation about the structure of the matrices $A_{i,j,Q,p}^{(\vec{C},\vec{C'})}$.

\begin{observation}
Every row of $A_{i,j,Q,p}^{(\vec{C},\vec{C'})}$ has at most $1$ non-zero entry. Further, for every non-zero row $\vec{S}$, there is a unique $(2r+2-t)$-tuple $U \in \cT_{i,j,Q,p}^{\vec{C},\vec{C'}}$ such that $U \in \vec{S}$. Finally, $U \in \vec{S}$ does not guarantee a non-zero entry. 
\end{observation}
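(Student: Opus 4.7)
The plan is to prove each of the three claims in turn by a direct case analysis of the two types of defining conditions of $A_{i,j,Q,p}^{(\vec{C},\vec{C'})}$: the symmetric-difference conditions $S_h \oplus T_h = C_h$ (resp. $S'_h \oplus T'_h = C'_h$) for $h = 0, \ldots, r-t$, and the more structured ``$R_h = \{u\} \cup U$, $W_h = \{v\} \cup U$'' conditions for $h = 1, \ldots, t$. The key point I would use repeatedly is that, in each case, the column coordinate is forced by the row coordinate whenever a valid entry exists at all.

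For the first claim, I would fix a row $\vec{S}$ and argue that the column $\vec{T}$ producing a $1$-entry, if one exists, is uniquely determined. For each $h \in \{0, \ldots, r-t\}$, since $|S_h| = |T_h| = \ell$ and $|C_h| = 2$, the equation $S_h \oplus T_h = C_h$ forces exactly one element of $C_h$ to lie in $S_h \setminus T_h$ and the other in $T_h \setminus S_h$; in particular $T_h = S_h \,\triangle\, C_h$ is determined by $S_h$, and likewise $T'_h$ is determined by $S'_h$. For $h \in \{1, \ldots, t\}$, the element of $C_{r-t+h}$ distinct from the fixed non-$\star$ coordinate of $Q$ is unique (call it $u$), so the decomposition $R_h = \{u\} \cup U$ forces $U = R_h \setminus \{u\}$ (assuming $u \in R_h$), and then $W_h = \{v\} \cup U$ is pinned down, where $v$ is the analogous non-$Q$ element of $C'_{r-t+h}$. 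Hence at most one candidate $\vec{T}$ exists, proving the first claim.

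For the second claim, I would read the tuple $U$ directly off the sets in $\vec{S}$, using that $\vec{S}$ is a non-zero row. For each $h \in \{0, \ldots, r-t\}$, the above analysis produces a unique $u_h \in S_h \cap C_h$ and, analogously, $v_h \in S'_h \cap C'_h$. For $h \in \{1, \ldots, t\}$, let $u_{r-t+h}$ be the non-$Q$ element of $C_{r-t+h}$; the condition $R_h = \{u_{r-t+h}\} \cup U$ forces $u_{r-t+h} \in R_h$. Assembling $U = (u_0, \ldots, u_r, v_0, \ldots, v_{r-t})$ gives an element of $\cT_{i,j,Q,p}^{(\vec{C},\vec{C'})}$ with $U \in \vec{S}$; uniqueness is immediate, since each coordinate of $U$ was forced by the non-zero entry condition, and any other tuple $U' \in \cT$ with $U' \in \vec{S}$ must also satisfy $u'_h \in S_h \cap C_h = \{u_h\}$ (and similarly for the other coordinates).

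For the third claim, I would exhibit a concrete obstruction: $U \in \vec{S}$ only guarantees the ``right'' singletons lie in the prescribed sets, but not the additional parity and avoidance conditions needed for a non-zero entry. For instance, if $|S_h \cap C_h| = 2$ for some $h \le r-t$, then $S_h \oplus T_h = C_h$ is unsolvable for any $\ell$-set $T_h$, yet $u_h \in S_h$ still holds. Alternatively, if for some $h \in \{1, \ldots, t\}$ the non-$Q$ element $v$ of $C'_{r-t+h}$ happens to lie in $R_h \setminus \{u_{r-t+h}\}$, then no valid decomposition $R_h = \{u\} \cup U$ with $u, v \notin U$ exists, again without disturbing $U \in \vec{S}$. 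I would include one such small example and stop. None of the steps are obstacles; the observation is essentially bookkeeping on \cref{def:kikuchi}, but it is what lets us later upper bound $\deg_{i,j}(\vec{S})$ by the degree polynomial $\Deg_{i,j}$ in the row-pruning argument.
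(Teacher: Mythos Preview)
Your proposal is correct and is exactly the natural verification of this bookkeeping observation; the paper states it without proof, so there is no alternative argument to compare against. Your coordinate-by-coordinate analysis of the two kinds of constraints in \cref{def:kikuchi} (symmetric-difference versus the $R_h = \{u\} \cup U$ form) is the right way to see both the uniqueness of the column and the forced choice of the tuple $U$, and your obstructions for the third claim are valid.
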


Let $\cT_{i,j} = \bigcup_{(Q,p) \in P_t} \bigcup_{\vec{C} \in \cH^{(r+1)}_{i,Q,p}, \vec{C'} \in\cH^{(r+1)}_{j,Q,p}} \cT_{i,j,Q,p}$. Then, by the above observation, the number of nonzero entries in the $\vec{S}$-th row of $A_{i,j}$ is upper bounded by the number of tuples $U \in \cT_{i,j}$ with $U \in \vec{S}$. Define the following polynomial $\Deg_{i,j}$ that counts this latter quantity, as follows.
 
Let $s=(s^{(0)}, \dots, s^{(r - t)}, s^{(r - t + 1)}, \dots, s^{(r)}, s'^{(0)}, \dots, s'^{(r - t)})$ be a partitioned set of ${0,1}$-valued variables where each $s^{(h)},s'^{(h)}$ is an $n$-tuple $(s^{(h)}_u)_{u \in [n]}$. We view $s$ as the tuple of $0$-$1$ indicators for the sets appearing in a $\vec{S}$. Formally, we have
\begin{equation*}
\Deg_{i,j}(s) \coloneqq \sum_{U \in \cT_{i,j}} \prod_{h = 0}^{r} s^{(h)}_{u_h} \prod_{h = 0}^{r - t} s'^{(h)}_{v_h} \enspace.
\end{equation*}

Let $\cD$ denote the uniform distribution over the rows $\vec{S}$, i.e., each $s^{(h)}$ is drawn independently and uniformly at random from $\Bits^n$ conditioned on $\norm{s^{(h)}}_1 = \ell$. Thus, to bound the fraction of rows with a large number of nonzero entries, it suffices to prove bounds on the tail probability of $\Deg_{i,j}$ on $\cD$. As $\cD$ is not quite a product distribution, we cannot directly apply \cref{lem:partitepolyconc}. Nonetheless, we shall show that its tail bounds behave like those for a product distribution, via the following coupling lemma.

Let $\cD'$ denote the distribution where each $s^{(h)}_u,s'^{(h)}_v$ are chosen independently as a $p$-biased Bernoulli random variable where $p = (1 + \beta)\ell/n$ independently for $\beta = \frac{1}{4r}$. The following lemma relates tail bounds for $\Deg_{i,j}$ on $\cD'$ with those on $\cD$.
 
 \begin{lemma}[Coupling]
 \label{lem:coupling}
 We have $\Pr_{s \gets \cD}[\Deg_{i,j}(s) \geq \Delta] \leq \Pr_{s \gets \cD'}[\Deg_{i,j}(s) \geq \Delta] + (2r + 2) \exp(-\ell/64r^2)$.
 \end{lemma}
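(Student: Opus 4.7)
The plan is to exhibit a monotone coupling between $\cD$ and $\cD'$ and then use a standard union bound and Chernoff estimate for the bad event. The key structural fact that makes this work is that $\Deg_{i,j}$, being a multilinear polynomial in $\{0,1\}$-variables with non-negative coefficients, is monotonically non-decreasing: if $s \leq s'$ coordinatewise, then $\Deg_{i,j}(s) \leq \Deg_{i,j}(s')$. This is immediate from the definition, since each monomial $\prod_h s^{(h)}_{u_h} \prod_h s'^{(h)}_{v_h}$ only grows when we flip coordinates from $0$ to $1$.

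Next, I would construct the coupling. Sample $s' \sim \cD'$ and let $E$ be the event that $\|s'^{(h)}\|_1 \geq \ell$ simultaneously for every group $h$ (there are $2r+2-t \leq 2r+2$ groups). On $E$, for each $h$ independently, uniformly select $\ell$ of the $\|s'^{(h)}\|_1$ ones in $s'^{(h)}$ to keep and zero out the rest, obtaining $s^{(h)}$. The key observation is that a $p$-biased product conditioned on having exactly $k$ ones is uniform over $k$-subsets of $[n]$, so uniformly subsampling $\ell$ of those $k$ ones produces a uniformly random $\ell$-subset; thus, conditioned on $E$, the vector $s$ so produced is distributed exactly as $\cD$. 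By construction $s \leq s'$ coordinatewise.

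By monotonicity of $\Deg_{i,j}$, on $E$ we have $\{\Deg_{i,j}(s) \geq \Delta\} \subseteq \{\Deg_{i,j}(s') \geq \Delta\}$, and hence
\[
\Pr_{s \gets \cD}[\Deg_{i,j}(s) \geq \Delta]\cdot \Pr[E] \;=\; \Pr[\Deg_{i,j}(s) \geq \Delta,\ E] \;\leq\; \Pr_{s' \gets \cD'}[\Deg_{i,j}(s') \geq \Delta].
\]
Rearranging, and using $\Pr_{\cD}[\Deg_{i,j} \geq \Delta] \leq \Pr_{\cD'}[\Deg_{i,j} \geq \Delta] + \Pr[\overline{E}]$ (multiply the above by $1$ and bound the leftover $(1-\Pr[E])$ term by $1$), it suffices to bound $\Pr[\overline{E}]$.

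Finally, I would bound $\Pr[\overline{E}]$ by the Chernoff lower-tail estimate together with a union bound over the $\leq 2r+2$ groups. Each $\|s'^{(h)}\|_1$ is a sum of $n$ independent $\mathrm{Bern}(p)$ random variables with mean $\mu=(1+\beta)\ell$ for $\beta=\frac{1}{4r}$. The event $\|s'^{(h)}\|_1 < \ell$ corresponds to a lower tail deviation of relative factor $\delta = \beta/(1+\beta)$, giving
\[
\Pr\bigl[\|s'^{(h)}\|_1 < \ell\bigr] \;\leq\; \exp\!\left(-\tfrac{\delta^2 \mu}{2}\right) \;=\; \exp\!\left(-\tfrac{\beta^2 \ell}{2(1+\beta)}\right) \;\leq\; \exp\!\left(-\tfrac{\ell}{64r^2}\right),
\]
where the last inequality uses $\beta = 1/(4r)$ and $1+\beta \leq 2$ for $r \geq 1$. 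A union bound over the $\leq 2r+2$ groups then yields $\Pr[\overline{E}] \leq (2r+2)\exp(-\ell/64 r^2)$, completing the proof. There is no real obstacle here beyond verifying the monotone coupling correctly produces the correct conditional distribution; the rest is a routine Chernoff/union bound calculation.
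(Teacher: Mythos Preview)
Your proposal is correct and follows essentially the same approach as the paper: couple $\cD'$ to $\cD$ by subsampling each group down to $\ell$ ones when possible, use monotonicity of $\Deg_{i,j}$ to compare the two tail events, and control the failure probability of the coupling by a Chernoff bound plus a union bound over the at most $2r+2$ groups. Your write-up is in fact slightly more explicit than the paper's about why the conditional law of the subsampled vector is exactly $\cD$ and about the final rearrangement to get the additive error term.
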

 \begin{proof}[Proof of \cref{lem:coupling}]
 To relate the two probabilities, we will couple $\cD'$ with $\cD$ as follows. 
 First, sample $s \gets \cD'$. Then, for each $h = 0, \dots, r$, set $s^{(h)}$ to be a uniformly random subset of $s^{(h)}$ (if one exists), and similarly for $h = 0, \dots, r - t$, set $s'^{(h)}$ to be a uniformly random subset of $s'^{(h)}$ of size $\ell$ also. If one of the sets $S_h$ or $S'_h$ has size $< \ell$, i.e., $\norm{s^{(h)}}_1 < \ell$ for some $h \in \{0, \dots, r\}$ or $\norm{s'^{(h)}}_1 < \ell$ for some $h \in \{0, \dots, r - t\}$, then the coupling fails and we abort. Let $\cJ$ be the joint distribution induced by this coupling.
 
Fix $h \in \{0, \dots, r\}$. By Chernoff bound, we have for every $\delta \in [0,1]$ and for any $h$, 
\begin{flalign*}
\Pr_{s \sim \cD'}[\norm{s^{(h)}}_1 < (1 - \delta)(1 + \beta) \ell] \leq \exp\Paren{\frac{\delta^2 \ell (1 + \beta)}{2}} \enspace.
\end{flalign*}
Setting $\delta = 1 - \frac{1}{1 + \beta}$ and noting that $\beta = \frac{1}{4r} < 1$, we see that $\Paren{\frac{\delta^2 \ell (1 + \beta)}{2}} = \frac{\ell}{2(4r + 16r^2)} \geq \frac{\ell}{64 r^2}$. Hence, the probability that $\cJ$ aborts is at most $(2r + 2) \exp(-\ell/64r^2) \leq \ell^{-\Gamma r}$. Here, we use \cref{item:couplingbound} in the assumptions of the parameters in \cref{lem:rowpruning}.

We also observe that $\Deg_{i,j}$ is monotone, that is, $\Deg_{i,j}(s) \geq \Deg_{i,j}(s')$ for \emph{any} Boolean variables $s,s'$ where $s' \leq s$ coordinate-wise. In particular, if we first sample $s \gets \cD'$ and it holds that $\Deg_{i,j}(s) \leq \Delta$, then it also holds that $\Deg_{i,j}(s') \leq \Delta$ also, regardless of the choice of $s'$ made by the coupling $\cJ$. We thus have
\begin{equation*}
\Pr_{s' \gets \cD}[\Deg_{i,j}(s') > \Delta] \leq \Pr_{(s,s')\sim \cJ}[\Deg_{i,j}(s) > \Delta \mid \text{ $\cJ$ does not abort}] \leq \Pr_{s \gets \cD'}[\Deg_{i,j}(s) > \Delta] + \ell^{-\Gamma r} \mper
\end{equation*}
This completes the proof.
 \end{proof}

We finally obtain a tail bound on $\Deg_{i,j}$ for the product distribution $\cD'$ to complete the proof.
 \begin{lemma}
 \label{lem:app-ss}
 For $\Delta = 9 \cdot 2^{2r + 2 - 2t} (\ell/n)^{2r + 2 - t} \cdot  d^t (3 \delta n)^{2r + 1 - t}$, we have $\Pr_{s \gets \cD'}[\Deg_{i,j}(s) \geq \Delta] \leq \ell^{-\Gamma r}$.
 \end{lemma}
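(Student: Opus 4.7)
The plan is to apply the partite polynomial concentration inequality (\cref{lem:partitepolyconc}) to the $(2r+2-t)$-partite non-negative polynomial $\Deg_{i,j}(s)$. Under $\cD'$ each $s^{(h)}_u$ is an independent $p$-biased Bernoulli with $p=(1+\beta)\ell/n$, so the lemma applies directly once we (a) produce $\mu$ with $\E_{\cD'}[\Deg_{i,j}] \leq \mu$ and $(1+\beta)^{2r+2-t}\mu \leq \Delta$, and (b) verify $\mu_Z \leq \mu\,\gamma^{|Z|}$ for every $Z$ in the partite index set, with $\gamma = d/(3\delta\ell)$ as dictated by hypothesis~(5) of \cref{lem:rowpruning}. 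I will take $\beta = 1/(4r)$.

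For the mean, \cref{obs:pieceupperbound} yields $|\cH^{(r+1)}_{i,Q,p}| \leq (3\delta n)^{r-t} d^t$ for every $(Q,p) \in P_t$: the extra factor of $3\delta n$ over the $r$-chain bound accounts for choosing the leading pair $(C_0,w_0)$ with $C_0 \cup \{w_0\} \in H_i$, using that $H_i$ is a matching (so $w_0$ determines the hyperedge). Combining this bound on the maximum with $\sum_{(Q,p) \in P_t} |\cH^{(r+1)}_{j,Q,p}| \leq |\cH^{(r+1)}_j| = (3\delta n)^{r+1}$ and $|\cT^{(\vec C,\vec C')}_{i,j,Q,p}| = 2^{2r+2-2t}$ gives
\[
\abs{\cT_{i,j}} \leq 2^{2r+2-2t}\sum_{(Q,p) \in P_t} \abs{\cH^{(r+1)}_{i,Q,p}}\abs{\cH^{(r+1)}_{j,Q,p}} \leq 2^{2r+2-2t}(3\delta n)^{2r+1-t}d^t,
\]
and hence $\mu := p^{2r+2-t}|\cT_{i,j}| \leq (1+\beta)^{2r+2-t}\Delta/9$. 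Since $(1+\beta)^{2(2r+2-t)} \leq e^{(2r+2-t)/(2r)} \leq e^2 < 9$, we conclude $(1+\beta)^{2r+2-t}\mu \leq \Delta$, which is the threshold needed in \cref{lem:partitepolyconc}.

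The technical crux, and the main obstacle, is establishing $\mu_Z \leq \mu\,\gamma^{|Z|}$ for all $Z$; equivalently, $\deg_{i,j}(Z) \leq (d/(3\delta n))^{|Z|} \cdot |\cT_{i,j}|$, i.e., each specified coordinate of $Z$ should save a factor of $\Theta(\delta n / d) = \Theta(\delta n)$ (since $d$ is constant). I plan to prove this by fixing $(Q,p) \in P_t$ and counting pairs $(\vec C,\vec C') \in \cH^{(r+1)}_{i,Q,p} \times \cH^{(r+1)}_{j,Q,p}$ compatible with $Z$ via a greedy chain-walk: any non-$\star$ entry $z_h \in C_h$ with $h \leq r-t$ determines the unique hyperedge of $H_{w_{h-1}}$ containing it (matching property), collapsing the $\Theta(\delta n)$ choices for $(C_h,w_h)$ to $O(1)$; the same reasoning works for $\vec{C'}$. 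The delicate sub-case is when $Z$'s non-$\star$ entries in the ``fixed'' portion of the chain (positions $r-t+1,\ldots,r$) contain gaps --- the matching walk alone cannot propagate through an intermediate $\star$. To handle this, I will observe that $Z$, together with the partition index $Q$, induces a contiguous tuple $Q' \supseteq Q$ in which every non-$\star$ slot of $Z$ within the chain's fixed portion is incorporated (non-contiguous specifications can be ``completed'' without cost because each specified $z_h$, via the matching property, already forces $w_{h-1}$ and chains back). Property~(4) of contiguous $d$-regularity then caps the number of $r$-chains in $\cH^{(r)}_{Q,p}$ extending $Q'$ by $d^{|Q'|-1}$, giving exactly the savings we need per non-$\star$ coordinate. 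The $\vec{C'}$ side is handled analogously, and summing over $(Q,p)$ uses property~(5), $|P_t|d^t \leq n(3\delta n)^t$. Having established these partial-derivative bounds, \cref{lem:partitepolyconc} with $\beta = 1/(4r)$ and $\gamma = d/(3\delta\ell)$ yields $\alpha = \exp\bigl(-\Omega(\delta\ell/r^2)\bigr)$; the parameter conditions \ref{item:smallgamma}--\ref{item:couplingbound} of \cref{lem:rowpruning} ensure that $(2r+2-t)(n+1)^{2r+2-t}\alpha \leq \ell^{-\Gamma r}$, completing the proof.
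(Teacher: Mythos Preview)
Your approach matches the paper's: bound each expected partial derivative by $\mu_Z \leq \mu\gamma^{|Z|}$ and invoke \cref{lem:partitepolyconc}; your mean computation and final tail estimate are fine. The description of the crux, however, misplaces the delicate sub-case and gets the role of property~(4) wrong.

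Positions $r-t+1,\ldots,r$ of $Z$ lie in $Z_1$ (the $u$-block constraining $\vec{C}$), and gaps there are harmless: the forward walk from head $i$ handles \emph{any} pattern of $\star$'s in $Z_1$, paying $3\delta n$ per $\star$ and $2$ per non-$\star$ via the matching in $H_{w_{h-1}}$; property~(4) is never invoked on the $\vec{C}$ side, and once $\vec{C}$ is fixed the label $(Q,p)$ is determined. Your backward-chaining claim (``$z_h$ forces $w_{h-1}$'') is also false --- knowing one element of $C_h$ does not determine which $H_u$ that hyperedge came from; the matching property only propagates forward. The genuine subtlety is on the $\vec{C'}$ side, governed by $Z_2 = (v_0,\ldots,v_{r-t})$. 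Writing $Z_2=(Z'_2,\star,Z''_2)$ with $Z''_2$ the maximal contiguous non-$\star$ suffix, one forward-walks from $j$ through $Z'_2$; the remaining suffix of the $r$-chain must then contain the contiguous tuple $Q' \coloneqq Z''_2 \,\|\, Q$, and it is \emph{this} $Q'$ (built by extending $Q$ into the free portion, not by filling slots already fixed by $Q$) to which property~(4) is applied, yielding at most $d^{t+|Z''_2|}$ completions; the single skipped link is then pinned by its two neighbouring pivots via the matching. There is also a separate case ($Z_2$ entirely non-$\star$) closed by a pure forward walk together with hypothesis~\ref{item:largedensity}. With $Q'$ constructed from the contiguous suffix of $Z_2$ rather than from the $Q$-indexed slots of $Z_1$, the counting goes through exactly as in the paper.
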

  \begin{proof}[Proof of \cref{lem:app-ss}]
 We will apply \cref{lem:partitepolyconc} to bound $\Pr_{s \gets \cD'}[\Deg_{i,j}(s) \geq \Delta]$. Note that $\Deg_{i,j}$ is homogeneous, multilinear, $(2r+2-t)$-partite polynomial. To apply \cref{lem:partitepolyconc}, we will now bound the expected partial derivatives $\mu_Z$ of $\Deg_{i,j}$ for each tuple $Z \in \{[n] \cup \{\star\})^{2r + 2 - t}$. 

Let 
\begin{equation}
\mu = 3 \cdot 2^{2r + 2 - 2t} (\ell/n)^{2r + 2 - t} \cdot  d^t (3 \delta n)^{2r + 1 - t} = 3 \cdot 2^{2r+2-2t} \frac{\ell}{n} (3 \delta \ell)^{2r+1-t} d^t \mper
\end{equation} 

\begin{claim}[Bounding Expected Partials] \label{claim:bounding-partials}
Let $\gamma= \frac{c}{\Gamma r^3 \log_2 n}$, $\ell \geq \frac{4n^{1/r}}{3\delta \gamma^4}$, $d = 3 \gamma \delta \ell$. Then, for any $h = 0, \dots, 2r + 2 - t$, we have $\mu_Z \coloneqq \mu_Z(\Deg_{i,j}) \leq \mu \cdot \gamma^{\abs{Z}}$.
\end{claim}
We postpone the proof of \cref{claim:bounding-partials}, and now finish the proof of \cref{lem:rowpruning} by using \cref{lem:partitepolyconc}. Applying of \cref{lem:partitepolyconc} with $\beta = 1/(2r+2)$ and $\gamma$, we see that
\begin{flalign*}
 &\Pr_{x \gets \cD'}[P_{i,j}(x) \geq 3 \mu] \leq r(n+1)^r \exp\left(-\frac{\frac{1}{2 (2r + 2)^2}}{2 \gamma + \frac{1}{3 (2r + 2)} \gamma } \right)\\
 & \leq r(n+1)^r \exp\left(-\frac{1}{24 \gamma (r + 1)^2 } \right) \leq \ell^{-\Gamma r} \enspace,
 \end{flalign*}
as $3 \geq (1 + \frac{1}{2r+2})^{2r + 2} \geq (1 + \frac{1}{2r+2})^{2r + 2 - t}$ and $\gamma \leq \frac{1}{c \Gamma r^3 \log_2 n}$ by \cref{item:smallgamma} of our parameter assumptions.
\end{proof}

It thus remains to prove \cref{claim:bounding-partials}. 
\begin{proof}[Proof of \cref{claim:bounding-partials}] 
For $U \in \cT_{i,j}$, we say $Z \subseteq U$ if $Z$ and $U$ agree on all non-$\star$ entries of $Z$. We let $\deg_{i,j}(Z)$ denote the number of tuples $U \in \cT_{i,j}$ where $Z \subseteq U$. Note that $\mu_Z = p^{2r + 2 - t - \abs{Z}} \deg_{i,j}(Z)$. Let's now estimate $\deg_{i,j}(Z)$ -- which equals the number of triples $(U, \vec{C}, \vec{C'})$ where $U \in \cT_{i,j}^{(\vec{C}, \vec{C'})}$ and $Z \subseteq U$.

Fix a $Z$ and let $Z_1$ denote the first $r+1$ entries, and $Z_2$ denote the last $r + 1 - t$ entries. First, we argue that there are at most $2^{\abs{Z_1}} (3 \delta n)^{r + 1 - \abs{Z_1}}$ choices for $\vec{C} \in \cup_{(Q,p) \in P_t}\cH^{(r+1)}_{i, Q,p}$ for which $Z_1$ is contained in $\vec{C}$. To see why, consider choosing $\vec{C}$ iteratively. Given the first $h-1$ choices, let's now consider the $h$-th choice. If $(Z_1)_h$ is a $\star$, then there are $\delta n$ choices for the hyperedge $C_h \cup \{w_h\} \in H_{w_{h-1}}$, as we already know $w_{h-1}$ (when $h = 0$, $w_{-1} \coloneqq i$ is fixed). Then, there are $3$ choices for $w_h$ within this hyperedge. If $(Z_1)_h=u_h \neq \star$, then, there is a unique hyperedge in $H_{w_{h-1}}$ containing $u_h$. This hyperedge has two other vertices that could be chosen as $w_h$. Hence, we have $2^{\abs{Z_1}} (3 \delta n)^{r + 1 - \abs{Z_1}}$ choices in total. Observe that once we have chosen $\vec{C}$, we also know the index $(Q,p)$ of the partition in the refinement that $\vec{C}$ comes from. 

Next, let's count the number of partial tuples $U =(u_0, \dots, u_r)$ that we can produce from this $\vec{C}$. 
For each non-$\star$ entry of $Z_1$, we know $u_h=(Z_1)_h$. For each $h$ where $(Z_1)_h = \star$, if $h \in \{0, \dots, r - t\}$, then we only know $u \in C_h$, which gives us $\abs{C_h} = 2$ choices for $u_h$. If $h \in \{r - t + 1, \dots, r\}$, then we know that $u_h$ must equal $C_h \setminus Q_{h - (r - t)}$ -- a unique choice. We thus pay an additional $2^{r + 1 - t - \abs{Z'_1}}$, where $Z'_1$ is the partial tuple $((Z_1)_{0}, \dots, (Z_1)_{r - t})$, to determine $(u_0, \dots, u_r)$. 

We now have two cases.
\begin{enumerate}[(1)]
\item Case 1: $Z_2$ has no $\star$ entries, i.e., $\abs{Z_2} = r + 1 - t$. This implies that $(v_0, \dots, v_{r - t}) = ((Z_2)_0, \dots, (Z_2)_{r - t})$, and so we have uniquely determined $U$. By an argument similar to above, we also have at most $2^{r+1}$ choices for $\vec{C'} \in \cH^{(r+1)}_{j,Q,p}$ (recall that we already know $Q$, which determines $v_{r + 1 - t}, \dots, v_{r}$ up to $2^{t}$ choices). Hence, we have argued in this case that $\deg_{i,j}(Z) \leq 2^{2r + 2}  (3 \delta n)^{r + 1 - \abs{Z_1}}$, where we use that $\abs{Z_1} \leq t + \abs{Z'_1}$. 

It then follows that 
\begin{flalign*}
&\mu_Z = p^{2r + 2 - \abs{Z}} \deg_{i,j}(Z) \leq p^{2r + 2 - t - (r + 1 - t) - \abs{Z_1}} 2^{2r + 2}  (3 \delta n)^{r + 1 - \abs{Z_1}} \\
&\leq (1 + \beta)^{2r + 2} \cdot (\ell/n)^{r + 1 - \abs{Z_1}} 2^{2r+2} (3 \delta n)^{r + 1 - \abs{Z_1}} \\
&\leq \left(1 + \frac{1}{4r}\right)^{4r} 2^{2r + 2} (3 \delta \ell)^{r + 1 - \abs{Z_1}} \\
&\leq 3 \cdot 2^{2r + 2} \cdot (3 \delta \ell)^{r + 1 - \abs{Z_1}} \enspace.
\end{flalign*}
Now, we observe that since $\mu = 3 \cdot 2^{2r+2-2t} \frac{\ell}{n} (3 \delta \ell)^{2r+1-t} d^t$ and $d = 3 \delta \ell \gamma$, we have that
\begin{flalign*}
&\frac{\mu_Z}{\gamma^{\abs{Z}} \mu} = \frac{ 3 \cdot 2^{2r + 2} \cdot (3 \delta \ell)^{r + 1 - \abs{Z_1}}}{ \gamma^{\abs{Z_1} + r + 1 - t} \cdot 3 \cdot 2^{2r+2-2t} \frac{\ell}{n} (3 \delta \ell)^{2r+1-t} d^t} = \frac{ 2^{2t} n}{ \gamma^{\abs{Z_1} + r + 1} \cdot \ell (3 \delta \ell)^{r + \abs{Z_1}}}  \\
&\leq \frac{ 2^{2(r+1)} n}{ \gamma^{\abs{Z_1} + r + 1} \cdot (3 \delta \ell)^{r + 1 + \abs{Z_1}}} \leq 1 \enspace,
\end{flalign*}
provided that $3 \delta \ell \geq 1$ and $(3 \delta \ell \gamma/4)^{r+1} \geq n$, which hold by \cref{item:largedensity} of the parameter assumptions.

\item Case 2: $Z_2$ has at least one $\star$ entry. In this case, let us write $Z_2 = (Z'_2, \star, Z''_2)$, where $Z''_2$ does not contain any $\star$. Note that $Z''_2$ may be empty, i.e., have length $0$. 

We observe that there are at most $2^{\abs{Z'_2}} (3 \delta n)^{r + 1 - t - (\abs{Z''_2} + 1) - \abs{Z'_2}}$ choices for the partial chain\\ $(j,C'_0, w_0, \dots, C'_{h'}, w'_{h'})$ where $h' = r - t - (\abs{Z''_2} + 1)$ (i.e., the number of entries in $Z'_2$). As in Case 1, we argue inductively and consider the step when we have chosen $j,C'_0,w_0,\ldots, C'_h, w'_h$ for some $0\leq h < h'$. If $(Z'_2)_h= \star$, then, there are $\delta n$ choices for choosing the next hyperedge and $3$ choices for deciding the $w'_{h+1}$ within it giving a total of $3\delta n$ choices. If $(Z'_2)_h \neq \star$, then there is at most one hyperedge (so no choice to be made) in $H_{w'_h}$ that could appear as the next link and, given the hyperedge, there are $2$ choices for the $w'_{h+1}$. 

Given the first $h'$ links in the partial chain, we have at most $2^{r + 1 - t - (\abs{Z''_2} + 1) - \abs{Z'_2}}$ choices for the partial tuple $(v_0,\ldots, v_{h'})$. So in total, we have $2^{r + 1 - t - (\abs{Z''_2} + 1)}(3 \delta n)^{r + 1 - t - (\abs{Z''_2} + 1) - \abs{Z'_2}}$ choices for the partial chain $(j,C'_0, w_0, \dots, C'_{h'}, w'_{h'})$ and the partial tuple $(v_0,\ldots, v_{h'})$.

To count the number of ways to complete the chain, we break our analysis into two subcases. 

\begin{enumerate}[(a)]
\item Subcase 1: $Z''_2$ is empty, and so $\abs{Z_2} = \abs{Z'_2}$. In this case, $h' = r - t - 1$. Since we have already chosen $(Q,p)$, the number of different choices for the partial chain $(w'_{r-t},C'_{r - t +1}, w'_{r - t + 1}, \dots, C'_{r}, w'_{r})$ must be at most $d^t$, by Item (4) in \cref{def:decomp} with $Q' = Q$. Given this choice, $w'_{r-t}$ is fixed so there is at most one choice for a hyperedge in $H_{w'_{r-t-1}}$ that contains $w'_{r-t}$ and given that choice, there are two possible ways to choose $v_{r-t}$. In total, we have made at most $2 d^t$ choices.

In the case that $t = 0$, the partial chain is the ``$0$-chain'' given by $Q_{r+1} = w'_r$, and as we have $w'_r = Q_{r+1} = w_r$, this gives a unique choice for the ``chain'', i.e., $d^{0} = 1$ choices.
\item Subcase 2: $Z''_2$ is nonempty. We observe that for $h'= r-t-(\abs{Z''_2}+1)$, the partial chain $(w'_{h+1},C'_{h' + 2}, w'_{h' + 2}, \dots, C'_{r}, w'_{r})$ must contain the \emph{complete} tuple $Z''_2 \| Q^{(t)}_z$, where $\cdot \| \cdot$ denotes concatenation. Thus, by $d$-regularity, there are at most $d^{t + \abs{Z''_2}}$ choices of such tuples. Given the choice of this partial chain, there are $2^{r - t - h'}$ choices for $(v_{h'+1}, \dots, v_{r - t})$. Hence, in total we have made $2^{\abs{Z''_2} + 1} d^{t + \abs{Z''_2}}$ choices.
\end{enumerate}
We note that in either subcase, we make at most $2^{\abs{Z''_2} + 1} d^{t + \abs{Z''_2}}$ choices to pick $\vec{C'} \in \cH^{(r+1)}_{j,Q,p}$ and $(v_{h'+1}, \dots, v_{r - t})$, where we  can have $\abs{Z''_2} = 0$. Thus, the total number of choices of $\vec{C'}$ and $(v_0,\ldots,v_{r-1})$ is at most
\begin{equation*}
2^{r + 1 - t - (\abs{Z''_2} + 1)}(3 \delta n)^{r + 1 - t - (\abs{Z''_2} + 1) - \abs{Z'_2}} 2^{\abs{Z''_2} + 1} d^{t + \abs{Z''_2}}  =  2^{r + 1 - t }   (3 \delta n)^{r  - t - \abs{Z_2}} \cdot  d^{t + \abs{Z''_2}} \mcom
\end{equation*}
and thus, the total number of triples $(U,\vec{C},\vec{C'})$ that contribute to $\deg_{i,j}(Z)$ is at most\\ $2^{r + 1 - t - \abs{Z'_1} + \abs{Z_1} } (3 \delta n)^{r + 1 - \abs{Z_1}}  2^{r + 1 - t }   (3 \delta n)^{r  - t - \abs{Z_2}} \cdot  d^{t + \abs{Z''_2}}$.

Thus,
\begin{flalign*}
&\mu_Z \leq p^{2r + 2 - t - \abs{Z_1} - \abs{Z_2}} 2^{r + 1 - t - \abs{Z'_1} + \abs{Z_1} } (3 \delta n)^{r + 1 - \abs{Z_1}}  2^{r + 1 - t }   (3 \delta n)^{r  - t - \abs{Z_2}} \cdot  d^{t + \abs{Z''_2}} \\
&\leq  (1 + \beta)^{2r + 2} 2^{2r + 2 - 2t + \abs{Z_1}} (\ell/n)^{2r + 2 - t - \abs{Z_1} - \abs{Z_2}}  (3 \delta n)^{2r + 1 - t - \abs{Z_1} - \abs{Z_2}} \cdot  d^{t + \abs{Z''_2}} \\
&\leq  (1 + \beta)^{2r + 2} 2^{2r + 2 - 2t + \abs{Z_1}} (\ell/n)  (3 \delta \ell)^{2r + 1 - t - \abs{Z_1} - \abs{Z_2}} \cdot  d^{t + \abs{Z''_2}} \\
&\leq  \mu \frac{1}{3} (1 + \beta)^{2r + 2} 2^{\abs{Z_1}} (3 \delta \ell)^{- \abs{Z_1} - \abs{Z_2}} \cdot  d^{ \abs{Z''_2}} \\
&\leq \mu  \left(\frac{2}{3 \delta \ell}\right)^{\abs{Z_1}} (3 \delta \ell)^{-\abs{Z'_2}} \left(\frac{d}{3 \delta \ell}\right)^{\abs{Z''_2}} \\
&\leq \mu  \left(\frac{2}{3 \delta \ell}\right)^{\abs{Z_1}} (3 \delta \ell)^{-\abs{Z'_2}} \gamma^{\abs{Z''_2}} \enspace,
\end{flalign*}
using that $(1+\beta)^{2r+2}\leq 3$ and $d =3 \gamma \delta \ell$. We thus have
\begin{equation*}
\mu_Z \leq \mu \max\{\frac{2}{3 \delta \ell}, \gamma\}^{|Z|} \leq \gamma^{|Z|} \mu\mcom 
\end{equation*}
where we use \cref{item:largeell} in the parameter assumptions.

\end{enumerate}
This finishes the proof of \cref{claim:bounding-partials}.
\end{proof}

\section{Discussion}
\label{sec:discussion}
We conclude with some remarks on the proof of \cref{mthm:main}, possible strengthenings, and extensions. 

\begin{enumerate}[(1)]
\item \textbf{Non-linear codes.} The lower bound in \cref{mthm:main} applies only to linear codes. However, we note that we only use linearity of the code to argue a lower bound on $\val(\Phi_b)$, the XOR instance polynomial for $(r+1)$-chains. For the natural XOR instances (i.e., when $r=0$), a lower bound on $\val(\Phi_b)$ easily follows even for non-linear codes. This is the reason why the $3$-LDC lower bounds in~\cite{AlrabiahGKM23} apply to non-linear codes. The issue (that nevertheless appears surmountable) that prevents us from obtaining a similar lower bound on $\val(\Phi_b)$ for XOR instances with chains of length $>1$ is the following: for non-linear codes, we are only guaranteed that each constraint is satisfied for a non-trivial constant fraction of codewords. That is, $\E_{x \gets \Code}[x_C x_u] \geq \eps$ for some constant $\eps>0$ (for linear codes, we instead obtain $x_C x_u = 1$ for all $x \in \Code$). In particular, it is not clear that $\E_{b}[\Phi_b(\Code(b))]$ is non-trivially lower-bounded. 

\item \textbf{LCCs with more queries.} While our approach can likely improve the lower bounds (beyond those known for LDCs) even for $q>3$, the improvements based on natural generalizations of our approach are likely to only yield a polynomial factor improvement. Our explanation is rooted in the heuristic calculation based on the density of the Kikuchi matrices explained earlier in \cref{sec:chainheuristic}. For larger $q$, the number of length $(r+1)$-chains with head $i \in [k]$ is still $k (3 \delta n)^{r+1}$. The arity of the derived constraints, however, is now $(q-1)(r+1) + 1$. This means that the density (i.e., average degree of the natural Kikuchi matrix) at level $\ell$ is $(3 \delta n)^{r+1} (\ell/n)^{\frac{(q-1)(r+1) + 1}{2}} \to \left(n (\ell/n)^{\frac{q-1}{2}}\right)^{r+1}$ for large $r$. Thus, the optimal $\ell$ turns out to be $n^{1 - \frac{2}{q-1}}$, and so we can only hope to achieve a lower bound of $k \leq \tilde{O}(n^{1 - \frac{2}{q-1}})$. This nevertheless would yield an improvement on the current best-known lower bound of $k \leq \tilde{O}(n^{1-\frac{2}{q}})$, inherited from $q$-LDCs, by a polynomial factor via long chains.

\item \textbf{Optimality of Reed--Muller codes?} Our main result \cref{mthm:main} comes close to showing that Reed--Muller codes, which achieve a blocklength of $n = 2^{O(\sqrt{k})}$, are optimal linear $3$-LCCs --- a longstanding goal in understanding LCCs. Closing the gap between our result and the blocklength of Reed--Muller codes relates to optimizing the $\polylog(n)$ factors in our analysis. Let us now explain each of the $\log n$ factors that we ``lose'' with an eye for the losses that appear naturally surmountable and ones that appear rather inherent. 

First, we note that we must take the chain length $r$ to be $\geq O(\log n)$ and $\ell \geq 1/\delta$ for the heuristic calculation in \cref{sec:chainheuristic} to work. Second, we note that the application of matrix Khintchine (\cref{fact:matrixkhintchine}) loses a $\sqrt{\log N} = \sqrt{\ell r \log n}$ factor. Thus, in the ideal case,  our method could potentially yield that $k \leq O(\log N)$ where $\ell \sim 1/\delta$ and $r = O(\log n)$. This would yield a bound of $k \leq O(\log^2 n)$, or in other words $n \geq 2^{\Omega(\sqrt{k})}$, matching the blocklength of Reed--Muller codes up to constant factors in the exponent.

However, our argument currently loses additional $\log n$ factors that appear improvable. First, the hypergraph decomposition step loses a factor of $r$ in the ``density'' because we need to refute at least one of the $\sim r$ subinstances produced each of which may only have $1/(r+1)$-fraction of all the $(r+1)$-chains. This loses us $O(r^2)$ factor in the density once we use the Cauchy--Schwarz trick. Second, we cannot take $\ell$ to be as small as $1/\delta$, i.e., a constant. Currently, we need to take $\ell\geq O(\log^4 n)$ for the tail bounds used in the proof of \cref{lem:rowpruning} to be effective. 

These additional $\log n$ factors that we lose not appear to be inherent to our approach. To save these losses would require a sharper chain decomposition method (that does not lose a factor $r$ in the density) and a sharper concentration bound than \cref{lem:partitepolyconc}. While these appear technically challenging, it does look plausible that one remove these additional $\log n$ factors and obtain a lower bound that matches the blocklength of Reed--Muller codes up to absolute constant factors in the exponent. 
\end{enumerate}

\section*{Acknowledgements}
We thank Venkatesan Guruswami for detailed feedback on an earlier version of this manuscript. We thank Zeev Dvir and Hans Yu for suggesting related works and helpful discussions. 


\bibliographystyle{alpha}
{\small
\bibliography{references}
}

\appendix


\section{Linear $3$-LCC Lower Bounds over Larger Fields}
 \label{sec:largeralphabet}
In this section, we prove \cref{mthm:main} in the case where the finite field $\F$ is not $\F_2$. The proof will be nearly identical to the proof in \cref{sec:lcctoxor,sec:regular-partition,sec:kikuchimethod,sec:rowpruning} for the case of $\F = \F_2$, and so we shall only give a proof sketch and mainly focus on the parts of the proof where modifications are required.

To begin, we recall that by \cref{fact:normalform}, there exist $3$-uniform hypergraph matchings $H_1, \dots, H_n$, each of size at least $\delta n$, such that for each $u \in [n]$ and $C  = \{v_1, v_2, v_3\} \in H_u$, there exists $\alpha_1, \alpha_2, \alpha_3 \in \F \setminus \{0\}$ such that for every $x \in \Code$, it holds that $\alpha_{1} x_{v_1} + \alpha_{2} x_{v_2} + \alpha_{3} x_{v_3} = x_u$. Furthermore, without loss of generality we can assume that the code is systematic, i.e., for any $b \in \F^k$, $x = \Code(b)$ satisfies $x_i = b_i$ for all $i \in [k]$.

Next, let us define a code $\Code' \colon \Fits^k \to \Fits^{n(\abs{\F} - 1)}$ where, for each $u \in [n]$ and $\alpha \in \F \setminus \{0\}$, we set $\Code'(b)_{(u, \alpha)} = \alpha \Code(b)_u$. Let $n' = n(\abs{\F} - 1)$, and associate $[n']$ with the set $[n] \times (\F \setminus \{0\})$. We now observe that $\Code'$ is a $3$-LCC in normal form with the additional property that the coefficients of all constraints can be taken to be $1$ without loss of generality. Formally, there exist $3$-uniform hypergraph matchings $H_1, \dots, H_{n'}$ such that (1) each $H_u$ has $\abs{H_u} \geq \delta n'/(\abs{\F} - 1)$, and (2) for each $u \in [n']$ and each $C = \{v_1, v_2, v_3\} \in H_{(u,\alpha)}$, every $x \in \Code$ satisfies $x_{u} = x_{v_1} + x_{v_2} + x_{v_3}$.

Moreover, there is now a group action of $(\F \setminus \{0\}, \times)$ on the elements of $[n']$, namely for any $\alpha \in \F \setminus \{0\}$, this action maps $u \mapsto \alpha u$. We note that this action respects the constraints. Namely, for $C = \{v_1, v_2, v_3\} \in {[n'] \choose 3}$, if we define $\alpha C = \{\alpha v_1, \alpha v_2, \alpha v_3\}$, then we have that $H_{\alpha u} = \alpha H_u = \{\alpha C : C \in H_u\}$. For the proof, we will be using the fact that there is a negation action for $\alpha = -1$; this is because this transformation has made all coefficients in the constraints be equal to $1$, so to cancel a variable $x_u$ we shall only need $x_{-u}$.

We shall now abuse notation and redefine $n'$ to be $n$, and we now simply assume that we have this group action on $[n]$. We have thus added this additional property to the code, and in doing so we have only decreased $\delta$ by a factor of $\abs{\F} - 1$.

We now turn to the main part of the proof. Following \cref{sec:lcctoxor}, we define $t$-chains. The definition of $t$-chains now requires a small modification because in the original definition we formed longer chains by canceling a variable $x_w$ via the operation $x_w + x_w = 0$, which was specific to the field $\F_2$. Now, we use the negation action on $[n]$ to cancel a variable.

\begin{definition}[$t$-chain hypergraph $\cH^{(t)}$]
Let $t \geq 1$ be an integer. For any $u \in [n]$, let $\cH_u^{(t)}$ denote the set of tuples of the form $(u,C_1, w_1, C_2, w_2, \dots, C_t, w_t)$, where each $C_h \in {[n] \choose 2}$, $w_h \in [n]$, and it holds that for all $1 \leq h \leq t$, $C_h \cup \{w_h\} \in H_{-w_{h - 1}}$ where we set $w_0 \coloneqq u$.

Given any $t$-chain $(u,C_1, w_1, C_2, w_2, \dots, C_t, w_t)$, we let the negation of the chain, denoted by $-(u,C_1, w_1, C_2, w_2, \dots, C_t, w_t)$, be the chain $(-u,-C_1, -w_1, -C_2, -w_2, \dots, -C_t, -w_t)$.
\end{definition}
As before, we note that the linear equation defined by a $t$-chain or its negation is satisfied by any $x \in \Code$.

In \cref{sec:lcctoxor}, we defined an instance polynomial $\Phi_b$ related to the system of linear constraints. This was natural over $\F_2$ as there is a group isomorphism between $(\F_2, +)$ and $\Fits \in (\R, \times)$. Here, we can make a similar definition by using a nontrivial group homomorphism $\pi$ from $(\F,+)$ to $(\C, \times)$ where the image of $\pi$ is contained in the unit circle $\{z \in \C : \abs{z} = 1\}$. However, the instance polynomial $\Phi_b$ (and the ``decomposed polynomials'' $\Psi_{i,Q,p}$ defined later) were only formally needed to discuss sets of linear constraints that are satisfied by the subspace $\Code$. Thus, to avoid using the group homomorphism $\pi$, here we shall simply use these polynomials to refer to the underlying sets of constraints.

We now perform the hypergraph decomposition step as in \cref{sec:regular-partition}, which is unchanged (once we use the updated definition of chain).\footnote{We note that the naive application of the decomposition step will produce partitions $\cH^{(r)}_{Q,p}$ where $\cH^{(r)}_{-Q, p}$ is not necessarily equal to $-\cH^{(r)}_{Q,p}$. This turns out to not matter in the proof; as it turns out, we merely need that both decompositions $\cup_{Q, p} -\cH^{(r)}_{Q,p}$ and $\cup_{Q, p} \cH^{(r)}_{Q,p}$ are both contiguously regular partitions of $\cH^{(r)}$, which obviously holds. Nonetheless, we note that one could also easily modify the decomposition step to respect this negation action.} This produces the subinstances $\Psi^{(t)}(x,y)$, as before.

We now finish the proof following~\cref{sec:step6} in \cref{sec:kikuchimethod}. We let $t$ denote the value $0 \leq t \leq r$ such that $\Psi^{(t)}$ contains at least $k (3 \delta n)^{r+1}/(r+1)$ constraints. Applying the Cauchy--Schwarz trick, we then have that there exists a maximum directed matching $M$ on $[k]$ such that 
\begin{equation*}
\frac{1}{2k \abs{P_t}} \left(\frac{k (3 \delta n)^{r+1}}{r+1}\right)^2 -  \frac{\abs{P_t}}{2} \Paren{ (3 \delta n)^{r - t} d^{t}}^2
\end{equation*}
is a lower bound on number of constraints in the system of linear equations given by:
\begin{flalign*}
b_i - b_{j} = \sum_{h = 0}^{r - t} x_{C_h} + \sum_{h = 1}^{t} x_{C_{r - t + h} \setminus Q_h} + \sum_{h = 0}^{r - t} x_{-C'_h} + \sum_{h = 1}^{t} x_{-C'_{r - t + h} \setminus -Q_h}
\end{flalign*}
for every $(Q,p) \in P_t$, $(i, C_0, w_0, C_1, w_1, C_2, w_2, \dots, C_r, w_r) \in \cH^{(r)}_{i, Q,p}$, $(j, C'_0, w'_0, C'_1, w'_1, C'_2, w'_2, \dots, C'_r, w'_r) \in \cH^{(r)}_{j, Q,p}$. Here, we let $x_{C_h} \coloneqq \sum_{v \in C_h} x_v$.

As before, the definition of the Kikuchi matrices \cref{def:kikuchi} is nearly identical: we merely swap $b_j$ with $-b_{j}$. Because of this, the key technical part of the argument, namely the row pruning step \cref{lem:rowpruning}, holds without any changes.

Now, we define the code $\Code' \colon \F^k \to \F^{2N}$ identically as before. We let $L = \{ i : (i,j) \in M\}$ denote the ``left halves'' of the edges in the matching $M$, and we define $\Code' \colon \F^L \to \F^{2N}$ to be the same map as before; we simply replace sums with products, as we have not used the homomorphism $\pi$ to embed $\F$ into $\C$. Namely, for $x' = \Code'(b)$, the $\vec{S}$-th entry of $x'$ is $x'_{\vec{S}} = \sum_{h = 0}^{r-t} (x_{S_h}+ x_{S'_h}) + \sum_{h = 1}^{t} x_{R_h}$, and similarly for the $\vec{T}$-th entry, where $x = \Code(b)$.

Now, the same calculation as before shows that $\Code'$ is a $(2, \delta)$-LDC for $\delta' = \Omega(\delta/r^2)$ provided that $\delta^2 k \leq O(\log^2 n)$. Namely, there are matchings $G''_{i,j}$ on $[2N]$ such that (1) for any $(\vec{S}, \vec{T}) \in E(G''_{i,j})$, it holds that $x'_{\vec{S}} - x'_{\vec{T}} = b_i - b_j = b_i$ (as $b_j = 0$ for $j \notin L$), and (2) $\frac{1}{k'} \sum_{(i,j) \in M} \abs{E(G''_{i,j})} \geq \delta' \cdot 2N$, where $k' = \abs{L} \geq \frac{k-1}{2}$.

As before, we now apply \cref{fact:2ldclb}. It follows that
\begin{flalign*}
&O(\ell r \log n) \geq 2 \log_2 N \geq \delta' k \geq \Omega(\delta k/r^2) \\
&\implies k \leq  O(\ell r^3 \log n/\delta)  \leq O(\log^8 n/\delta^2) \enspace.
\end{flalign*}
Recall now that we had redefined $n$ to be $n (\abs{\F} - 1)$ and $\delta$ to be $\delta = \delta/(\abs{\F} - 1)$. Thus, we have that for the original code, $\frac{k\delta^2}{(\abs{\F} - 1)^2} \leq O(\log^8 n)$ provided that $\abs{\F} \leq n$. Note that if $\abs{\F} \geq k$, then \cref{mthm:main} becomes trivial, and so we can assume that $\abs{\F} \leq k \leq n$ (as we always have $k \leq n$). Finally, we note that we have assumed (when we substitute back the original values of $\delta$ and $n$) that $\frac{\delta^2 k}{(\abs{\F} - 1)^2} \leq O(\log^2 (n \abs{\F}))$, which implies that $\frac{\delta^2 k}{(\abs{\F} - 1)^2} \leq O(\log^2 n)$, as we may again assume that $\abs{\F} \leq n$. This is a stronger lower bound than \cref{mthm:main}, and so this finishes the proof of \cref{mthm:main} for larger fields.

\end{document}